\providecommand{\algorithmname}{Algorithm}
\newcommand{\lyxaddress}[1]{
\par {\raggedright #1
\vspace{1.4em}
\noindent\par}
}
 \theoremstyle{definition}
  \newtheorem{example}{\protect\examplename}
  \theoremstyle{plain}
  \newtheorem{prop}{\protect\propositionname}
\theoremstyle{plain}
\newtheorem{thm}{\protect\theoremname}
  \theoremstyle{plain}
  \newtheorem{lem}{\protect\lemmaname}
\newcounter{xxx}
\newcommand\dotprod[2]{\left\langle #1, #2\right\rangle}  
\newcommand\R{{\mathbb R}}        
\renewcommand\P{{\mathbb P}}        
\newcommand\E{{\mathbb E}}        
\def\1{{\mathbf 1}}        
\newcommand{\ud}{\,\mathrm{d}}    
\DeclareMathOperator{\vol}{vol}
  \providecommand{\examplename}{Example}
  \providecommand{\lemmaname}{Lemma}
  \providecommand{\propositionname}{Proposition}
\providecommand{\theoremname}{Theorem}
  \providecommand{\examplename}{Example}
  \providecommand{\lemmaname}{Lemma}
  \providecommand{\propositionname}{Proposition}
\providecommand{\theoremname}{Theorem}
\begin{document}

\title{The Bouncy Particle Sampler: A Non-Reversible Rejection-Free Markov
Chain Monte Carlo Method}

\author{Alexandre Bouchard-Côté$^{*}$, Sebastian J. Vollmer$^{\dagger}$
and Arnaud Doucet$\ddagger$ }
\maketitle

\lyxaddress{$^{*}$Department of Statistics, University of British Columbia,
Canada.}

\lyxaddress{$^{\dagger}$Mathematics Institute and Department of Statistics,
University of Warwick, UK. }

\lyxaddress{$^{\ddagger}$Department of Statistics, University of Oxford, UK. }
\begin{abstract}
Many Markov chain Monte Carlo techniques currently available rely
on discrete-time reversible Markov processes whose transition kernels
are variations of the Metropolis\textendash Hastings algorithm. We
explore and generalize an alternative scheme recently introduced in
the physics literature \cite{PetersDeWith2012} where the target
distribution is explored using a continuous-time non-reversible piecewise-deterministic
Markov process. In the Metropolis\textendash Hastings algorithm, a
trial move to a region of lower target density, equivalently of higher
``energy'', than the current state can be rejected with positive
probability. In this alternative approach, a particle moves along
straight lines around the space and, when facing a high energy barrier,
it is not rejected but its path is modified by bouncing against this
barrier. By reformulating this algorithm using inhomogeneous Poisson
processes, we exploit standard sampling techniques to simulate exactly
this Markov process in a wide range of scenarios of interest. Additionally,
when the target distribution is given by a product of factors dependent
only on subsets of the state variables, such as the posterior distribution
associated with a probabilistic graphical model, this method can be
modified to take advantage of this structure by allowing computationally
cheaper ``local'' bounces which only involve the state variables
associated to a factor, while the other state variables keep on evolving.
In this context, by leveraging techniques from chemical kinetics,
we propose several computationally efficient implementations. Experimentally,
this new class of Markov chain Monte Carlo schemes compares favorably
to state-of-the-art methods on various Bayesian inference tasks, including
for high dimensional models and large data sets. 
\end{abstract}
{\small{}{}Keywords: Inhomogeneous Poisson process; Markov chain
Monte Carlo; Piecewise deterministic Markov process; Probabilistic
graphical models; Rejection-free simulation.}{\small \par}

\section{Introduction\label{sec:Intro}}

Markov chain Monte Carlo (MCMC) methods are standard tools to sample
from complex high-dimensional probability measures. Many MCMC schemes
available at present are based on the Metropolis-Hastings (MH) algorithm
and their efficiency is strongly dependent on the ability of the user
to design proposal distributions capturing the main features of the
target distribution; see \cite{liu2008monte} for a comprehensive
review. We examine, analyze and generalize here a different approach
to sample from distributions on $\mathbb{R}^{d}$ that has been recently
proposed in the physics literature \cite{PetersDeWith2012}. Let
the energy be defined as minus the logarithm of an unnormalized version
of the target density. In this methodology, a particle explores the
space by moving along straight lines and, when it faces a high energy
barrier, it bounces against the contour lines of this energy. This
non-reversible rejection-free MCMC method will be henceforth referred
to as the Bouncy Particle Sampler (BPS). This algorithm and closely
related schemes have already been adopted to simulate complex physical
systems such as hard spheres, polymers and spin models \cite{kampmann2015monte,michel2014generalized,michel2015event,nishikawa2015event}.
For these models, it has been demonstrated experimentally that such
methods can outperform state-of-the-art MCMC methods by up to several
orders of magnitude.

However, the implementation of the BPS proposed in \cite{PetersDeWith2012}
is not applicable to most target distributions arising in statistics.
In this article we make the following contributions: 
\begin{description}
\item [{Simulation~schemes~based~on~inhomogeneous~Poisson~processes:}] by
reformulating explicitly the bounces times of the BPS as the first
arrival times of inhomogeneous Poisson Processes (PP), we leverage
standard sampling techniques \citep[Chapter 6]{Devroye1986} and methods
from chemical kinetics \cite{thanh2015simulationrejection} to obtain
new computationally efficient ways to simulate the BPS process for
a large class of target distributions. 
\item [{Factor~graphs:}] when the target distribution can be expressed
as a factor graph \cite{WainwrightJordan2008}, a representation
generalizing graphical models where the target is given by a product
of factors and each factor can be a function of only a subset of variables,
we adapt a physical multi-particle system method discussed in \citep[Section III]{PetersDeWith2012}
to achieve additional computational efficiency. This local version
of the BPS only manipulates a restricted subset of the state components
at each bounce but results in a change of all state components, not
just the one being updated contrary the Gibbs sampler. 
\item [{Ergodicity~analysis:}] we present a proof of the ergodicity of
BPS when the velocity of the particle is additionally refreshed at
the arrival times of an homogeneous PP. When this refreshment step
is not carried out, we exhibit a counter-example where ergodicity
does not hold. 
\item [{Efficient~refreshment:}] %
we propose alternative refreshment schemes and compare their computational
efficiency experimentally. 
\end{description}
Empirically, these new MCMC schemes compare favorably to state-of-the-art
MCMC methods on various Bayesian inference problems, including for
high-dimensional scenarios and large data sets. Several additional
original extensions of the BPS including versions of the algorithm
which are applicable to mixed continuous-discrete distributions, distributions
restricted to a compact support and a method relying on the use of
curved dynamics instead of straight lines can be found in \cite{bpsv1}.
For brevity, these are not discussed here.

The rest of this article is organized as follows. In Section \ref{sec:Basic-bouncy-particle-sampler},
we introduce the basic version of the BPS, propose original ways to
implement it and prove its ergodicity under weak assumptions. Section
\ref{sec:Local-bouncy-particle-sampler} presents a modification of
the basic BPS which exploits a factor graph representation of the
target distribution and develops computationally efficient implementations
of this scheme. In Section \ref{sec:Applications}, we demonstrate
this methodology on various Bayesian models. The proofs are given
in the Appendix and the Supplementary Material.

\section{The bouncy particle sampler\label{sec:Basic-bouncy-particle-sampler}}

\subsection{Problem statement and notation }

Consider a probability distribution $\pi$ on $\mathbb{R}^{d},$ equipped
with the Borel $\sigma$-algebra $\mathcal{B}(\mathbb{R}^{d})$. We
assume that $\pi$ admits a probability density with respect to the
Lebesgue measure ${\rm d}x$ and slightly abuse notation by denoting
also this density by $\pi$. In most practical scenarios, we only
have access to an unnormalized version of this density, that is 
\[
\pi\left(x\right)=\frac{\gamma\left(x\right)}{\mathcal{Z}},
\]
where $\gamma:\mathbb{R}^{d}\rightarrow(0,\infty)$ can be evaluated
pointwise but the normalizing constant $\mathcal{Z}=\int_{\mathbb{R}^{d}}\gamma\left(x\right){\rm d}x$
is unknown. We call 
\[
U\left(x\right)=-\mathrm{log}~\gamma\left(x\right)
\]
the associated energy, which is assumed continuously differentiable,
and we denote by $\nabla U\left(x\right)=\left(\frac{\partial U\left(x\right)}{\partial x_{1}},\ldots,\frac{\partial U\left(x\right)}{\partial x_{d}}\right)^{\top}$
the gradient of $U$ evaluated at $x$. We are interested in approximating
numerically the expectation of arbitrary test functions $\varphi:\mathbb{R}^{d}\rightarrow\mathbb{R}$
with respect to $\pi$.

\subsection{Algorithm description\label{subsec:Algorithm-description}}
\begin{center}
\begin{figure}
\begin{centering}
\includegraphics[scale=0.5]{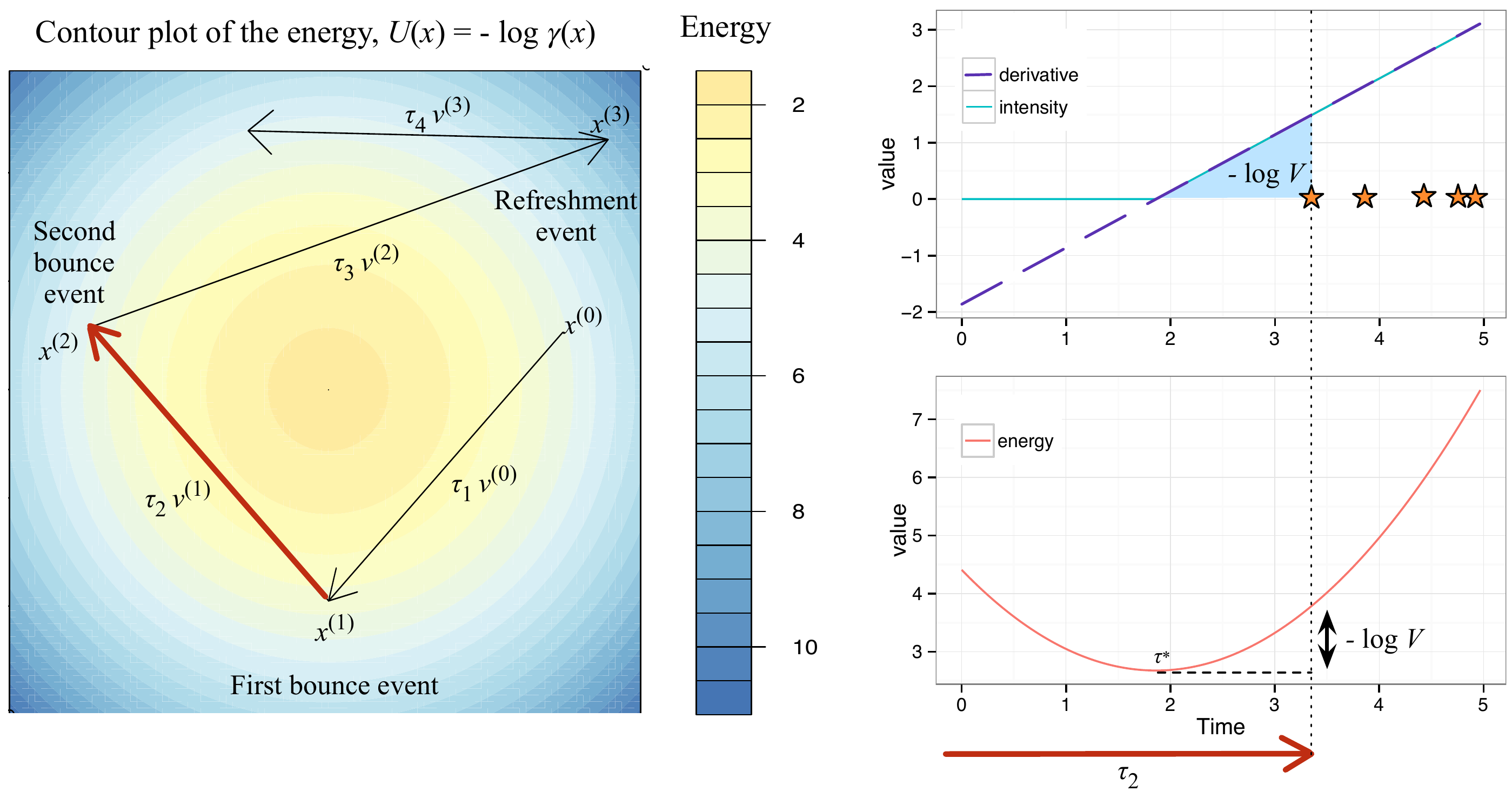} 
\par\end{centering}
\protect\caption{\label{fig:normal-eg}Illustration of BPS on a standard bivariate
Gaussian distribution. Left and top right: see Section \ref{subsec:Algorithm-description};
bottom right: see Example \ref{ex:log-concave}.}
\end{figure}
\par\end{center}

The BPS methodology introduced in \cite{PetersDeWith2012} simulates
a continuous piecewise linear trajectory $\left\{ x\left(t\right)\right\} _{t\geq0}$
in $\mathbb{R}^{d}$. It has been informally derived as a continuous-time
limit of the Metropolis algorithm in \cite{PetersDeWith2012}. Each
segment in the trajectory is specified by an initial position $x^{\left(i\right)}\in\mathbb{R}^{d}$,
a length $\tau_{i+1}\in\mathbb{R}^{+}$ and a velocity $v^{\left(i\right)}\in\mathbb{R}^{d}$
(example shown in Figure \ref{fig:normal-eg}, left). We denote the
times where the velocity changes by $t_{i}=\sum_{j=1}^{i}\tau_{j}$
for $i\geq1$, and set $t_{0}=0$ for convenience. The position at
time $t\in[t_{i},t_{i+1})$ is thus interpolated linearly, $x\left(t\right)=x^{\left(i\right)}+v^{\left(i\right)}\left(t-t_{i}\right)$,
and each segment is connected to the next, $x^{\left(i+1\right)}=x^{\left(i\right)}+v^{\left(i\right)}\tau_{i+1}$.
The length of these segments is governed by an inhomogeneous PP of
intensity function $\lambda:\mathbb{R}^{d}\times\mathbb{R}^{d}\rightarrow[0,\infty)$
\begin{equation}
\lambda\left(x,v\right)=\mathrm{max}\left\{ 0,\left\langle \nabla U\left(x\right),v\right\rangle \right\} .\label{eq:intensityfunction}
\end{equation}
When the particle bounces, its velocity is updated in the same way
as a Newtonian elastic collision on the hyperplane tangential to the
gradient of the energy. Formally, the velocity after bouncing is given
by 
\begin{equation}
R\left(x\right)v=\left(I_{d}-2\frac{\nabla U\left(x\right)\left\{ \nabla U\left(x\right)\right\} ^{\top}}{\left\Vert \nabla U\left(x\right)\right\Vert ^{2}}\right)v=v-2\frac{\left\langle \nabla U\left(x\right),v\right\rangle }{\left\Vert \nabla U\left(x\right)\right\Vert ^{2}}\nabla U\left(x\right),\label{eq:Projectionoperator}
\end{equation}
where $I_{d}$ denotes the $d\times d$ identity matrix, $\|\cdot\|$
the Euclidean norm, and $\left\langle w,z\right\rangle =w^{t}z$ the
scalar product between column vectors $w,z$. \footnote{Computations of the form $R(x)v$ are implemented via the right-hand
side of Equation \ref{eq:Projectionoperator} which takes time $O(d$)
rather than the left-hand side, which would take time $O(d^{2}$).}\cite{PetersDeWith2012} also refresh the velocity at periodic times.
We slightly modify their approach by performing a velocity refreshment
at the arrival times of a homogeneous PP of intensity $\lambda^{\mathrm{ref}}\geq0$,
$\lambda^{\mathrm{ref}}$ being a parameter of the algorithm. A similar
refreshment scheme was used for a related process in \citep{michel2014generalized}.
Throughout the paper, we use the terminology ``event'' for a time
at which either a bounce or a refreshment occurs. The basic version
of the BPS algorithm proceeds as follows:

\begin{algorithm}[H]
\protect\caption{Basic BPS algorithm ~\label{alg:BasicBouncyParticleSamplerRefreshment}}

\begin{enumerate}
\item Initialize $\left(x^{\left(0\right)},v^{\left(0\right)}\right)$ arbitrarily
on $\mathbb{R}^{d}\times\mathbb{R}^{d}$ and let $T$ denote the requested
trajectory length. 
\item For $i=1,2,\ldots$
\begin{enumerate}
\item Simulate the first arrival time $\tau{}_{\mathrm{bounce}}\in(0,\infty)$
of a PP of intensity \label{enu:ArrivalPoisson} 
\[
\chi\left(t\right)=\lambda(x^{\left(i-1\right)}+v^{\left(i-1\right)}t,v^{\left(i-1\right)}).
\]
\item Simulate $\tau{}_{\mathrm{ref}}\sim\mathrm{Exp\left(\lambda^{\mathrm{ref}}\right)}$. 
\item Set $\tau_{i}\leftarrow\mathrm{min}\left(\tau{}_{\mathrm{bounce}},\tau{}_{\mathrm{ref}}\right)$
and compute the next position using 
\begin{equation}
x^{\left(i\right)}\leftarrow x^{\left(i-1\right)}+v^{\left(i-1\right)}\tau_{i}.\label{eq:stateupdate}
\end{equation}
\item If $\tau_{i}=\tau{}_{\mathrm{ref}}$, sample the next velocity $v^{\left(i\right)}\sim\mathcal{N}\left(0_{d},I_{d}\right)$.
\label{enu:Refresh-step} 
\item If $\tau_{i}=\tau{}_{\mathrm{bounce}}$, compute the next velocity
$v^{\left(i\right)}$ using 
\begin{equation}
v^{\left(i\right)}\leftarrow R\left(x^{\left(i\right)}\right)v^{\left(i-1\right)}.\label{eq:bouncingprojection}
\end{equation}
\item If $t_{i}=\sum_{j=1}^{i}\tau_{j}\ge T$ exit For Loop (line 2). 
\end{enumerate}
\end{enumerate}
\end{algorithm}

In the algorithm above, $\mathrm{Exp\left(\delta\right)}$ denotes
the exponential distribution of rate $\delta$ and $\mathcal{N}\left(0_{d},I_{d}\right)$
the standard normal on $\mathbb{R}^{d}$. Refer to Figure \ref{fig:normal-eg}
for an example of a trajectory generated by BPS on a standard bivariate
Gaussian target distribution. An example of a bounce time simulation
is shown in Figure \ref{fig:normal-eg}, top right, for the segment
between the first and second events\textemdash the intensity $\chi(t)$
(turquoise) is obtained by thresholding $\left\langle \nabla U\left(x^{(1)}+v^{\left(1\right)}t\right),v^{(1)}\right\rangle $
(purple, dashed); arrival times of the PP of intensity $\chi(t)$
are denoted by stars.

We will show further that the transition kernel of the BPS process
admits $\pi$ as invariant distribution for any $\lambda^{\mathrm{ref}}\geq0$
but it can fail to be irreducible when $\lambda^{\mathrm{ref}}=0$
as demonstrated in Section \ref{subsec:Isotropic-Multivariate-Normal}.%
{} It is thus critical to use $\lambda^{\mathrm{ref}}>0$. Our proof
of invariance and ergodicity can accommodate some alternative refreshment
steps \ref{enu:Refresh-step}. One such variant, which we call restricted
refreshment, samples $v^{(i)}$ uniformly on the unit hypersphere
$\mathcal{S}^{d-1}=\left\{ x\in\mathbb{R}^{d}:\Vert x\Vert=1\right\} $.
We compare experimentally these two variants and others in Section
\ref{subsec:Comparisons-of-refresh}.

\subsection{Algorithms for bounce time simulation\label{subsec:Algorithm-implementation}}

Implementing BPS requires sampling the first arrival time $\tau$
of a one-dimensional inhomogeneous PP $\Pi$ of intensity $\chi(t)=\lambda(x+vt,v)$
given by (\ref{eq:intensityfunction}). Simulating such a process
is a well-studied problem; see \citep[Chapter 6, Section 1.3]{Devroye1986}.
We review here three methods and illustrate how they can be used to
implement BPS for examples from Bayesian statistics. The first method
described in Section \ref{subsec:Simulation-time-scale-transform}
will be particularly useful when the target is log-concave, while
the two others described in Section \ref{subsec:Simulation-through-adaptive}
and Section \ref{subsec:Simulation-using-composition} can be applied
to more general scenarios.

\subsubsection{Simulation using a time-scale transformation\label{subsec:Simulation-time-scale-transform}}

If we let $\varXi\left(t\right)=\int_{0}^{t}\chi\left(s\right){\rm d}s$,
then the PP $\Pi$ satisfies 
\begin{align*}
\P(\tau>u) & =\P(\Pi\cap[0,u)=\emptyset)=\exp(-\varXi(u)),
\end{align*}
and therefore $\tau$ can be simulated from a uniform variate $V\sim\mathcal{U}\left(0,1\right)$
via the identity 
\begin{equation}
\tau=\varXi^{-1}(-\log(V)),\label{eq:integralEquation}
\end{equation}
where $\varXi^{-1}$ denotes the quantile function of $\varXi,$ $\varXi^{-1}(p)=\inf\mbox{\ensuremath{\left\{ t:p\le\varXi(t)\right\} } }.$
Refer to Figure \ref{fig:normal-eg}, top right for a graphical illustration.
This identity corresponds to the method proposed in \cite{PetersDeWith2012}
to determine the bounce times and is also used in \cite{michel2014generalized,michel2015event,nishikawa2015event}
to simulate related processes.

In general, it is not possible to obtain an analytical expression
for $\tau$. However, when the target distribution is strictly log-concave
and differentiable, it is possible to solve Equation (\ref{eq:integralEquation})
numerically (see Example \ref{ex:log-concave} below). 
\begin{example}
\emph{\label{ex:log-concave}Log-concave densities}. If the energy
is strictly convex (see Figure \ref{fig:normal-eg}, bottom right),
we can minimize it along the line specified by $(x,v)$ 
\[
\tau_{*}=\mathrm{argmin}_{t:t\geq0}~U\left(x+vt\right),
\]
where $\tau_{*}$ is well defined and unique by strict convexity.
On the interval $\left[0,\tau_{*}\right)$, which might be empty,
we have ${\rm d}U\left(x+vt\right)/{\rm d}t<0$ and ${\rm d}U\left(x+vt\right)/{\rm d}t\geq0$
on $\left[\tau_{*},\text{\ensuremath{\infty}}\right)$. The solution
$\tau$ of (\ref{eq:integralEquation}) is thus necessarily such that
$\tau\geq\tau_{*}$ and (\ref{eq:integralEquation}) can be rewritten
using the gradient theorem as 
\begin{equation}
\int_{\tau_{*}}^{\tau}\frac{{\rm d}U\left(x+vt\right)}{{\rm d}t}{\rm d}t=U\left(x+v\tau\right)-U\left(x+v\tau_{*}\right)=-\log V.\label{eq:convex-case}
\end{equation}
Even if we only compute $U$ pointwise through a black box, we can
solve \eqref{eq:convex-case} through line search within machine precision.

We note that \eqref{eq:convex-case} also provides an informal connection
between the BPS and MH algorithms. Exponentiating this equation, we
get indeed 
\[
\frac{\pi(x+v\tau)}{\pi(x+v\tau_{*})}=V.
\]

Hence, in the log-concave case, and when the particle is climbing
the energy ladder (i.e., $\tau_{*}=0$), BPS can be viewed as ``swapping''
the order of the steps taken by the MH algorithm. In the latter, we
first sample a proposal and second sample a uniform $V$ to perform
an accept-reject decision. With BPS, $V$ is first drawn then the
maximum distance allowed by the same MH ratio is travelled. As for
the case of a particle going down the energy ladder, the behavior
of BPS is simpler to understand: bouncing never occurs. We illustrate
this method for Gaussian distributions.

\emph{\label{ex:normal}Multivariate Gaussian distributions}. Let
$U\left(x\right)=\left\Vert x\right\Vert ^{2}$, then simple calculations
yield 
\begin{equation}
\tau=\frac{{1}}{\left\Vert v\right\Vert ^{2}}\begin{cases}
-\left\langle x,v\right\rangle +\sqrt{-\left\Vert v\right\Vert ^{2}\log V} & \text{if}\left\langle x,v\right\rangle \leq0,\\
-\left\langle x,v\right\rangle +\sqrt{\left\langle x,v\right\rangle ^{2}-\left\Vert v\right\Vert ^{2}\log V} & \text{ otherwise.}
\end{cases}\label{eq:ComputeCollisionsIsoGaussian}
\end{equation}
\end{example}

\subsubsection{Simulation using adaptive thinning\label{subsec:Simulation-through-adaptive}}

When it is difficult to solve (\ref{eq:integralEquation}), the use
of an adaptive thinning procedure provides an alternative. Assume
we have access to local-in-time upper bounds $\bar{\chi}{}_{s}\left(t\right)$
on $\chi(t)$, that is 
\begin{align*}
\bar{\chi}_{s}(t) & =0\,\text{ for all }t<s,\\
\bar{\chi}{}_{s}(t) & \ge\chi(t)\text{ for all }s\le t\le s+\Delta(s),
\end{align*}
where $\triangle$ is a positive function (standard thinning corresponds
to $\Delta=+\infty$). Assume additionally that we can simulate the
first arrival time of the PP $\bar{\Pi}_{s}$ with intensity $\bar{\chi}_{s}(t)$.
Such bounds can be constructed based on upper bounds on directional
derivatives of $U$ provided the remainder of the Taylor expansion
can be controlled%
. Algorithm \ref{alg:thinning} shows the pseudocode for the adaptive
thinning procedure.

\begin{algorithm}[H]
\protect\caption{Simulation of the first arrival time of a PP through thinning ~\label{alg:thinning}}

\begin{enumerate}
\item Set $s\leftarrow0$, $\tau\leftarrow0$. 
\item Do
\begin{enumerate}
\item Set $s\leftarrow\tau$. 
\item Sample $\tau$ as the first arrival point of the PP $\bar{\Pi}_{s}$
of intensity $\bar{\chi}_{s}$.\label{enu:Sample-bound} 
\item If $\bar{\Pi}_{s}=\{\emptyset\}$ then set $\tau\leftarrow s+\triangle(s)$. 
\item If $s+\triangle(s)\leq\tau$ set $s\leftarrow s+\triangle(s)$ and
go to (b). 
\item While $V>\{\chi\left(\tau\right)/\bar{\chi}_{s}\left(\tau\right)\}$
where $V\sim\mathcal{U}\left(0,1\right)$. \label{enu:Thin} 
\end{enumerate}
\item Return $\tau$. 
\end{enumerate}
\end{algorithm}

The case $V>\{\chi\left(\tau\right)/\bar{\chi}_{s}\left(\tau\right)\}$
corresponds to a rejection step in the thinning algorithm but, in
contrast to rejection steps that occur in standard MCMC samplers,
in the BPS algorithm this means that the particle does not bounce
and just coasts. Practically, we would like ideally $\triangle$ and
the ratio $\chi\left(\tau\right)/\bar{\chi}_{s}\left(\tau\right)$
to be large. Indeed this would avoid having to simulate too many candidate
events from $\bar{\Pi}_{s}$ which would be rejected as these rejection
steps incur a computational cost.

\subsubsection{Simulation using superposition and thinning\label{subsec:Simulation-using-composition}}

Assume that the energy can be decomposed as 
\begin{equation}
U\left(x\right)=\sum_{j=1}^{m}U^{[j]}\left(x\right),\label{eq:superposition-decomp}
\end{equation}
then 
\[
\chi\left(t\right)\le\sum_{j=1}^{m}\chi^{[j]}\left(t\right),
\]
where $\chi^{[j]}(t)=\max\left(0,\left\langle \nabla U^{[j]}(x+tv),v\right\rangle \right)$
for $j=1,...,m$. It is therefore possible to use the thinning algorithm
of Section \ref{subsec:Simulation-through-adaptive} with $\bar{\chi}_{0}(t)=\sum_{j=1}^{m}\chi^{[j]}\left(t\right)$
for $t\geq0$ (and $\Delta=+\infty$), as we can simulate from $\bar{\Pi}_{0}$
via superposition by simulating the first arrival time $\tau^{[j]}$
of each PP with intensity $\chi^{[j]}\left(t\right)\geq0$ then returning
\[
\tau=\mathrm{min}_{j=1,...,m}~\tau^{[j]}.
\]

\begin{example}
\emph{\label{ex:exp-fam}Exponential families}. Consider a univariate
exponential family with parameter $x$, observation $y,$ sufficient
statistic $\phi(y)$ and log-normalizing constant $A(x)$. If we assume
a Gaussian prior on $x,$ we obtain 
\[
U(x)=\underbrace{{x^{2}/2}}_{U^{[1]}(x)}+\underbrace{{-x\phi(y)}}_{U^{[2]}(x)}+\underbrace{{A(x)}}_{U^{[3]}(x)}.
\]
The time $\tau^{[1]}$ is computed analytically in Example \ref{ex:normal}
whereas the times $\tau^{[2]}$ and $\tau^{[3]}$ are given by 
\[
\tau^{[2]}=\begin{cases}
\frac{{\log V^{[2]}}}{v\phi(y)} & \text{if }v\phi(y)<0,\\
+\infty & \text{{otherwise,}}
\end{cases}
\]
and 
\begin{align*}
\tau^{[3]} & =\begin{cases}
\tilde{\tau}^{[3]} & \text{if }\tilde{\tau}^{[3]}>0,\\
+\infty & \text{{otherwise,}}
\end{cases}
\end{align*}
with $\tilde{\tau}^{[3]}=(A^{-1}(-\log V^{[3]}+A(x))-x)/v$ and $V^{[2]},V^{[3]}\sim\mathcal{U}\left(0,1\right)$.
For example, with a Poisson distribution with natural parameter $x,$
we obtain 
\[
\tilde{\tau}^{[3]}=\frac{\log(-\log V^{[3]}+\exp(x))-x}{v}.
\]
\end{example}
\begin{example}
\emph{\label{ex:Logistic-regression}Logistic regression}. The class
label of the data point $r\in\{1,2,\dots,R\}$ is denoted by $y_{r}\in\left\{ 0,1\right\} $
and its covariate $k\in\{1,2,\dots,d\}$ by $\iota_{r,k}$ where we
assume that $\iota_{r,k}\ge0$ (this assumption can be easily relaxed
as demonstrated but would make the notation more complicated; see
\citep{Galbraith2016} for details). The parameter $x\in\R^{d}$ is
assigned a standard Gaussian prior density denoted by $\psi$, yielding
the posterior density 
\begin{equation}
\pi(x)\propto\psi(x)\prod_{r=1}^{R}\frac{\exp(y{}_{r}\dotprod{\iota_{r}}{x})}{1+\exp\dotprod{\iota_{r}}{x}}.\label{eq:Logisticregressionmodel}
\end{equation}
Using the superposition and thinning method (Section \ref{subsec:Simulation-using-composition}),
simulation of the bounce times can be broken into subproblems corresponding
to $R+1$ factors: one factor coming from the prior, with corresponding
energy 
\begin{equation}
U^{[R+1]}(x)=-\log\psi(x)=\|x\|^{2}/2+\text{constant},\label{eq:LogisticPriorpotential}
\end{equation}
and $R$ factors coming from the likelihood of each datapoint, with
corresponding energy 
\begin{equation}
U^{[r]}(x)=\log(1+\exp\dotprod{\iota_{r}}{x})-y_{r}\dotprod{\iota_{r}}{x}.\label{eq:LogisticDatapotential}
\end{equation}
Simulation of $\tau^{[R+1]}$ is covered in Example \ref{ex:normal}.
Simulation of $\tau^{[r]}$ for $r\in\left\{ 1,2,\dots,R\right\} $
can be approached using thinning. In Appendix \ref{subsec:Bound-on-intensity},
we show that 
\begin{equation}
\chi^{[r]}(t)\le\bar{\chi}^{[r]}=\sum_{k=1}^{d}\1[v_{k}(-1)^{y_{r}}\ge0]\iota_{r,k}|v_{k}|.\label{eq:LogisticDatapotentialbound}
\end{equation}
Since the bound is constant for a given $v$, we sample $\tau^{[r]}$
by simulating an exponential random variable. 
\end{example}

\subsection{Estimating expectations\label{subsec:Estimating-expectations}}

Given a realization of $x\left(t\right)$ over the interval $\left[0,T\right]$,
where $T$ is the total trajectory length, the expectation $\int_{\mathbb{R}^{d}}\varphi\left(x\right)\pi\left({\rm d}x\right)$
of a function $\varphi:\mathbb{R}^{d}\rightarrow\mathbb{R}$ with
respect to $\pi$ can be estimated using 
\[
\frac{1}{T}\int_{0}^{T}\varphi\left(x\left(t\right)\right){\rm d}t=\frac{1}{T}\left(\sum_{i=1}^{n-1}\int_{0}^{\tau_{i}}\varphi\left(x^{\left(i-1\right)}+v^{\left(i-1\right)}s\right){\rm d}s+\int_{0}^{t_{n}-T}\varphi\left(x^{\left(n-1\right)}+v^{\left(n-1\right)}s\right){\rm d}s\right);
\]
see, e.g., \citep{davis1993markov}. When $\varphi\left(x\right)=x_{k}$,
$k\in\left\{ 1,2,\dots,d\right\} $, we have 
\[
\int_{0}^{\tau_{i}}\varphi\left(x^{\left(i-1\right)}+v^{\left(i-1\right)}s\right){\rm d}s=x_{k}^{\left(i-1\right)}\tau_{i}+v_{k}^{\left(i-1\right)}\frac{\tau_{i}^{2}}{2}.
\]
When the above integral is intractable, we may just discretize $x\left(t\right)$
at regular time intervals to obtain an estimator 
\[
\frac{1}{L}\sum_{l=0}^{L-1}\varphi\left(x\left(l\delta\right)\right),
\]
where $\delta>0$ is the mesh size and $L=1+\left\lfloor T/\delta\right\rfloor $.
Alternatively, we could approximate these univariate integrals through
quadrature.

\subsection{Theoretical results\label{subsec:Theoretical-results}}

An informal proof establishing that the BPS with $\lambda^{\mathrm{ref}}=0$
admits $\pi$ as invariant distribution is given in \cite{PetersDeWith2012}.
As the BPS process $z\left(t\right)=\left(x\left(t\right),v\left(t\right)\right)$
is a piecewise deterministic Markov process, the expression of its
infinitesimal generator can be established rigourously using \cite{davis1993markov}.
We show here that this generator has invariant distribution $\pi$
whenever $\lambda^{\mathrm{ref}}\geq0$ and prove that the resulting
process is additionally ergodic when $\lambda^{\mathrm{ref}}>0$.
We denote by $\mathbb{E}_{z}\left[h\left(z\left(t\right)\right)\right]$
the expectation of $h\left(z\left(t\right)\right)$ under the law
of the BPS process initialized at $z\left(0\right)=z$. 
\begin{prop}
\label{Proposition:piinvariance}For any $\lambda^{\mathrm{ref}}\geq0$,
the infinitesimal generator $\mathcal{L}$ of the BPS is defined for
any sufficiently regular bounded function $h:\mathbb{R}^{d}\times\mathbb{R}^{d}\rightarrow\mathbb{R}$
by 
\begin{eqnarray}
\mathcal{L}h(z) & = & \lim_{t\downarrow0}\frac{\mathbb{E}_{z}\left[h\left(z\left(t\right)\right)\right]-h(z)}{t}\nonumber \\
 & = & \left\langle \nabla_{x}h\left(x,v\right),v\right\rangle +\lambda\left(x,v\right)\left\{ h(x,R\left(x\right)v)-h(z)\right\} \nonumber \\
 &  & +\lambda^{\mathrm{ref}}\int\left(h(x,v')-h(x,v)\right)\psi\left(v'\right){\rm d}v',\label{eq:GeneratorBPS}
\end{eqnarray}
where we recall that $\psi\left(v\right)$ denotes the standard multivariate
Gaussian density on $\R^{d}$.

This transition kernel of the BPS is non-reversible and admits $\rho$
as invariant probability measure, where the density of \textup{$\rho$}
w.r.t. Lebesgue measure on\textup{ $\mathbb{R}^{d}\times\mathbb{R}^{d}$
}is\textup{ }given by
\begin{equation}
\rho(z)=\pi\left(x\right)\psi\left(v\right).\label{eq:invariantdensityBPS}
\end{equation}
\end{prop}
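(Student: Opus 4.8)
The plan is to recognise the BPS process $z(t)=(x(t),v(t))$ as a piecewise-deterministic Markov process (PDMP) in the sense of \cite{davis1993markov}, to read off its generator from that general theory, and then to establish stationarity of $\rho(z)=\pi(x)\psi(v)$ by verifying the identity $\int\mathcal{L}h(z)\,\rho(z)\,\ud z=0$ for all $h$ in the domain of $\mathcal{L}$. For the generator formula I would identify the three ingredients of the dynamics: the deterministic flow $\dot x=v,\ \dot v=0$, which contributes the transport term $\langle\nabla_x h(x,v),v\rangle$; the bounce jumps, occurring at rate $\lambda(x,v)$ and sending $(x,v)\mapsto(x,R(x)v)$; and the refreshment jumps, occurring at rate $\lambda^{\mathrm{ref}}$ and sending $(x,v)\mapsto(x,v')$ with $v'\sim\psi$. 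Equation \eqref{eq:GeneratorBPS} is then the standard PDMP generator. Making this rigorous amounts to checking the hypotheses of \cite{davis1993markov}: the event rate $\lambda(x,v)+\lambda^{\mathrm{ref}}$ is locally bounded along flow lines (immediate from continuity of $\nabla U$), and the process is non-explosive, i.e.\ only finitely many events occur on every bounded time interval almost surely, which is where a mild growth condition on $\nabla U$ enters; the admissible test functions are bounded and absolutely continuous along the flow.

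Given the generator, I would split $\int\mathcal{L}h\,\rho\,\ud z$ into transport, bounce and refreshment contributions. The refreshment contribution vanishes immediately: for each fixed $x$, $\int\!\!\int\big(h(x,v')-h(x,v)\big)\psi(v')\,\ud v'\,\psi(v)\,\ud v=\int h(x,v')\psi(v')\,\ud v'-\int h(x,v)\psi(v)\,\ud v=0$, since the refreshment kernel already has $\psi$ as its invariant law. For the transport contribution I integrate by parts in $x$ (boundary terms vanishing from the decay of $\pi$ and the boundedness of $h$, or after restricting to a convenient dense subclass) and use $\nabla\pi(x)=-\pi(x)\nabla U(x)$, which holds because $U=-\log\gamma$ and $\pi\propto\gamma$: this turns $\int\langle\nabla_x h(x,v),v\rangle\,\pi(x)\,\ud x$ into $\int h(x,v)\,\pi(x)\,\langle\nabla U(x),v\rangle\,\ud x$, so the transport contribution equals $\int h(x,v)\,\langle\nabla U(x),v\rangle\,\pi(x)\psi(v)\,\ud z$.

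The heart of the argument is the bounce contribution $\int\lambda(x,v)\{h(x,R(x)v)-h(x,v)\}\,\pi(x)\psi(v)\,\ud z$. In the part involving $h(x,R(x)v)$ I would change variables, at fixed $x$, via $w=R(x)v$. The operator $R(x)$ of \eqref{eq:Projectionoperator} is a Householder reflection, hence orthogonal and its own inverse, so $|\det R(x)|=1$, $\|R(x)v\|=\|v\|$ and therefore $\psi(R(x)v)=\psi(v)$, while $\langle\nabla U(x),R(x)v\rangle=-\langle\nabla U(x),v\rangle$, so that $\lambda(x,R(x)v)=\max\{0,-\langle\nabla U(x),v\rangle\}$ (on the set $\nabla U(x)=0$ the rate $\lambda$ vanishes and one may set $R(x)=I_d$). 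The change of variables rewrites the bounce contribution as $\int h(x,v)\,\pi(x)\psi(v)\,\big[\max\{0,-\langle\nabla U,v\rangle\}-\max\{0,\langle\nabla U,v\rangle\}\big]\,\ud z$, and since $\max\{0,-a\}-\max\{0,a\}=-a$ this equals $-\int h(x,v)\,\langle\nabla U(x),v\rangle\,\pi(x)\psi(v)\,\ud z$, exactly cancelling the transport contribution. Hence $\int\mathcal{L}h\,\rho\,\ud z=0$ and $\rho$ is stationary.

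For non-reversibility it suffices to exhibit $h,g$ in the domain with $\int g\,\mathcal{L}h\,\ud\rho\neq\int h\,\mathcal{L}g\,\ud\rho$. Taking $h(x,v)=f(x)$, so that the jump terms drop and $\mathcal{L}h=\langle\nabla f(x),v\rangle$, and $g(x,v)=\langle a,v\rangle$ for a fixed vector $a$, a brief computation using $\int v\psi(v)\,\ud v=0$ and $\int\max\{0,\langle\nabla U,v\rangle\}\langle\nabla U,v\rangle\psi(v)\,\ud v=\tfrac12\|\nabla U\|^2$ gives $\int g\,\mathcal{L}h\,\ud\rho=\int f(x)\pi(x)\langle a,\nabla U(x)\rangle\,\ud x=-\int h\,\mathcal{L}g\,\ud\rho$: the two are equal and opposite rather than equal, which encodes the antisymmetry of the transport part that the symmetric bounce term cannot absorb. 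I expect the main obstacle to be not these calculations but the rigour behind the generator itself: verifying non-explosion and pinning down the exact domain of $\mathcal{L}$ within the framework of \cite{davis1993markov}, together with keeping the bounce-term change of variables and the $\max\{0,\cdot\}$ algebra precise enough that the cancellation with the transport term is genuinely exact.
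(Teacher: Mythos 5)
Your proposal is correct and follows essentially the same route as the paper: read the generator off Davis's PDMP theory, split $\int\mathcal{L}h\,\rho\,\ud z$ into transport, bounce and refreshment pieces, kill the refreshment term by the $\psi$-invariance of the refresh kernel, integrate the transport term by parts using $\nabla\pi=-\pi\nabla U$, and cancel it against the bounce term via the change of variables $w=R(x)v$ (an involutive isometry, so $\psi(R(x)v)=\psi(v)$) together with $\max\{0,-a\}-\max\{0,a\}=-a$. Your explicit non-reversibility computation with $h(x,v)=f(x)$ and $g(x,v)=\langle a,v\rangle$ is a genuine bonus, since the paper asserts non-reversibility without proof; just note that $g$ is unbounded, so strictly one should work in $L^{2}(\rho)$ or truncate.
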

If we add the condition $\RInt>0$, we get the following stronger
result. 
\begin{thm}
\label{thm:uniqueInvariant}%
{} If $\lambda^{\mathrm{ref}}>0$ then $\rho$ is the unique invariant
probability measure of the transition kernel of the BPS and for $\rho$-almost
every $z\left(0\right)$ and $h$ integrable with respect to \textup{$\rho$}
\[
\lim_{T\rightarrow\infty}\frac{1}{T}\int_{0}^{T}h(z\left(t\right)){\rm d}t=\int h(z)\rho(z){\rm d}z\quad\text{ a.s.}
\]
\end{thm}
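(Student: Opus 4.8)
The plan is to establish (i) uniqueness of the invariant measure and (ii) the ergodic (strong law) statement as consequences of a suitable irreducibility and recurrence theory for piecewise-deterministic Markov processes (PDMPs), since Proposition~\ref{Proposition:piinvariance} already supplies $\rho$ as \emph{an} invariant measure with an explicit generator. Concretely, I would invoke the stability theory in \cite{davis1993markov} (together with standard Harris-type results): if we can show that the BPS process, viewed as a continuous-time Markov process on $\mathbb{R}^d\times\mathbb{R}^d$, is $\rho$-irreducible and aperiodic, and that $\rho$ is invariant, then $\rho$ is the unique invariant probability measure, the process is positive Harris recurrent, and the law of large numbers $\frac1T\int_0^T h(z(t))\,\mathrm dt \to \int h\,\mathrm d\rho$ holds for $\rho$-a.e.\ starting point and every $h\in L^1(\rho)$. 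Thus the real content is an irreducibility/controllability argument, and this is exactly where the hypothesis $\lambda^{\mathrm{ref}}>0$ is needed (the counterexample of Section~\ref{subsec:Isotropic-Multivariate-Normal} shows it is indispensable).

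The key step is therefore a \emph{reachability} lemma: from any starting state $z(0)=(x,v)$, and for any ball $B \subset \mathbb{R}^d\times\mathbb{R}^d$ of positive $\rho$-measure, the process reaches $B$ in finite time with positive probability. I would argue this by exhibiting an explicit event of positive probability that steers the particle into $B$: wait for a refreshment (which occurs before any fixed time with probability bounded below, since refreshments arrive as a homogeneous PP of rate $\lambda^{\mathrm{ref}}>0$), so that the velocity is resampled from $\mathcal{N}(0_d,I_d)$, whose density is positive on all of $\mathbb{R}^d$; choosing $v$ in an appropriate open cone we can aim at the target position; then, because between events the motion is deterministic straight-line flow and the bounce intensity $\lambda(x,v)=\max\{0,\langle\nabla U(x),v\rangle\}$ is locally bounded (by continuity of $\nabla U$), with positive probability \emph{no} bounce occurs on the relevant time interval, so the particle coasts to a neighbourhood of the desired position; a final refreshment then places $(x,v)$ inside $B$. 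Summing/integrating the densities of these finitely many exponential waiting times and Gaussian velocity draws over a small tube of trajectories gives a strictly positive transition probability, hence $\rho$-irreducibility. Aperiodicity follows because the refreshment waiting times have a density (no lattice structure), so the skeleton chain is aperiodic; alternatively one checks that some reachable set is ``small'' in the Meyn--Tweedie sense, with the minorization coming from the Gaussian refreshment density.

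The main obstacle is making the controllability argument fully rigorous in the PDMP framework: one must track that the candidate trajectory avoids the (possibly wild) set where $\langle\nabla U,v\rangle$ is large, control the probability of ``no bounce'' uniformly on a tube of nearby trajectories, and handle the fact that the flow is only piecewise smooth — this is a standard but technically fiddly ``accessibility + forward-accessibility implies irreducibility'' computation. A secondary subtlety is passing from $\rho$-irreducibility plus existence of an invariant probability measure to the pathwise ergodic theorem: here I would cite the version of Birkhoff's theorem for positive Harris recurrent Markov processes (e.g.\ \cite{davis1993markov} or standard Markov-process references), noting that positivity of the recurrent measure is automatic since $\rho$ is already a \emph{probability} measure. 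Once uniqueness and Harris recurrence are in hand, the ``$\rho$-almost every $z(0)$'' qualifier in the statement is exactly what the general theory delivers, and no extra work is needed.
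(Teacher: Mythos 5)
Your proposal follows essentially the same route as the paper: the key step in both is a minorization of the time-$t$ kernel obtained by conditioning on exactly two refreshments and no bounces (the no-bounce probability being positive because the intensity is bounded on compacts), combined with a reachability argument that copes with arbitrary initial and target velocities. The paper closes the argument slightly differently---by showing that two mutually singular invariant measures would both have to charge a common set on which the explicit bound $P_t(z,A)\ge\delta\,\vol(A\cap B)$ holds, rather than invoking the full Harris-recurrence machinery---and the step you defer as ``standard but technically fiddly'' (integrating the trajectory densities over a tube to get an actual density lower bound for the time-$t$ law) is where most of the paper's effort goes, via a coarea-formula change of variables from the refreshment times and refreshed velocities to the final state, together with an explicit decelerate/travel/accelerate decomposition to enforce the norm constraints.
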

In fact, Lemma \ref{lem:refreshkernel} establishes a minorization so it is only left to establish a Lyapunov function in order to establish polynomial or geometric ergodicity in total variation.
We exhibit in Section \ref{subsec:Isotropic-Multivariate-Normal}
a simple example where $P_{t}$ is not ergodic for $\lambda^{\mathrm{ref}}=0$.

\section{The local bouncy particle sampler\label{sec:Local-bouncy-particle-sampler}}

\subsection{Structured target distribution and factor graph representation}

In numerous applications, the target distribution admits some structural
properties that can be exploited by sampling algorithms. For example,
the Gibbs sampler takes advantage of conditional independence properties.
We present here a ``local'' version of the BPS introduced in \citep[Section III]{PetersDeWith2012}
which can similarly exploit these properties and, more generally,
any representation of the target density as a product of positive
factors 
\begin{equation}
\pi\left(x\right)\propto\prod_{f\in F}\gamma_{f}\left(x_{f}\right),\label{eq:factorgraphtarget}
\end{equation}
where $x_{f}$ is a restriction of $x$ to a subset $N_{f}\subseteq\mbox{\{1,2,\ensuremath{\dots},\ensuremath{d\}}}$
of the components of $x$, and $F$ is an index set called the set
of factors. Hence the energy associated to $\pi$ is of the form 
\begin{equation}
U\left(x\right)=\sum_{f\in F}U_{f}\left(x\right)\label{eq:factorgraphenergy}
\end{equation}
with $\partial U_{f}\left(x\right)/\partial x_{k}=0$ for any variable
absent from factor $f$, i.e. for any $k\in\left\{ 1,2,\dots,d\right\} \backslash N_{f}$.
\begin{center}
\begin{figure}
\begin{centering}
\includegraphics[scale=0.7]{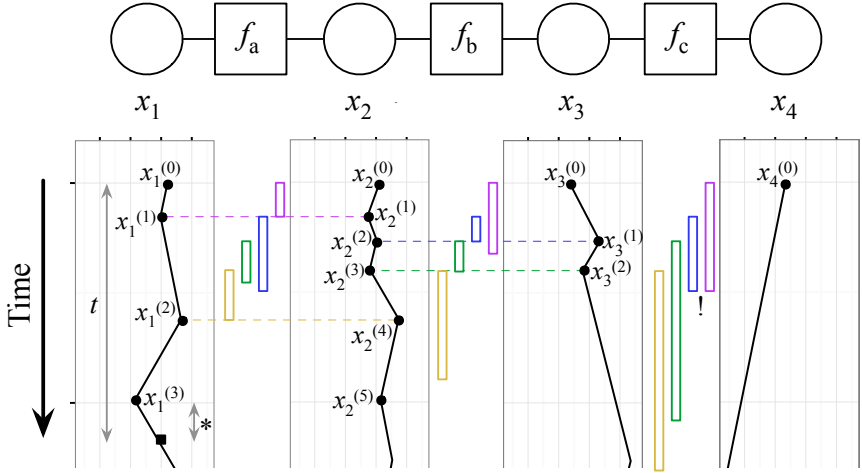} 
\par\end{centering}
\protect\caption{\label{fig:local-rf}Top: a factor graph with $d=4$ variables and
3 binary factors, $F=\left\{ f_{\text{{a}}},f_{\text{{b}}},f_{\text{{c}}}\right\} $.
Bottom: sample paths of $\left(x_{i}(t)\right)_{t\geq0}$ for $i=1,...,4$
for the local BPS. See Sections \ref{subsec:Local-BPS:-algorithm}
and \ref{subsec:Implementation-via-priorityqueue}.}
\end{figure}
\par\end{center}

Such a factorization of the target density can be formalized using
factor graphs (Figure \ref{fig:local-rf}, top). A factor graph is
a bipartite graph, with one set of vertices $N$ called the variables,
each corresponding to a component of $x$ ($|N|=d$), and a set of
vertices $F$ corresponding to the local factors $\left(\gamma_{f}\right)_{f\in F}$.
There is an edge between $k\in N$ and $f\in F$ if and only if $k\in N_{f}.$
This representation generalizes undirected graphical models \citep[Chap. 2, Section 2.1.3]{WainwrightJordan2008}
as, for example, factor graphs can have distinct factors connected
to the same set of components (i.e. $f\neq f'$ with $N_{f}=N_{f'}$)
as in the example of Section \ref{subsec:Logisticregression}.

\subsection{Local BPS: algorithm description\label{subsec:Local-BPS:-algorithm} }

Similarly to the Gibbs sampler, each step of the local BPS manipulates
only a subset of the $d$ components of $x$. Contrary to the Gibbs
sampler, the local BPS does not require sampling from any full conditional
distribution and each local calculation results in a change of \emph{all}
state components, not just the one being updated\textemdash how this
can be done implicitly without manipulating the full state at each
iteration is described below. Related processes exhibiting similar
characteristics have been proposed in \cite{kampmann2015monte,michel2014generalized,michel2015event,nishikawa2015event}.

For each factor $f\in F$, we define a local intensity function $\lambda_{f}:\mathbb{R}^{d}\times\mathbb{R}^{d}\rightarrow\mathbb{R}^{+}$
and a local bouncing matrix $R_{f}:\mathbb{R}^{d}\rightarrow\mathbb{R}^{d\times d}$
by 
\begin{align}
\lambda_{f}\left(x,v\right) & =\mathrm{max}\left\{ 0,\left\langle \nabla U_{f}\left(x\right),v\right\rangle \right\} ,\label{eq:localintensity}\\
R_{f}\left(x\right)v & =v-2\frac{\left\langle \nabla U_{f}\left(x\right),v\right\rangle \nabla U_{f}\left(x\right)}{\left\Vert \nabla U_{f}\left(x\right)\right\Vert ^{2}}.\label{eq:localprojectionoperator}
\end{align}
We can check that $R_{f}\left(x\right)$ satisfies 
\begin{equation}
k\in\left\{ 1,2,\dots,d\right\} \backslash N_{f}\Longrightarrow\{R_{f}(x)v\}_{k}=v_{k}.\label{eq:sparse-collision}
\end{equation}
When suitable, we will slightly abuse notation and write $R_{f}\left(x_{f}\right)$
for $R_{f}\left(x\right)$ as $R_{f}\left(x_{f},x_{-f}\right)=R_{f}\left(x_{f},x'_{-f}\right)$
for any $x_{-f},x'_{-f}\in\mathbb{R}^{d-\left|N_{f}\right|}$, where
$\left|S\right|$ denotes the cardinality of a set $S$. Similarly,
we will use $\lambda_{f}(x_{f},v_{f})$ for $\lambda_{f}(x,v)$.

We define a collection of PP intensities based on the previous event
position $x^{(i-1)}$ and velocity $v^{(i-1)}$: $\chi_{f}(t)=\lambda_{f}(x^{\left(i-1\right)}+v^{\left(i-1\right)}t,v^{\left(i-1\right)})$.
In the local BPS, the next bounce time $\tau$ is the first arrival
of a PP with intensity $\chi(t)=\sum_{f\in F}\chi_{f}(t)$. However,
instead of modifying all velocity variables at a bounce as in the
basic BPS, we sample a factor $f$ with probability $\chi_{f}(\tau)/\chi(\tau)$
and modify only the variables connected to the sampled factor. More
precisely, the velocity $v_{f}$ is updated using $R_{f}\left(x_{f}\right)$
defined in (\ref{eq:localprojectionoperator}). A generalization of
the proof of Proposition \ref{Proposition:piinvariance} given in
the Supplementary Material shows that the local BPS algorithm results
in a $\pi-$invariant kernel. In the next subsection, we describe
various computationally efficient procedures to simulate this process.

For all these implementations, it is useful to encode trajectories
in a sparse fashion: each variable $k\in N$ only records information
at the times $t_{k}^{(1)},t_{k}^{(2)},\dots$ where an event (a bounce
or refreshment) affected it. By (\ref{eq:sparse-collision}), this
represents a sublist of the list of all event times. At each of those
times $t_{k}^{(i)},$ the component's position $x_{k}^{(i)}$ and
velocity $v_{k}^{(i)}$ right after the event is stored. Let $L_{k}$
denote a list of triplets $(x_{k}^{(i)},v_{k}^{(i)},t_{k}^{(i)})_{i\geq0}$,
where $x_{k}^{(0)}$ and $v_{k}^{(0)}$ denote the initial position
and velocity and $t_{k}^{(0)}=0$ (see Figure \ref{fig:local-rf},
where the black dots denote the set of recorded triplets). This list
is sufficient to compute $x_{k}(t)$ for $t\leq t_{k}^{(|L_{k}|+1)}$.
This procedure is detailed in Algorithm \ref{alg:calculationstatedirection}
and an example is shown in Figure \ref{fig:local-rf}, where the black
square on the first variable's trajectory shows how Algorithm \ref{alg:calculationstatedirection}
reconstructs $x_{1}(t)$ at a fixed time $t$: it identifies $i(t,1)=3$
as the index associated to the largest event time $t_{1}^{(3)}$ before
time $t$ affecting $x_{1}$ and return $x_{1}\left(t\right)=x_{1}^{(3)}+v_{1}^{(3)}(t-t_{1}^{(3)})$.

\begin{algorithm}[H]
\protect\caption{Computation of $x_{k}(t)$ from a list of events. ~\label{alg:calculationstatedirection}}

\begin{enumerate}
\item Find the index $i=i(t,k)$ associated to the largest time $t_{k}^{(i)}$
verifying $t_{k}^{(i)}\le t$. 
\item Set $x_{k}(t)\leftarrow x_{k}^{(i(t,k))}+(t-t_{k}^{(i(t,k))})v_{k}^{(i(t,k))}.$ 
\end{enumerate}
\end{algorithm}

\subsection{Local BPS: efficient implementations}

\subsubsection{Implementation via priority queue\label{subsec:Implementation-via-priorityqueue}}

We can sample arrivals from a PP with intensity $\chi(t)=\sum_{f\in F}\chi_{f}(t)$
using the superposition method of Section \ref{subsec:Simulation-using-composition},
the thinning step therein being omitted. To implement this technique
efficiently, we store potential future bounce times (called ``candidates'')
$t_{f}$, one for each factor, in a priority queue $Q$: only a subset
of these candidates will join the lists $L_{k}$ which store past,
``confirmed'' events. We pick the the smallest time in $Q$ to determine
the next bounce time and the next factor $f$ to modify. The priority
queue structure ensures that finding the minimum element of $Q$ or
inserting/updating an element of $Q$ can be performed with computational
complexity $O(\log|F|)$. When a bounce occurs, a key observation
behind efficient implementation of the local BPS is that not all the
other candidate bounce times need to be resimulated. Suppose that
the bounce was associated with factor $f$. In this case, only the
candidate bounce times $t_{f'}$ corresponding to factors $f'$ with
$N_{f'}\cap N_{f}\neq\emptyset$ need to be resimulated. For example,
consider the first bounce in Figure \ref{fig:local-rf} (shown in
purple), which is triggered by factor $f_{\text{a}}$ (rectangles
represent candidate bounce times $t_{f}$; dashed lines connect bouncing
factors to the variables that undergo an associated velocity change).
Then only the velocities for the variables $x_{1}$ and $x_{2}$ need
to be updated. Therefore, only the candidate bounce times for factors
$f_{\text{a}}$ and $f_{\text{b}}$ need to be re-simulated while
the candidate bounce time for $f_{\text{c}}$ stays constant (this
is shown by an exclamation mark in Figure \ref{fig:local-rf}).

The method is detailed in Algorithm~\ref{alg:LocalBouncyParticleSampler}.
Several operations of the BPS such as step 4, 6.iii, 6.iv and 7.ii
can be easily parallelized.

\begin{algorithm}[H]
\protect\caption{Local BPS algorithm (priority queue implementation) ~\label{alg:LocalBouncyParticleSampler}}

\begin{enumerate}
\item Initialize $\left(x^{\left(0\right)},v^{\left(0\right)}\right)$ arbitrarily
on $\mathbb{R}^{d}\times\mathbb{R}^{d}$. 
\item Initialize the global clock $T\leftarrow0$. 
\item For $k\in N$ do
\begin{enumerate}
\item Initialize the list $L_{k}\leftarrow\left(x_{k}^{\left(0\right)},v_{k}^{\left(0\right)},T\right)$. 
\end{enumerate}
\item Set $Q\leftarrow\mathbf{new\ queue}\left(x^{\left(0\right)},v^{\left(0\right)},T\right)$. 
\item Sample $t_{\mathrm{ref}}\sim\mathrm{Exp\left(\lambda^{\mathrm{ref}}\right)}$. 
\item While more events $i=1,2,\ldots$ requested do
\begin{enumerate}
\item $\left(t,f\right)\leftarrow\mathbf{smallest\ candidate\ bounce\ time}\ \mathbf{and\ associated\ factor\ in}\ Q$.\label{enu:pop} 
\item Remove $(t,f)$ from $Q$.
\item Update the global clock, $T\gets t$. 
\item If $T<t_{\mathrm{ref}}$ then \label{enu:after-pop}
\begin{enumerate}
\item $\left(v_{f}\right)_{k}\gets v_{k}^{(|L_{k}|-1)}$ for all $k\in N_{f}$. 
\item $x_{f}\gets x_{f}(T)$ (computed using Algorithm \ref{alg:calculationstatedirection}). 
\item For $k\in N_{f}$ ~do
\begin{enumerate}
\item $x_{k}^{\left(\left|L_{k}\right|\right)}\leftarrow x_{k}^{(\left|L_{k}\right|-1)}+(T-t_{k}^{(\left|L_{k}\right|-1)})v_{k}^{(\left|L_{k}\right|-1)}$,
where $t_{k}^{(\left|L_{k}\right|-1)}$ and $v_{k}^{(\left|L_{k}\right|-1)}$
are retrieved from $L_{k}$. 
\item $v_{k}^{\left(\left|L_{k}\right|\right)}\leftarrow\left\{ R_{f}\left(x_{f}\right)v_{f}\right\} _{k}$. 
\item $L_{k}\leftarrow\left\{ L_{k},\left(x_{k}^{\left(\left|L_{k}\right|\right)},v_{k}^{\left(\left|L_{k}\right|\right)},T\right)\right\} $
(add the new sample to the list). 
\end{enumerate}
\item For $f'\in F:N_{f'}\cap N_{f}\neq\emptyset$\ (note: this includes
the update of $f$) do\label{enu:Collision-recomputation}
\begin{enumerate}
\item for all $k\in N_{f'}$. 
\item $x_{f'}\gets x_{f'}(T)$ (computed using Algorithm \ref{alg:calculationstatedirection}). 
\item Simulate the first arrival time $\tau_{f'}$ of a PP of intensity
$\lambda_{f'}\left(x_{f'}+tv_{f'},v_{f'}\right)$ on $\left[0,+\infty\right)$. 
\item Set in $Q$ the candidate bounce time associated to $f'$ to the value
$t_{f'}=T+\tau_{f'}$ . 
\end{enumerate}
\end{enumerate}
\item Else
\begin{enumerate}
\item Sample $v'\sim\mathcal{N}\left(0_{d},I_{d}\right)$. 
\item $Q\leftarrow\mathbf{new\ queue}\left(x\left(t_{\mathrm{ref}}\right),v',t_{\mathrm{ref}}\right)$
where $x\left(t_{\mathrm{ref}}\right)$ is computed using Algorithm
\ref{alg:calculationstatedirection}. 
\item Set $t_{\mathrm{ref}}\leftarrow t_{\mathrm{ref}}+\tau_{\mathrm{ref}}$
where $\tau_{\mathrm{ref}}\sim\mathrm{Exp\left(\lambda^{\mathrm{ref}}\right)}$. 
\end{enumerate}
\end{enumerate}
\item Return the samples encoded as the lists $L_{k},$ $k\in N$. 
\end{enumerate}
\end{algorithm}

\begin{algorithm}[H]
\protect\caption{New Queue $\left(x,v,T\right)$ ~\label{alg:NewQueue}}

\begin{enumerate}
\item For $f\in F$ do
\begin{enumerate}
\item $\left(v_{f}\right)_{k}\gets v_{k}^{(|L_{k}|-1)}$ for all $k\in N_{f}$. 
\item $x_{f}\gets x_{f}(T)$ (computed using Algorithm \ref{alg:calculationstatedirection}). 
\item Simulate the first arrival time $\tau_{f}$ of a PP of intensity $\lambda_{f}\left(x_{f}+tv_{f},v_{f}\right)$
on $\left[0,+\infty\right)$. 
\item Set in $Q$ the time associated to $f$ to the value $T+\tau_{f}$
. 
\end{enumerate}
\item Return $Q$. 
\end{enumerate}
\end{algorithm}

\subsubsection{Implementation via thinning\label{subsec:Implementation-via-thinning}}

When the number of factors involved in Step \ref{enu:Collision-recomputation}
is large, the previous queue-based implementation can be computationally
expensive. Implementing the local BPS in this setup is closely related
to the problem of simulating stochastic chemical kinetics and innovative
solutions have been proposed in this area. We adapt here the algorithm
proposed in \cite{thanh2015simulationrejection} to the local BPS
context. For ease of presentation, we present the algorithm without
refreshment and only detail the simulation of the bounce times. This
algorithm relies on the ability to compute local-in-time upper bounds
on $\lambda_{f}$ for all $f\in F$. More precisely, we assume that
given a current position $x$ and velocity $v$, and $\Delta\in(0,\infty]$,
we can find a positive number $\bar{\chi}_{f}$, such that for any
$t\in[0,\Delta)$, we have $\bar{\chi}_{f}\ge\lambda_{f}(x+vt,v)$.
We can also use this method on a subset $G$ of $F$ and combine it
with the previously discussed techniques to sample candidate bounce
times for factors in $F$\textbackslash{}$G$ but we restrict ourselves
to $G=F$ to simplify the presentation.

\begin{algorithm}[H]
\protect\caption{Local BPS algorithm (thinning implementation) ~\label{alg:local-bps-thinning}}

\begin{enumerate}
\item Initialize $\left(x^{\left(0\right)},v^{\left(0\right)}\right)$ arbitrarily
on $\mathbb{R}^{d}\times\mathbb{R}^{d}$. 
\item Initialize the global clock $T\leftarrow0$. 
\item Initialize $\bar{T}\leftarrow\triangle$ (time until which local upper
bounds are valid). 
\item Compute local-in-time upper bounds $\bar{\chi}_{f}$ for $f\in F$
such that $\bar{\chi}_{f}\ge\lambda_{f}(x^{(0)}+v^{(0)}t,v^{\left(0\right)})$
for all $t\in[0,\Delta)$. 
\item While more events $i=1,2,\ldots$ requested do
\begin{enumerate}
\item Sample $\tau\sim\mathrm{Exp\left(\bar{\chi}\right)}$ where $\bar{\chi}=\sum_{f\in F}\bar{\chi}_{f}$.
\label{enu:Sample-next-time} 
\item If $\left(T+\tau>\bar{T}\right)$ then
\begin{enumerate}
\item $x^{(i)}\gets x^{(i-1)}+v^{(i-1)}(\bar{T}-T)$. 
\item $v^{(i)}\gets v^{(i-1)}$. 
\item For all $f\in F$, update $\bar{\chi}_{f}$ to ensure that $\bar{\chi}_{f}\ge\lambda_{f}(x^{(i)}+v^{(i)}t,v^{(i)})$
for $t\in[0,\Delta)$. 
\item Set $T\leftarrow\bar{T}$, $\bar{T}$$\leftarrow\bar{T}+\triangle.$ 
\end{enumerate}
\item Else
\begin{enumerate}
\item $x^{(i)}\gets x^{(i-1)}+v^{(i-1)}\tau$. 
\item Sample $\mathcal{F}\in F$ where $\mathbb{P}\left(\mathcal{F}=f\right)=\bar{\chi}_{f}/\bar{\chi}.$
\label{enu:Sample-which-factor} 
\item If $V<\lambda_{\mathcal{F}}\left(x^{(i)},v^{(i-1)}\right)/\bar{\chi}_{\mathcal{F}}$
where $V\sim\mathcal{U}\left(0,1\right)$ then a bounce for factor
$\mathcal{F}$ occurs at time $T$.\label{enu:LocalBPSthinning}
\begin{enumerate}
\item $v^{(i)}$$\gets R_{\mathcal{F}}\left(x^{(i)}\right)v^{(i-1)}$. 
\item For all $f'\in F:N_{f'}\cap N_{\mathcal{F}}\neq\emptyset$, update
$\bar{\chi}_{f'}$ to ensure that $\bar{\chi}_{f'}\ge\lambda_{f'}(x^{(i)}+v^{(i)}t,v^{(i)})$
for $t\in[0,\bar{T}-T-\tau)$. 
\end{enumerate}
\item Else
\begin{enumerate}
\item $v^{(i)}\gets v^{(i-1)}$. 
\end{enumerate}
\item Set $T\leftarrow T+\tau$. 
\end{enumerate}
\end{enumerate}
\end{enumerate}
\end{algorithm}

Algorithm \ref{alg:local-bps-thinning} will be particularly useful
in scenarios where summing over the bounds (Step \ref{enu:Sample-next-time})
and sampling a factor (Step \ref{enu:Sample-which-factor}) can be
performed efficiently. A scenario where it is possible to implement
these two operations in constant time is detailed in Section \ref{subsec:Logisticregression}.
Another scenario where sampling quickly from $\mathcal{F}$ is feasible
is if the number of distinct upper bounds is much smaller than the
number of factors. For example, we only need to sample a factor $\mathcal{F}$
uniformly at random if $\Lambda=\bar{\chi}_{f}=\bar{\chi}_{f^{\prime}}$
for all $f,f'$ in $F$ and $\bar{\chi}=\left|F\right|\cdot\Lambda$
(that is no factor needs to be inspected in order to execute Algorithm
\ref{enu:Sample-which-factor}) and the thinning procedure in Step
\ref{enu:LocalBPSthinning} boils down to

\begin{equation}
V\leq\frac{\left|F\right|\max\left(0,\left\langle \nabla U_{f}(x^{(i)}),v^{\left(i-1\right)}\right\rangle \right)}{\bar{\chi}}.\label{eq:remupper-1}
\end{equation}
A related approach has been adopted in \cite{zigzag} for the analysis
of big data. In this particular scenario, an alternative local BPS
can also be implemented where $s>1$ factors $\mathcal{F}=\left(\mathcal{F}_{1},\dots,\mathcal{F}_{s}\right)$
are sampled uniformly at random without replacement from $F$, the
thinning occurs with probability 
\begin{equation}
\frac{\left|F\right|}{s\bar{\chi}}\max\left(0,\sum_{j=1}^{s}\left\langle \nabla U_{\mathcal{F}_{j}}(x^{(i)}),v^{\left(i-1\right)}\right\rangle \right)\label{eq:remupper-2}
\end{equation}
and the components of $x$ belonging to $N_{\mathcal{F}}$ bounce
based on $\sum_{j=1}^{s}\nabla U_{\mathcal{F}_{j}}(x)$. One can check
that the resulting dynamics preserves $\pi$ as an invariant distribution.
In contrast to $s=1$, this is not an implementation of local BPS
described in Algorithm \ref{alg:local-bps-thinning}, but instead
this corresponds to a local BPS update for a random partition of the
factors.

\section{Numerical results\label{sec:Applications}}

\subsection{Gaussian distributions and the need for refreshment\label{subsec:Isotropic-Multivariate-Normal}}

We consider an isotropic multivariate Gaussian target distribution,
$U\left(x\right)=\left\Vert x\right\Vert ^{2}$, to illustrate the
need for refreshment. Without refreshment, we obtain from Equation
(\ref{eq:ComputeCollisionsIsoGaussian}) 
\begin{eqnarray*}
\left\langle x^{(i)},v^{(i)}\right\rangle  & = & \begin{cases}
-\sqrt{-\log V_{i}} & \text{if }\left\langle x^{(i-1)},v^{(i-1)}\right\rangle \leq0,\\
-\sqrt{\left\langle x^{(i-1)},v^{(i-1)}\right\rangle ^{2}-\log V_{i}} & \text{otherwise,}
\end{cases}
\end{eqnarray*}
and

\[
\left\Vert x^{(i)}\right\Vert ^{2}=\begin{cases}
\left\Vert x^{(i-1)}\right\Vert ^{2}-\left\langle x^{(i-1)},v^{(i-1)}\right\rangle ^{2}-\log V_{i} & \text{ if }\left\langle x^{(i-1)},v^{(i-1)}\right\rangle \leq0,\\
\left\Vert x^{(i-1)}\right\Vert ^{2}-\log V_{i} & \text{otherwise,}
\end{cases}
\]

see Supplementary Material for details. In particular, these calculations
show that if $\left\langle x^{(i)},v^{(i)}\right\rangle \le0$ then
$\left\langle x^{(j)},v^{(j)}\right\rangle \le0$ for $j>i$ so that
$\|x^{(i)}\|^{2}=\left\Vert x^{(1)}\right\Vert ^{2}-\left\langle x^{(1)},v^{(1)}\right\rangle ^{2}-\log V_{i}$
for $i\ge2$. In particular for $x^{(0)}=e_{1}$ and $v^{(0)}=e_{2}$
with $e_{i}$ being elements of standard basis of $\mathbb{R}^{d}$,
the norm of the position at all points along the trajectory can never
be smaller than $1$ as illustrated in Figure \ref{fig:reducibleTrak}.
\begin{center}
\begin{figure}
\begin{centering}
\includegraphics[height=5.5cm]{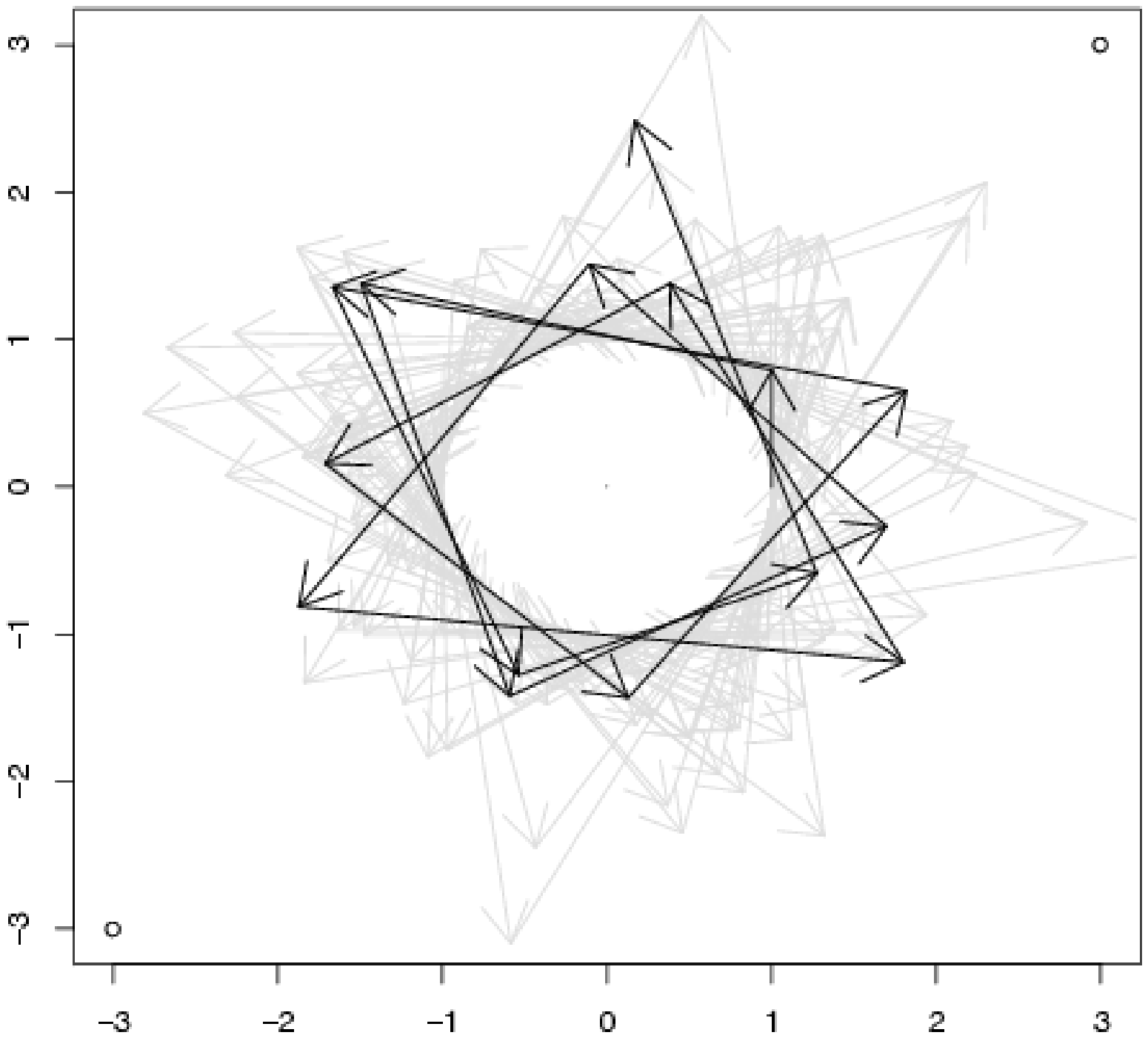}\hspace{0.5in}\includegraphics[height=5.5cm]{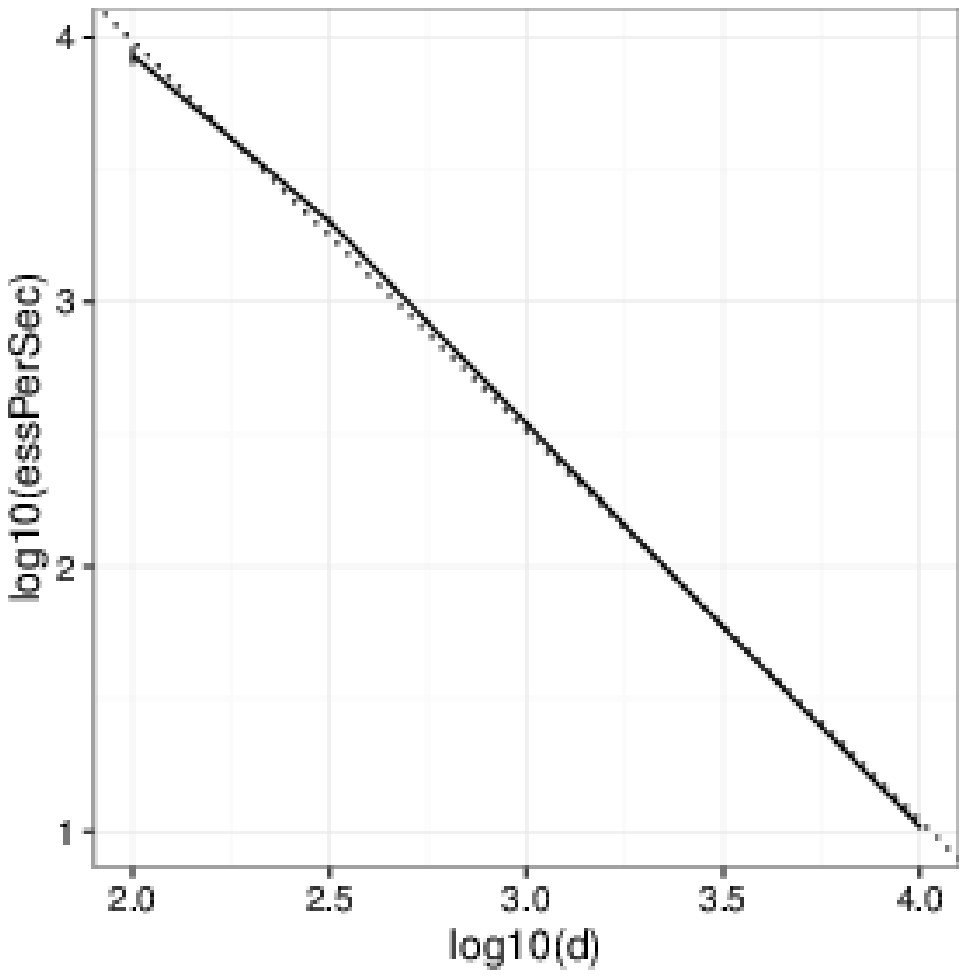} 
\par\end{centering}
\protect\caption{\label{fig:reducibleTrak}\label{fig:gaussian1}Left: the 200 first
segments/bounces of a BPS path for $\lambda^{\mathrm{ref}}=0$ (for
clarity the first 15 segments are in black, the following ones in
light grey): the center of the space is never explored. Right, solid
line: ESS per CPU second as a function of $d$ (log-log scale), along
with 95\% confidence intervals based on 40 runs (the intervals are
small and difficult to see). Dashed line: linear regression curve.
See Section \ref{subsec:Isotropic-Multivariate-Normal} for details.}
\end{figure}
\par\end{center}

In this scenario, we show that BPS without refreshment admits a countably
infinite collection of invariant distributions. Let us define $r\left(t\right)=\left\Vert x\left(t\right)\right\Vert $
and $m\left(t\right)=\left\langle x\left(t\right),v\left(t\right)\right\rangle /\left\Vert x\left(t\right)\right\Vert $
and denote by $\chi_{k}$ the probability density of the chi distribution
with $k$ degrees of freedom. 
\begin{prop}
\label{Proposition:infinitenumberinvariantmeasures.}For any dimension
$d\geq2$, the process $\left(r\left(t\right),m\left(t\right)\right)_{t\geq0}$
is Markov and its transition kernel is invariant with respect to the
probability densities $\left\{ f_{k}(r,m)\propto\chi_{k}(\sqrt{2}r)\cdot(1-\xv^{2})^{(k-3)/2};k\in\left\{ 2,3,\ldots\right\} \right\} $. 
\end{prop}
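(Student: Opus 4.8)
The plan is to set up polar-type coordinates for the BPS on the isotropic Gaussian and show that the pair $(r(t),m(t))$ is autonomous, then verify invariance of each $f_k$ by a generator computation. First I would establish the Markov property of $(r(t),m(t))$: between events the full dynamics are $x(t)=x^{(i)}+v^{(i)}(t-t_i)$ with $\|v^{(i)}\|=\|v^{(0)}\|$ constant (bounces are reflections, and with $\lambda^{\mathrm{ref}}=0$ there is no refreshment), so I may normalize $\|v\|=1$. A direct differentiation gives $\dot r = \langle x,v\rangle/\|x\| = m$ and $\dot m = (1 - m^2)/r$ along each linear segment, which depends only on $(r,m)$. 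At a bounce the reflection $R(x)v$ fixes $\|x\|$ (so $r$ is continuous) and sends $\langle x,v\rangle \mapsto \langle x, R(x)v\rangle = -\langle x,v\rangle$ because $\nabla U(x) = 2x$ is parallel to $x$; hence $m \mapsto -m$ at a bounce. The bounce intensity is $\lambda(x,v) = \max\{0,\langle\nabla U(x),v\rangle\} = \max\{0,2\langle x,v\rangle\} = 2r\max\{0,m\}$, again a function of $(r,m)$ only. Together these show $(r(t),m(t))_{t\geq 0}$ is a PDMP on $[0,\infty)\times[-1,1]$ with generator
\[
\mathcal{A}g(r,m) = m\,\partial_r g(r,m) + \frac{1-m^2}{r}\,\partial_m g(r,m) + 2r\max\{0,m\}\bigl(g(r,-m) - g(r,m)\bigr).
\]

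Next I would verify $\int \mathcal{A}g\,f_k\,\mathrm{d}r\,\mathrm{d}m = 0$ for all test functions $g$, i.e. that $f_k$ is invariant. This is where the structure of $f_k(r,m)\propto \chi_k(\sqrt 2 r)(1-m^2)^{(k-3)/2}$ pays off: $\chi_k(\sqrt 2 r) \propto r^{k-1}e^{-r^2}$, so $f_k(r,m) \propto r^{k-1}(1-m^2)^{(k-3)/2}e^{-r^2}$, which is exactly the density of $(\|X\|, \langle X,V\rangle/\|X\|)$ when $X$ is... more precisely it is the joint law one gets by disintegrating an isotropic ``position'' of dimension $k$ against an independent uniform direction — the marginal that the basic BPS in dimension $k$ would leave invariant by Proposition \ref{Proposition:piinvariance}. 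I would make this rigorous by integrating the three terms of $\mathcal{A}^*$ (the formal adjoint) against $f_k$: the transport part $-\partial_r(m f_k) - \partial_m\bigl(\tfrac{1-m^2}{r}f_k\bigr)$ and the jump part $2r\max\{0,m\}f_k(r,-m) - 2r\max\{0,m\}f_k(r,m)$, using the symmetry $f_k(r,-m)=f_k(r,m)$ to combine the jump-in and jump-out terms. The computation reduces to checking the identity $\partial_r(m\,r^{k-1}(1-m^2)^{\alpha}e^{-r^2}) + \partial_m\bigl(\tfrac{1-m^2}{r}r^{k-1}(1-m^2)^{\alpha}e^{-r^2}\bigr) = 2r\,r^{k-1}(1-m^2)^{\alpha}e^{-r^2}\cdot(\mathrm{sign stuff})$, with $\alpha=(k-3)/2$; the exponent $(k-3)/2$ is precisely what makes the $\partial_m$ term produce the factor needed to cancel the transport and jump contributions. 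I would present this as a short explicit calculation rather than grinding every term here.

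The main obstacle is the rigorous justification of the generator formula and of the integration-by-parts identity $\int \mathcal{A}g\,f_k = \int g\,\mathcal{A}^* f_k$ — specifically handling the boundary behavior at $m = \pm 1$ (where $1-m^2\to 0$, and for $k=2$ the density $(1-m^2)^{-1/2}$ is integrable but unbounded) and at $r=0$ (where $1/r$ appears in the drift but $r^{k-1}$ vanishes for $k\geq 2$). One needs to check that no boundary terms appear when integrating by parts: at $m=\pm1$ the drift $(1-m^2)/r$ vanishes and $f_k$ stays integrable, and at $r=0$ the combination $r^{k-1}/r = r^{k-2}$ is integrable for $k\geq 2$, so the flux vanishes; this is the delicate bookkeeping. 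A secondary point worth a remark is that the claimed family is genuinely countably infinite and these are mutually singular / linearly independent (distinct $k$ give distinct $r$-marginals), which is what makes the non-uniqueness of the invariant measure — and hence the failure of ergodicity without refreshment — manifest; but this follows immediately once invariance of each $f_k$ is established.
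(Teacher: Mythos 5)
Your proposal is correct and follows essentially the same route as the paper: reduce the dynamics to the autonomous $(r,m)$ piecewise-deterministic process with drift $\left(m,\tfrac{1-m^{2}}{r}\right)$, jump $m\mapsto-m$ at rate $2r\max\{0,m\}$, and then verify invariance of each $f_{k}$. The paper obtains the $f_{k}$ as pushforwards of $\mathcal{N}\left(0,\tfrac{1}{2}I_{k}\right)\otimes\mathcal{U}(\mathcal{S}^{k-1})$ and leaves the invariance verification as a check for the reader, whereas you carry out the adjoint computation explicitly (and correctly: the exponent $(k-3)/2$ is exactly what makes the transport term $2rmf_{k}$ cancel the jump term), so your write-up is, if anything, more complete on that final step.
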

The proof is given in Appendix \ref{Proposition:infinitenumberinvariantmeasures.}.
By Theorem \ref{thm:uniqueInvariant}, we have a unique invariant
measure as soon as $\lambda^{\mathrm{ref}}>0$.

Next, we look at the scaling of the Effective Sample Size (ESS) per
CPU second of the basic BPS algorithm for $\varphi\left(x\right)=x_{1}$
when $\lambda_{\text{ref}}=1$ as the dimension $d$ of the isotropic
normal target increases. The ESS is estimated using the R package
\textit{mcmcse} \cite{ESSFlegal} by evaluating the trajectory on
a fine discretization of the sampled trajectory. The results in log-log
scale are displayed in Figure \ref{fig:gaussian1}. The curve suggests
a decay of roughly $d^{-1.47}$, slightly inferior to the $d^{-1.25}$
scaling for an optimally tuned Hamiltonian Monte Carlo (HMC) algorithm
\citep[Section III]{creutz1988}, \citep[Section 5.4.4]{Neal2011}.
It should be noted that BPS achieves this scaling without varying
any tuning parameter, whereas HMC's performance critically depends
on tuning two parameters (leap-frog stepsize and number of leap-frog
steps). Both BPS and HMC compare favorably to the $d^{-2}$ scaling
of the optimally tuned random walk MH \cite{roberts2001optimal}.

\subsection{Comparison of the global and local schemes\label{subsec:global-vs-local}}

We compare the basic ``global'' BPS of Section \ref{sec:Basic-bouncy-particle-sampler}
to the local BPS of Section \ref{sec:Local-bouncy-particle-sampler}
on a sparse Gaussian field. We use a chain-shaped undirected graphical
model of length $d=1000$ and perform separate experiments for various
pairwise precision parameters for the interaction between neighbors
in the chain. Both methods are run for 60 seconds. We compare the
Monte Carlo estimate of the variance of $x_{500}$ to its true value.
The results are shown in Figure \ref{fig:local-vs-global}. The smaller
computational complexity per local bounce of the local BPS offsets
significantly the associated decrease in expected trajectory segment
length. Moreover, both versions appear insensitive to the pairwise
precision used in this sparse Gaussian field.

\begin{figure}
\begin{centering}
\includegraphics[width=0.6\paperwidth]{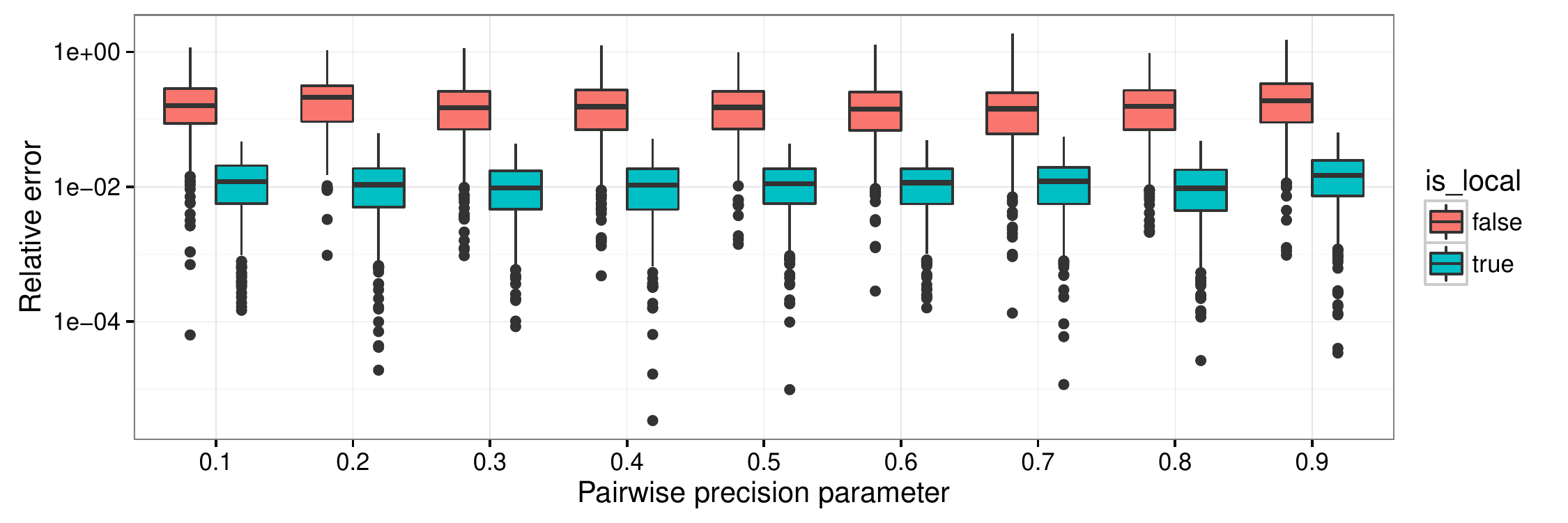} 
\par\end{centering}
\protect\caption{\label{fig:local-vs-global}Boxplots of relative errors over 100 local
BPS runs for Gaussian chain-shaped fields of pairwise precisions 0.1-0.9
.}
\end{figure}

\subsection{Comparisons of alternative refreshment schemes\label{subsec:Comparisons-of-refresh}}

In Section \ref{sec:Basic-bouncy-particle-sampler}, the velocity
was refreshed using a Gaussian distribution. We compare here this
global refreshment scheme to three alternatives: 
\begin{description}
\item [{Local refreshment:}] if the local BPS is used, the factor graph
structure can be exploited to design computationally cheaper refreshment
operators. We pick one factor $f\in F$ uniformly at random and resample
only the components of $v$ with indices in $N_{f}$. By the same
argument used in Section \ref{sec:Local-bouncy-particle-sampler},
each refreshment requires bounce time recomputation only for the factors
$f'$ with $N_{f}\cap N_{f'}\neq\emptyset$.%
{} 
\item [{Restricted refreshment:}] the velocities are refreshed according
to $\phi\left(v\right)$, the uniform distribution on $\mathcal{S}^{d-1}$,
and the BPS admits now $\rho\left(z\right)=\pi\left(x\right)\phi\left(v\right)$
as invariant distribution. 
\item [{Restricted partial refreshment:}] a variant of restricted refreshment
where we sample an angle $\theta$ by multiplying a $\text{{Beta}(\ensuremath{\alpha}, \ensuremath{\beta})}$-distributed
random variable by $2\pi.$ We then select a vector uniformly at random
from the unit length vectors that have an angle $\theta$ from $v$.
We used $\alpha=1,\beta=4$ to favor small angles. 
\end{description}
We compare these methods for different values of $\lambda^{\mathrm{ref}}$,
the trade-off being that too small a value can lead to a failure to
visit certain regions of the space, while too large a value leads
to a random walk behavior.

The rationale behind the partial refreshment procedure is to suppress
the random walk behavior of the particle path arising from a refreshment
step independent from the current velocity. Refreshment is needed
to ensure ergodicity but a ``good'' direction should only be altered
slightly. This strategy is akin to the partial momentum refreshment
strategy for HMC methods \cite{horowitz1991generalized}, \citep[Section 4.3]{Neal2011}
and could be similarly implemented for global refreshment. It is easy
to check that all of the above schemes preserve $\pi$ as invariant
distribution. We tested these schemes on the chain-shaped factor graph
described in the previous section (with the pairwise precision parameter
set to 0.5). All methods are provided with a computational budget
of 30 seconds. The results are shown in Figure \ref{fig:refresh-schemes}.
The results show that local refreshment is less sensitive to $\lambda^{\mathrm{ref}}$,
performing as well or better than global refreshment. The performance
of the restricted and partial methods appears more sensitive to $\lambda^{\mathrm{ref}}$
and generally inferior to the other two schemes.

One limitation of the results in this section is that the optimal
refreshment scheme and refreshment rate will in general be problem
dependent. Adaptation methods used in the HMC literature could potentially
be adapted to this scenario \cite{Wang2013AHMC,Hoffman2013NoUTurn},
but we leave these extensions to future work.

\begin{figure}
\begin{centering}
\includegraphics[width=0.7\paperwidth]{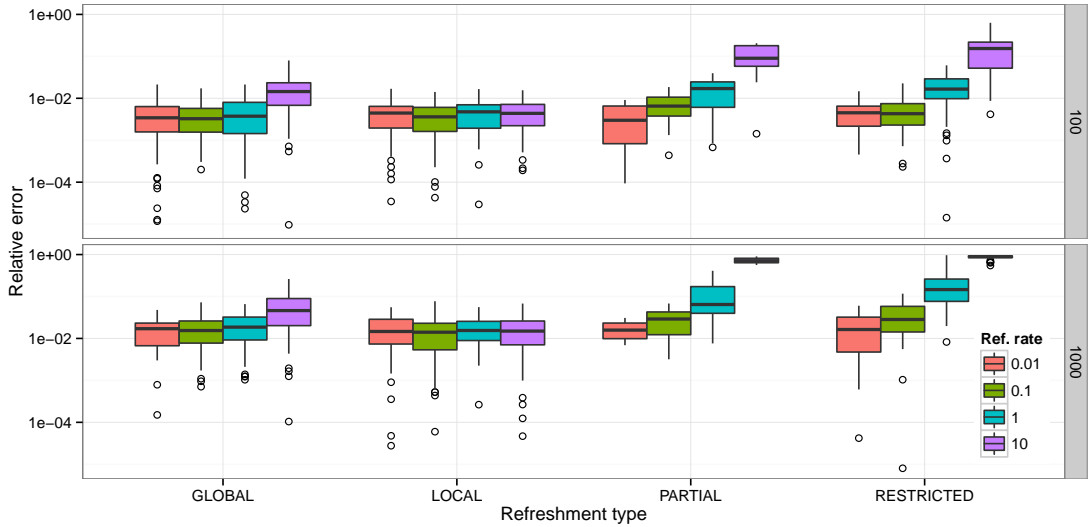} 
\par\end{centering}
\protect\caption{\label{fig:refresh-schemes}Comparison of refreshment schemes for
$d=100$ (top) and $d=1000$ (bottom). Each boxplot summarizes the
relative error for the variance estimates (in log scale) of $x_{50}$
over 100 runs of BPS.}
\end{figure}

\subsection{Comparisons with HMC methods on high-dimensional Gaussian distributions\label{subsec:Comparisons-with-HMC}}

We compare the local BPS with no partial refreshment and $\lambda^{\mathrm{ref}}=1$
to advanced adaptive versions of HMC implemented using Stan \cite{Hoffman2013NoUTurn}
on a 100-dimensional Gaussian example from \citep[Section 5.3.3.4]{Neal2011}.
For each method, we compute the relative error on the estimated marginal
variances after a wall clock time of 30 seconds, excluding from this
time the time taken to compile the Stan program. The adaptive HMC
methods use 1000 iterations of adaptation. When only the leap-frog
stepsize is adapted (``adapt=true''), HMC provides several poor
estimates of marginal variances. These deviations disappear when adapting
a diagonal metric (denoted ``fit-metric'') and/or using advanced
auxiliary variable methods to select the number of leap-frog steps
(denoted ``nuts''). Given that adaptation is critical to HMC in
this scenario, it is encouraging that BPS without adaptation is competitive
(Figure \ref{fig:radford-example}).

\begin{figure}
\begin{centering}
\includegraphics[width=0.7\paperwidth]{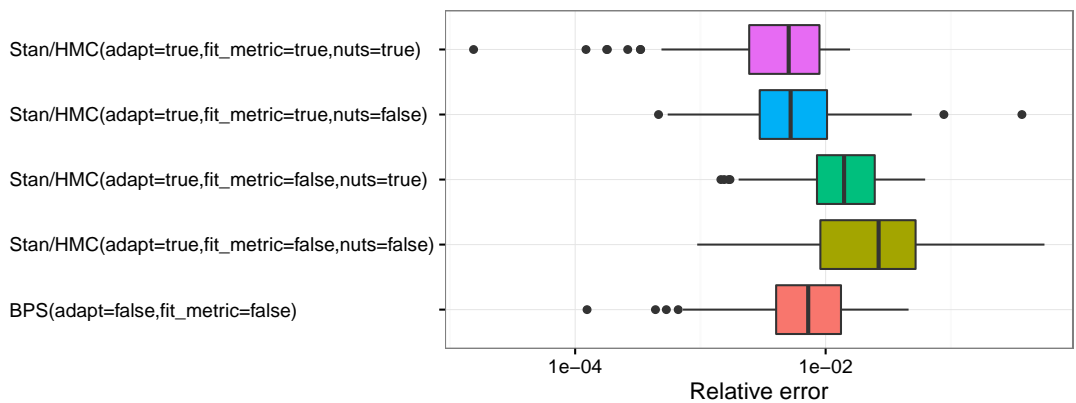} 
\par\end{centering}
\protect\caption{\label{fig:radford-example}Box plots showing the relative absolute
error of variance estimates for a fixed computational budget.}
\end{figure}

Next, we compare the local BPS to NUTS (``adapt=true,fit\_metric=true,nuts=true'')
as the dimension $d$ increases. Experiments are performed on the
chain-shaped Gaussian Random Field of Section \ref{subsec:global-vs-local}
with the pairwise precision parameter set to 0.5. We vary the length
of the chain (10, 100, 1000), and run Stan's implementation of NUTS
for 1000 iterations + 1000 iterations of adaptation. We measure the
wall-clock time (excluding the time taken to compile the Stan program)
and then run our method for the same wall-clock time 40 times for
each chain size. The absolute value of the relative error averaged
on 10 equally spaced marginal variances is measured as a function
of the percentage of the total computational budget used; see Figure
\ref{fig:vary-chain-length}. The gap between the two methods widens
as $d$ increases. To visualize the different behavior of the two
algorithms, three marginals of the Stan and BPS paths for $d=100$
are shown in Figure \ref{fig:example-paths}. Contrary to Section
\ref{subsec:Isotropic-Multivariate-Normal}, BPS outperforms here
HMC as its local version is able to exploit the sparsity of the random
field.

\begin{figure}
\begin{centering}
\includegraphics[width=0.6\paperwidth]{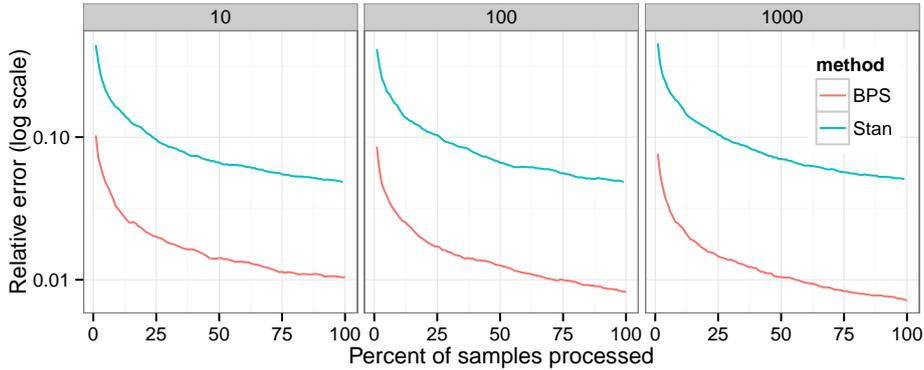} 
\par\end{centering}
\caption{\label{fig:vary-chain-length}Relative error for $d=10$ (left), $d=100$
(middle) and $d=1000$ (right), averaged over 10 of the dimensions
and 40 runs. Each $d$ uses a fixed computational budget.}
\end{figure}

\begin{figure}
\begin{centering}
\includegraphics[width=0.6\paperwidth]{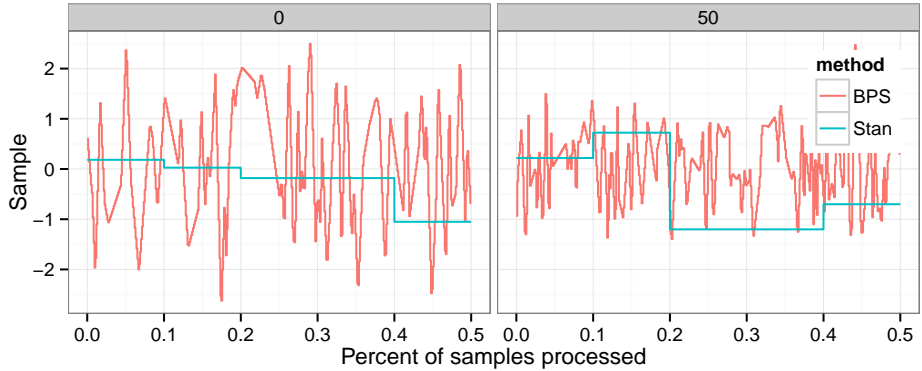} 
\par\end{centering}
\caption{\label{fig:example-paths}Simulated paths for $x_{0}$ and $x_{50}$
for $d=100$. Each state of the HMC trajectory is obtained by leap-frog
steps (not displayed), these latter cannot be used to estimate Monte
Carlo averages as HMC relies on a MH step. In contrast, BPS exploits
the full path.}
\end{figure}

\subsection{Poisson-Gaussian Markov random field\label{subsec:Poisson-Gaussian-Markov-random}}

Let $x=(x_{i,j}:i,j\in\left\{ 1,2,\dots,10\right\} )$ denote a grid-shaped
Gaussian Markov random field with pairwise interactions of the same
form as those used in the previous chain examples (pairwise precision
set to 0.5) and let $y_{i,j}$ be Poisson distributed, independent
over $i,j$ given $x$, with rate $\exp(x_{i,j})$. We generate a
synthetic dataset $y=(y_{i,j}:i,j\in\left\{ 1,2,\dots,10\right\} )$
from this model and approximate the resulting posterior distribution
of $x$. We first run Stan with default settings (``adapt=true,fit\_metric=true,nuts=true'')
for $16,32,64,\dots,4096$ iterations. For each number of Stan iterations,
we run local BPS for the same wall-clock time as Stan, using a local
refreshment ($\lambda^{\mathrm{ref}}=1$) and the method from Example
\ref{ex:exp-fam} for the bouncing time computations. We repeat these
experiments 10 times with different random seeds. Figure \ref{fig:poisson-gaussian}
displays the boxplots of the estimates of the posterior variances
of the variables $x_{0,0}$ and $x_{5,5}$ summarizing the 10 replications.
As expected, both methods converge to the same value, but BPS requires
markedly less computing time to achieve any given level of accuracy.

\begin{figure}
\begin{centering}
\includegraphics[width=0.7\paperwidth]{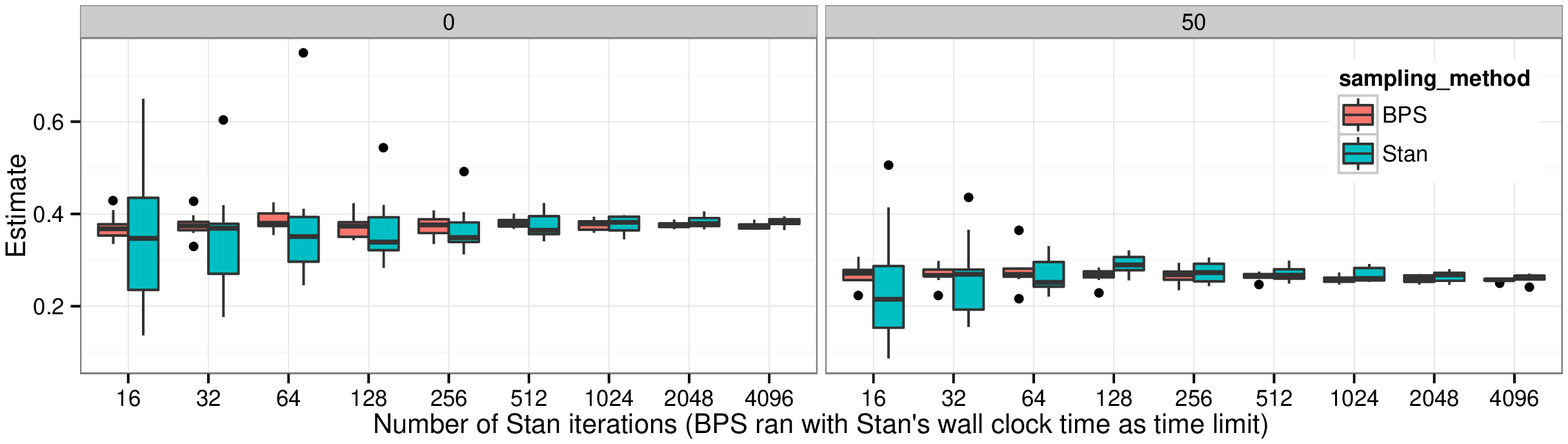} 
\par\end{centering}
\caption{\label{fig:poisson-gaussian}Boxplots of estimates of the posterior
variance of $x_{0,0}$ (left) and $x_{5,5}$ (right) using Stan implementation
of HMC and local BPS.}
\end{figure}

\subsection{Bayesian logistic regression for large data sets\label{subsec:Logisticregression}}

Consider the logistic regression model introduced in Example \ref{ex:Logistic-regression}
when the number of data $R$ is large. In this context, standard MCMC
schemes such as the MH algorithm are computationally expensive as
they require evaluating the likelihood associated to the $R$ observations
at each iteration. This has motivated the development of techniques
which only evaluate the likelihood of a subset of the data at each
iteration. However, most of the methods currently available introduce
either some non-vanishing asymptotic bias, e.g. the subsampling MH
scheme proposed in \cite{BarDouHol2014a}, or provide consistent
estimates converging at a slower rate than regular MCMC algorithms,
e.g. the Stochastic Gradient Langevin Dynamics introduced in \cite{welling2011bayesian,TehThiVol2015a}.
The only available algorithm which only requires evaluating the likelihood
of a subset of data at each iteration yet provides consistent estimates
converging at the standard Monte Carlo rate is the Firefly algorithm
\cite{maclaurin2014firefly}.

\textcolor{black}{In this context, we associate $R+1$ factors to
the target posterior distribution: one for the prior and one for each
data point with $x_{f}=x$ for all $f\in F$. }As a uniform upper
bound on the intensities of these local factors is available for restricted
refreshment, se\textcolor{black}{e Appendix \ref{subsec:Bound-on-intensity},
}we could use \textcolor{black}{\eqref{eq:remupper-2}} in conjunction
with Algorithm \ref{alg:local-bps-thinning} \textcolor{black}{to
provide an alternative to the Firefly algorithm which selects at each
bounce a subset of $s$ data points uniformly at random without replacement.
For $s=1$, a related algorithm has been recently explored in \cite{zigzag}.
In presence of outliers, this strategy can be inefficient as the uniform
upper bound becomes very large, resulting in a computationally expensive
implementation. }After a pre-computation step of complexity $O(R\log R)$
only executed once,\textcolor{black}{{} w}e show here that Algorithm
\ref{alg:local-bps-thinning} can be implemented using data-dependent
bounds \textcolor{black}{mitigating the sensitivity to outliers while
maintaining }the computational cost of each bounce independent of
$R$. We first pre-compute the sum of covariates over the data points,
$\iota_{k}^{c}=\sum_{r=1}^{R}\iota_{r,k}\1[y_{r}=c]$, for $k\in\left\{ 1,\ldots,d\right\} $
and class label $c\in\left\{ 0,1\right\} $. Using these quantities,
it is possible to compute 
\begin{align*}
\bar{\chi} & =\sum_{r=1}^{R}\bar{\chi}^{[r]}=\sum_{k=1}^{d}|v_{k}|\ \iota_{k}^{\1[v_{k}<0]},
\end{align*}

with $\bar{\chi}^{[r]}$ given in (\ref{eq:LogisticDatapotentialbound}).
If $d$ is large, we can keep the sum $\bar{\chi}$ in memory and
add-and-subtract any local updates to it. The implementation of Step
\ref{enu:Sample-which-factor} relies on the alias method \citep[Section 3.4]{Devroye1986}.
A detailed description of these derivations and of the algorithm is
presented in Appendix \ref{Appendix:Logistic}.

We compare the local BPS with thinning to the MAP-tuned Firefly algorithm
implementation provided by the authors. This version of Firefly outperforms
experimentally significantly the standard MH in terms of ESS per-datum
likelihood \cite{maclaurin2014firefly}. The two algorithms are here
compared in terms of this criterion, where the ESS is averaged over
the $d=5$ components of $x$. We generate covariates $\iota_{rk}\overset{\text{i.i.d.}}{\sim}\mathcal{U}(0.1,1.1)$
and data $y_{r}\in\{0,1\}$ for $r=1,\dots,R$ according to (\ref{eq:Logisticregressionmodel})
and set a zero-mean normal prior of covariance $\sigma^{2}I_{d}$
for $x$. For the algorithm, we set $\lambda_{\text{ref}}=0.5,\,\sigma^{2}=1$
and $\Delta=0.5$, which is the length of the time interval for which
a constant upper bound for the rate associated with the prior is used,
see Algorithm \ref{alg:local-bps-logistic}. Experimentally, local
BPS always outperforms Firefly, by about an order of magnitude for
large data sets. However, we also observe that both Firefly and local
BPS have an ESS per datum likelihood evaluation decreasing in approximately
$1/R$ so that the gains brought by these algorithms over a correctly
scaled random walk MH algorithm do not appear to increase with $R$.
The rate for local BPS is slightly superior in the regime of up to
$10^{4}$ data points, but then returns to the approximate $1/R$
rate. To improve this rate, one can adapt the control variate ideas
introduced in \textcolor{black}{\cite{BarDouHol2015}} for the MH
algorithm to these schemes. This has been proposed in \textcolor{black}{\cite{zigzag}}
for a related algorithm and in \cite{Galbraith2016} for the local
BPS.
\begin{center}
\begin{figure}
\begin{centering}
\includegraphics[width=0.7\textwidth]{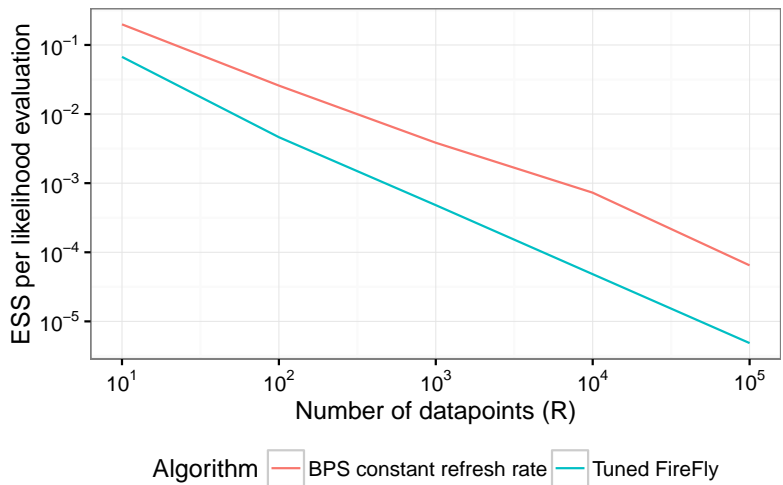} 
\par\end{centering}
\protect\caption{\label{fig:reducibleTrak-1}ESS per-datum likelihood evaluation for
Local BPS and Firefly. }
\end{figure}
\par\end{center}

\subsection{Bayesian inference of evolutionary parameters\label{subsec:Real-data-example:}}

We analyze a dataset of primate mitochondrial DNA \cite{Hayasaka1988Primates}
at the leaves of a fixed reference tree \cite{Huelsenbeck2001MB}
containing 898 sites and 12 species. We want to approximate the posterior
evolutionary parameters $x$ encoded into a rate matrix $Q$. A detailed
description of this phylogenetic model can be found in the Supplementary
Material. The global BPS is used with restricted refreshment and $\lambda^{\mathrm{ref}}=1$
in conjunction with an auxiliary variable-based method similar to
the one described in \cite{Zhao2015CTMC}, alternating between two
moves: (1) sampling continuous-time Markov chain paths along the tree
given $x$ using uniformization, (2) sampling $x$ given the path
(in which case the derivation of the gradient is simple and efficient).
The only difference compared to \cite{Zhao2015CTMC} is that we substitute
the HMC kernel by the kernel induced by running BPS for a fixed trajectory
length. This auxiliary variable method is convenient because, conditioned
on the paths, the energy function is convex and hence we can simulate
the bouncing times using the method described in Example \ref{ex:log-concave}.

We compare against a state-of-the-art HMC sampler \cite{Wang2013AHMC}
that uses Bayesian optimization to adapt the key parameters of HMC,
the leap-frog stepsize and the number of leap-frog steps, while preserving
convergence to the target distribution. Both our method and this HMC
method are implemented in Java and share the same gradient computation
code. Refer to the Supplementary Material for additional background
and motivation behind this adaptation method.

We first perform various checks to ensure that both BPS and HMC chains
are in close agreement given a sufficiently large number of iterations.
After 20 millions HMC iterations, the credible intervals estimates
from the HMC method are in close agreement with those obtained from
BPS (result not shown) and both methods pass the Geweke diagnostic
\cite{Geweke1992}.

\begin{figure}
\begin{centering}
\includegraphics[width=3in]{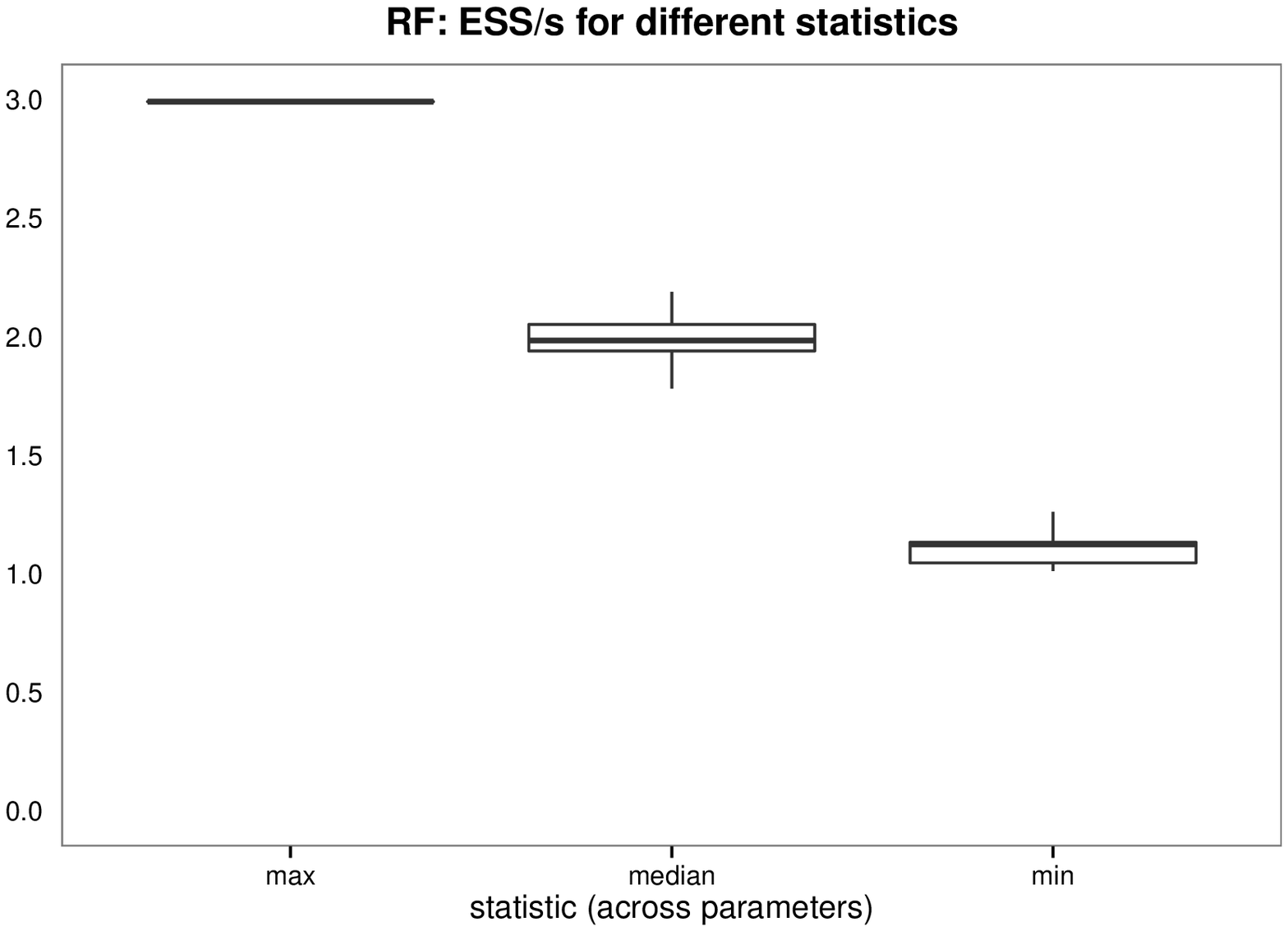}\includegraphics[width=3in]{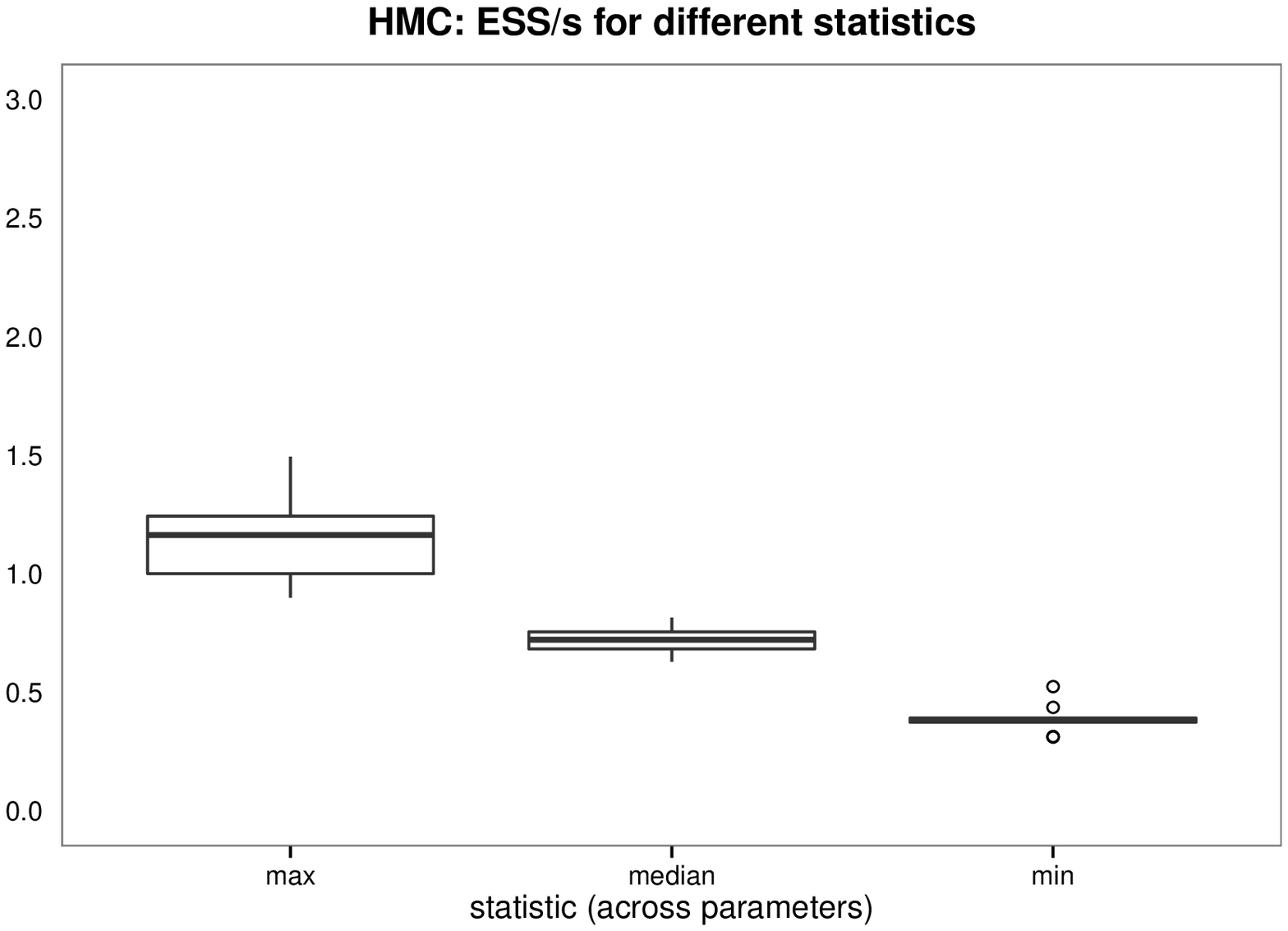} 
\par\end{centering}
\caption{\label{fig:esss}Boxplots of maximum, median and minimum ESS per second
for BPS (left) and HMC (right).}
\end{figure}

To compare the effectiveness of the two samplers, we look at the ESS
per second of the model parameters. We show the maximum, median, and
maximum over the 10 parameter components for 10 runs, for both BPS
and HMC in Figure \ref{fig:esss}. We observe a speed-up by a factor
two for all statistics considered (maximum, median, minimum). In the
supplement, we show that the HMC chain displays much larger autocorrelations
than the BPS chain.

\section{Discussion}

Most MCMC methods currently available, such as the MH and HMC algorithms,
are discrete-time reversible processes. There is a wealth of theoretical
results showing that non-reversible Markov processes mix faster and
provide lower variance estimates of ergodic averages \cite{Ottobre2016}.
However, most of the non-reversible processes studied in the literature
are diffusions and cannot be simulated exactly. The BPS is an alternative
continuous-time Markov process which, thanks to its piecewise deterministic
paths, can be simulated exactly for many problems of interest in statistics.

As any MCMC method, the BPS can struggle in multimodal scenarios and
when the target exhibits very strong correlations. However, for a
range of applications including sparse factor graphs, large datasets
and high-dimensional settings, we have observed empirically that BPS
is on par or outperforms state-of-the art methods such as HMC and
Firefly. The main practical limitation of the BPS compared to HMC
is that its implementation is model-specific and requires more than
knowing $\nabla U$ pointwise. An important open problem is therefore
whether its implementation, and in particular the simulation of bouncing
times, can be fully automated. However, the techniques described in
Section \ref{subsec:Algorithm-implementation} are already sufficient
to handle many interesting models. There are also numerous potential
methodological extensions of the method to study. In particular, it
has been shown in \cite{girolami2011riemann} how one can exploit
the local geometric structure of the target to improve HMC and it
would be interesting to investigate how this could be achieved for
BPS. More generally, the BPS is a specific continuous-time piecewise
deterministic Markov process \cite{davis1993markov}. This class
of processes deserves further exploration as it might provide a whole
new class of efficient MCMC methods.

\section*{Acknowledgements}

Alexandre Bouchard-Côté's research is partially supported by a Discovery
Grant from the National Science and Engineering Research Council.
Arnaud Doucet's research is partially supported by the Engineering
and Physical Sciences Research Council (EPSRC), grants EP/K000276/1,
EP/K009850/1 and by the Air Force Office of Scientific Research/Asian
Office of Aerospace Research and Development, grant AFOSRA/AOARD-144042.
Sebastian Vollmer's research is partially supported by the EPSRC grants
EP/K009850/1 and EP/N000188/1. We thank Markus Upmeier for helpful
discussions of differential geometry as well as Nicholas Galbraith,
Fan Wu, and Tingting Zhao for their comments.

\bibliographystyle{plain}
\bibliography{rf}

\appendix

\section{Proofs of Section \ref{sec:Basic-bouncy-particle-sampler}\label{Appendix:proofs}}

\subsection{Proof of Proposition \ref{Proposition:piinvariance}\label{app:ProofsInvariance}}

The BPS process is a specific piecewise-deterministic Markov process
so the expression of its generator follows from \citep[Theorem 26.14]{davis1993markov}.
Its adjoint is given in \citep{PetersDeWith2012} and a derivation
of this expression from first principles can be found in the Supplementary
Material. To establish the invariance with respect to $\rho$, we
first show that $\mathcal{\int L}h(z)\rho\left(z\right){\rm d}z=0$.
We have 
\begin{eqnarray}
\mathcal{\int\int L}h(z)\rho\left(z\right){\rm d}z & = & \int\int\left\langle \nabla_{x}h\left(x,v\right),v\right\rangle \rho\left(z\right){\rm d}z\label{eq:expansiongenerator}\\
 &  & +\int\int\lambda(x,v)[h\left(x,R\left(x\right)v\right)-h\left(x,v\right)]\rho\left(z\right){\rm d}z\label{eq:expansiongenerator1}\\
 &  & +\lambda^{\mathrm{ref}}\int\int\int\left(h(x,v')-h(x,v)\right)\psi\left({\rm d}v'\right)\rho\left(z\right){\rm d}z\label{eq:expansiongenerator2}
\end{eqnarray}
As $\rho\left(z\right)=\pi\left(x\right)\psi\left(v\right)$, the
term \eqref{eq:expansiongenerator2} is trivially equal to zero, while
a change-of-variables shows that 
\begin{equation}
\int\int\lambda\left(x,v\right)h(x,R\left(x\right)v)\rho\left(z\right){\rm d}z=\int\int\lambda\left(x,R\left(x\right)v\right)h(x,v)\rho\left(z\right){\rm d}z\label{eq:BPSinvariance1}
\end{equation}

as $R^{-1}\left(x\right)v=R\left(x\right)v$ and$\left\Vert R\left(x\right)v\right\Vert =\left\Vert v\right\Vert $
implies $\psi\left(R\left(x\right)v\right)=\psi\left(v\right)$. Additionally,
by integration by parts, we obtain as $h$ is bounded 
\begin{equation}
\int\int\left\langle \nabla_{x}h\left(x,v\right),v\right\rangle \rho\left(z\right){\rm d}z=\int\int\left\langle \nabla U\left(x\right),v\right\rangle h\left(x,v\right)\rho\left(z\right){\rm d}z.\label{eq:BPSinvariance2}
\end{equation}
Substituting \eqref{eq:BPSinvariance1} and \eqref{eq:BPSinvariance2}
into \eqref{eq:expansiongenerator}-\eqref{eq:expansiongenerator1}-\eqref{eq:expansiongenerator2},
we obtain 
\[
\int\mathcal{\int L}h(z)\rho\left(z\right){\rm d}z=\int\int[\left\langle \nabla U\left(x\right),v\right\rangle +\lambda\left(x,R\left(x\right)v\right)-\lambda\left(x,v\right)]h(x,v)\rho\left(z\right){\rm d}z.
\]
Now we have 
\begin{eqnarray*}
\int\mathcal{\int L}h(z)\rho\left(z\right){\rm d}z & = & \int\int[\left\langle \nabla U\left(x\right),v\right\rangle +\lambda\left(x,R\left(x\right)v\right)-\lambda\left(x,v\right)]h(x,v)\rho\left(z\right){\rm d}z.\\
 & = & \int\int[\left\langle \nabla U\left(x\right),v\right\rangle +\max\{0,\left\langle \nabla U(x),R\left(x\right)v\right\rangle \}-\max\{0,\left\langle \nabla U(x),v\right\rangle \}]h(x,v)\rho\left(z\right){\rm d}z\\
 & = & \int\int[\left\langle \nabla U\left(x\right),v\right\rangle +\max\{0,-\left\langle \nabla U(x),v\right\rangle \}-\max\{0,\left\langle \nabla U(x),v\right\rangle \}]h(x,v)\rho\left(z\right){\rm d}z\\
 & = & 0,
\end{eqnarray*}

where we have used $\left\langle \nabla U(x),R\left(x\right)v\right\rangle =-\left\langle \nabla U(x),v\right\rangle $
and $\max\{0,-f\}-\max\{0,f\}=-f$ for any $f$. The result now follows
by \citep[Proposition 34.7]{davis1993markov}.

\subsection{Proof of Theorem \ref{thm:uniqueInvariant} }

\global\long\def\iid{\overset{\text{i.i.d.}}{\sim}}

Informally, our proof of ergodicity relies on showing that refreshments
allow the process to explore the entire space of velocities and positions,
$\R^{d}\times\R^{d}.$ To do so, it will be useful to condition on
an event $\mathscr{E}$ on which paths are ``tractable.'' Since
two refreshment events are sufficient to reach any given destination
point, we would like to focus on such paths (only one refreshment
is not sufficient since a destination point specifies both a final
position and a final velocity). In particular, refreshments are simpler
to analyze than bouncing events, so we would like to condition on
an event on which no bouncing occurs in a time interval of interest.

To formalize this idea, we make use of the time-scale implementation
of the BPS algorithm (Section \ref{subsec:Simulation-time-scale-transform}).
This allows us to express $\mathscr{E}$ in terms of simple independent
events. We start by introducing some notation for the time-scale implementation
of the BPS algorithm. Let $i\ge1$ denote the index of the current
event being simulated, and assume without loss of generality that
$\lambda^{\text{ref}}=$1. Let $x_{0},v_{0}$ denote the initial position
and velocity, while the positions and velocities at the event times,
$z_{i}=(x_{i},v_{i})$, $i\in\left\{ 1,2,\dots\right\} $ are defined
as in Algorithm \ref{alg:BasicBouncyParticleSamplerRefreshment},
we use the notation $(x_{i},v_{i})$ instead of $(x^{\left(i\right)},v^{\left(i\right)})$
to slightly simplify notation. At event time $i$, we simulate three
independent random variables: two exponentially distributed, $e_{i}^{\text{(bounce)}},\tau_{i}^{\text{(ref)}}\sim\text{Exp}(1)$,
and one $d$-dimensional normal, $n_{i}\sim\mathcal{N}(0,I)$. The
candidate time to a refreshment is given by $\tau_{i}^{\text{(ref)}}$
and the candidate time to a bounce event is defined as $\tau_{i}^{\text{(bounce)}}=\Xi_{z_{i-1}}^{-1}(e_{i}^{\text{(bounce)}})$,
where $\Xi_{z}^{-1}$ is the quantile function of $\Xi_{z}(t)=\int_{0}^{t}\chi_{z}(s)\ud s$,
and $\chi_{x,v}(s)=\lambda(x+vs,v)$. The time to the next event is
$\tau_{i}=\min\left\{ \tau_{i}^{\text{(bounce)}},\tau_{i}^{\text{(ref)}}\right\} $.
The random variable $n_{i}$ is only used if $\tau_{i}=\tau_{i}^{\text{(ref)}},$
otherwise the bouncing operator is used to update the velocity in
a deterministic fashion.

We can now define our tractable set and establish its key properties. 
\begin{lem}
\label{lemma:tractable-set}Let $t>0$, and assume the initial point
of the BPS, satisfies $\|x_{0}\|\le t$, $\|v_{0}\|\le1$. If $\|\nabla U\|^{*}=\sup\left\{ \|\nabla U(x)\|:\|x\|\le3t\right\} $
then the event 
\begin{eqnarray*}
\mathscr{E} & =\underbrace{\left(\tau_{1}^{\text{(ref)}}+\tau_{2}^{\text{(ref)}}\le t<\tau_{1}^{\text{(ref)}}+\tau_{2}^{\text{(ref)}}+\tau_{3}^{\text{(ref)}}\right)}_{\mathscr{E}_{1}}\cap & \underbrace{\bigcap_{i\in\left\{ 1,2\right\} }\left(\|n_{i}\|\le1\right)}_{\mathscr{E}_{2}}\cap\underbrace{\bigcap_{i\in{1,2,3}}\left(e_{i}^{\text{(bounce)}}\ge t\|\nabla U\|^{*}\right)}_{\mathscr{E}_{3}}
\end{eqnarray*}

has the following properties: 
\begin{enumerate}
\item \label{part:bounded-norms}On the event $\mathscr{E}$, we have $\|v(t')\|\le1$
and $\|x(t')\|\le2t$ for all $t'\in\left[0,t\right]$, 
\item \label{part:no-bounces}On the event $\mathscr{E}$, there are exactly
two refreshments and no bouncing in the interval $(0,t)$, i.e. $\tau_{1}=\tau_{1}^{\text{(ref)}}$,
$\tau_{2}=\tau_{2}^{\text{(ref)}},$ and $\tau_{1}+\tau_{2}\le t\le\tau_{1}+\tau_{2}+\tau_{3}$, 
\item \label{part:positive-pr}$\P(\mathscr{E})$ is a strictly positive
constant that does not depend on $z(0)=(x_{0},v_{0})$%
, 
\item \label{part:velocity-cond-dist}$v_{i}|\mathscr{E}\iid\psi_{\le1}(0,I)$
for $i\in\left\{ 1,2\right\} $, where $\psi_{\le1}$ denotes the
truncated Gaussian distribution, with $\|v_{i}\|\le1$, 
\item \label{part:times-cond-dist}$\left(\tau_{1}^{\text{(ref)}},\tau_{2}^{\text{(ref)}}\right)|\mathscr{E}\sim\mathcal{U}\left(\left\{ (\tau_{1},\tau_{2})\in(0,t)^{2}:\tau_{1}+\tau_{2}\le t\right\} \right)$. 
\end{enumerate}
\end{lem}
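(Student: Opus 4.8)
The plan is to verify each of the five properties by direct computation, exploiting the fact that on $\mathscr{E}$ the time-scale implementation produces a completely transparent trajectory: two refreshments and no bounces. I would proceed in the order in which the properties feed into one another, since part \ref{part:no-bounces} and part \ref{part:bounded-norms} are mutually reinforcing and must be argued together by a bootstrap/continuity argument.

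\textbf{Parts \ref{part:bounded-norms} and \ref{part:no-bounces} (jointly).} First I would show that as long as the position stays in the ball of radius $3t$, no bounce can occur before time $t$. The key point is the change-of-variable identity $\P(\tau^{\text{(bounce)}}_i > u) = \exp(-\Xi_{z_{i-1}}(u))$ from Section \ref{subsec:Simulation-time-scale-transform}, so $\tau_i^{\text{(bounce)}} = \Xi_{z_{i-1}}^{-1}(e_i^{\text{(bounce)}})$ exceeds $u$ precisely when $e_i^{\text{(bounce)}} \ge \Xi_{z_{i-1}}(u) = \int_0^u \chi_{z_{i-1}}(s)\,\mathrm ds$. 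Since $\chi_{x,v}(s) = \max\{0,\langle \nabla U(x+vs), v\rangle\} \le \|\nabla U(x+vs)\|\,\|v\|$, and on $\mathscr E_2$ the refreshed velocities satisfy $\|v_i\|\le 1$ (and $\|v_0\|\le 1$ by hypothesis), we get $\chi_{z_{i-1}}(s) \le \|\nabla U\|^*$ for all $s\le t$ as long as the relevant position segment stays within radius $3t$. Hence $\Xi_{z_{i-1}}(t) \le t\|\nabla U\|^*$, and on $\mathscr E_3$, $e_i^{\text{(bounce)}} \ge t\|\nabla U\|^* \ge \Xi_{z_{i-1}}(t)$, forcing $\tau_i^{\text{(bounce)}} \ge t$. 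Thus on $\mathscr E$ the first event is a refreshment at time $\tau_1^{\text{(ref)}}$, the velocity becomes $n_1$ with $\|n_1\|\le 1$; between times $0$ and $\tau_1+\tau_2$ the position moves at speed $\le 1$ so $\|x(t')\| \le \|x_0\| + t' \le 2t$, which is well within radius $3t$, so indeed no bounce fires; the second event at $\tau_1+\tau_2\le t$ is again a refreshment to $n_2$ with $\|n_2\|\le 1$; and on $\mathscr E_1$ the third refreshment occurs only after time $t$, while (again by the no-bounce argument on the third segment, where the position still satisfies $\|x(t')\|\le 2t < 3t$) no bounce intervenes either. This simultaneously establishes \ref{part:no-bounces} and the bounds $\|v(t')\|\le 1$, $\|x(t')\|\le 2t$ of \ref{part:bounded-norms}.

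\textbf{Part \ref{part:positive-pr}.} The three defining sub-events $\mathscr E_1,\mathscr E_2,\mathscr E_3$ are functions of disjoint blocks of the independent primitive variables $\{e_i^{\text{(bounce)}}, \tau_i^{\text{(ref)}}, n_i\}$ — note that $\mathscr E_3$ is phrased purely in terms of the $e_i^{\text{(bounce)}}$ and the \emph{constant} $t\|\nabla U\|^*$, not in terms of the (random) $\Xi_{z_{i-1}}$, which is exactly why independence is clean here. So $\P(\mathscr E) = \P(\mathscr E_1)\,\P(\mathscr E_2)\,\P(\mathscr E_3)$. Each factor is manifestly positive: $\P(\mathscr E_1)$ is a positive integral over a simplex-like region for i.i.d.\ $\text{Exp}(1)$ variables, $\P(\mathscr E_2) = \P(\|\mathcal N(0,I)\|\le 1)^2 > 0$, and $\P(\mathscr E_3) = \exp(-3t\|\nabla U\|^*) > 0$ since $\|\nabla U\|^* < \infty$ by continuity of $\nabla U$ on the compact ball. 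Crucially none of these depends on $z(0)$ beyond the stated constraints $\|x_0\|\le t$, $\|v_0\|\le 1$ (the constant $\|\nabla U\|^*$ depends only on $t$), giving the uniform lower bound.

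\textbf{Parts \ref{part:velocity-cond-dist} and \ref{part:times-cond-dist}.} These follow from the product structure just used. Conditioning on $\mathscr E$ only constrains: the $n_i$ ($i=1,2$) to $\|n_i\|\le 1$, leaving them i.i.d.\ truncated standard Gaussians, independent of everything else — this is \ref{part:velocity-cond-dist}, and $v_i = n_i$ on $\mathscr E$ by part \ref{part:no-bounces}. For \ref{part:times-cond-dist}: on $\mathscr E$ we have $\tau_1^{\text{(bounce)}}, \tau_2^{\text{(bounce)}} \ge t$, and since $\mathscr E_3$ (which controls the bounce side) is independent of $(\tau_1^{\text{(ref)}}, \tau_2^{\text{(ref)}})$, conditioning on $\mathscr E$ restricts $(\tau_1^{\text{(ref)}}, \tau_2^{\text{(ref)}})$ only through $\mathscr E_1$, i.e.\ to the region $\{\tau_1+\tau_2\le t < \tau_1+\tau_2+\tau_3^{\text{(ref)}}\}$; integrating out $\tau_3^{\text{(ref)}}\sim\text{Exp}(1)$ gives a density proportional to $e^{-\tau_1}e^{-\tau_2}e^{-(t-\tau_1-\tau_2)} = e^{-t}$, constant on the triangle $\{(\tau_1,\tau_2)\in(0,t)^2 : \tau_1+\tau_2\le t\}$ — hence uniform.

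\textbf{Main obstacle.} The only genuinely delicate point is the bootstrap in parts \ref{part:bounded-norms}–\ref{part:no-bounces}: one must be careful that the no-bounce conclusion on each segment is used only after the position bound on that segment has been established, so that the bound $\chi_{z_{i-1}}(s)\le\|\nabla U\|^*$ is legitimately in force. Sequencing the argument segment by segment (segment $0$ to $\tau_1$, then $\tau_1$ to $\tau_1+\tau_2$, then the third, stopping at $t$) handles this, since the speed bound $\|v\|\le1$ is available inductively from $\mathscr E_2$ and the hypothesis on $v_0$, and $2t < 3t$ leaves room to spare. Everything else is elementary bookkeeping with independent exponentials and Gaussians.
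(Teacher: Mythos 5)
Your proposal is correct and follows essentially the same route as the paper: the same preliminary bound $\Xi_{z_{i-1}}(t)\le t\|\nabla U\|^{*}$ on each segment (valid because a segment starting in $B_{2t}(0)$ with speed at most one stays in $B_{3t}(0)$ over a time interval of length $t$), the same segment-by-segment bootstrap for Parts 1--2, and the same product-structure observation for Parts 3--5. The only difference is that you spell out the computations for Parts 3--5 (the factorization of $\P(\mathscr{E})$ and the integration over $\tau_{3}^{\text{(ref)}}$ yielding the constant density $e^{-t}$ on the triangle), which the paper dismisses as immediate from the construction.
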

\begin{proof}
To prove Part 1 and 2, we will make use of this preliminary result:
on $\mathscr{E}_{3},$ $\|v_{i-1}\|\le1,\|x_{i-1}\|\le2t$ implies
$\tau_{i}^{\text{(bounce)}}\ge t$ for $i\in\left\{ 1,2,3\right\} $.
Indeed, $\|v_{i-1}\|\le1$ and $\|x_{i-1}\|\le2t$ imply that $\chi_{z_{i-1}}(t')\le\|\nabla U\|^{*}$
for all $t'\in\left[0,t\right]$. It follows that $\Xi_{z_{i-1}}(t)\le\|\nabla U\|^{*}t$.
Hence, by the continuity of $\Xi_{z_{i-1}}$ and standard properties
of the quantile function, $\tau_{i}^{\text{(bounce)}}=\Xi_{z_{i-1}}^{-1}(e_{i}^{\text{(bounce)}})\ge t$.

Part 1 and 2: by the assumption on $x_{0}$ and $v_{0}$ and our preliminary
result, $\tau_{1}^{\text{(bounce)}}\ge t$, and hence, combining with
$\mathscr{E}_{1}$ and $\mathscr{E}_{2}$ we have $\tau_{1}=\tau_{1}^{\text{(ref)}}\le t$
and $\|v_{1}\|=\|n_{1}\|\le1$. Also, by the triangle inequality,
$\|x_{1}\|\le\|x_{0}\|+\|x_{1}-x_{0}\|\le t+\tau_{1}^{\text{(ref)}}\le2t$.
We can therefore apply our preliminary result again and obtain $\tau_{2}^{\text{(bounce)}}\ge t$,
and hence, combining again with $\mathscr{E}_{1}$ and $\mathscr{E}_{2}$,
we have $\tau_{2}=\tau_{2}^{\text{(ref)}}$, $\tau_{1}+\tau_{2}\le t$,
$\|v_{2}\|=\|n_{2}\|\le1$. Applying the triangle inequality a second
time yields $\|x_{2}\|\le t+\tau_{1}^{\text{(ref)}}+\tau_{2}^{\text{(ref)}}\le2t$.
We apply our preliminary result one last time to obtain $\tau_{3}^{\text{\text{(bounce)}}}\ge t$.
Hence, if $\tau_{3}^{\text{(ref)}}>\tau_{3}^{\text{(bounce)}}$, $\tau_{1}+\tau_{2}+\tau_{3}=\tau_{1}^{\text{(ref)}}+\tau_{2}^{\text{(ref)}}+\tau_{3}^{\text{(bounce)}}\ge t$,
while if $\tau_{3}^{\text{(ref)}}\le\tau_{3}^{\text{(bounce)}}$,
we can use $\mathscr{E}_{1}$ to conclude that $\tau_{1}+\tau_{2}+\tau_{3}=\tau_{1}^{\text{(ref)}}+\tau_{2}^{\text{(ref)}}+\tau_{3}^{\text{(ref)}}\ge t$.
It follows from the triangle inequality that $\|x(t')\|\le2t$ for
all $t'\in\left[0,t\right]$.

Part 3, 4 and 5: these follow straightforwardly from the construction
of $\mathscr{E}$. 
\end{proof}
Note that the statement and proof of Part 4 is simple because $\mathscr{E}_{3}\in\sigma(e_{i}^{\text{(bounce)}}:i\in\left\{ 1,2,3\right\} )$.
In contrast, conditioning on conceptually simpler events of the form
$(\tau_{i}^{\text{(bounce)}}>t)\in\sigma\left(e_{i}^{\text{(bounce)}},z_{i-1}\right)$
leads to conditional distributions on $v_{k}$ which are harder to
characterize.

In the following, $B_{R}\left(x\right)$ denotes the $d$-dimensional
Euclidean ball of radius $R$ centered at $x$. 
\begin{lem}
\label{lemma:bounded-norm}For all $\epsilon,t>0$ such that $\epsilon\le t/6$,
and $v,v_{2}\in B_{1}(0)$, $x,x'\in B_{\epsilon}(0)$, $0\leq\tau_{1}\leq\frac{t}{6}$,
and $\frac{2t}{3}\leq\tau_{2}\leq\frac{5t}{6}$, we have $\left\Vert v_{1}\right\Vert \le1$,
where $v_{1}$ is defined by:

\begin{equation}
v_{1}=\frac{\left(x'-(t-\tau_{1}-\tau_{2})v_{2}\right)-\left(x+\tau_{1}v\right)}{\tau_{2}}.\label{eq:V1expression}
\end{equation}
\end{lem}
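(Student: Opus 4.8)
The plan is to bound the Euclidean norm of the numerator in \eqref{eq:V1expression} by the triangle inequality and then divide by the lower bound $\tau_2\ge 2t/3$. The only point requiring care is that estimating the three terms in the numerator entirely separately is too wasteful; instead one should group the two ``drift'' contributions together, exploiting the identity $\tau_1+(t-\tau_1-\tau_2)=t-\tau_2$.

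Concretely, first I would set $\tau_3:=t-\tau_1-\tau_2$ and record two elementary facts. Since $\tau_1\le t/6$ and $\tau_2\le 5t/6$ we have $\tau_1+\tau_2\le t$, hence $\tau_3\ge 0$; and since $\tau_2\ge 2t/3$ we have $\tau_1+\tau_3=t-\tau_2\le t/3$. I would then rewrite the numerator of \eqref{eq:V1expression} as $N:=(x'-x)-\tau_3 v_2-\tau_1 v$.

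Next, applying the triangle inequality and using $\|x\|,\|x'\|\le\epsilon$, $\|v\|,\|v_2\|\le1$, and $\tau_1,\tau_3\ge0$,
\[
\|N\|\;\le\;\|x\|+\|x'\|+\tau_3\|v_2\|+\tau_1\|v\|\;\le\;2\epsilon+\tau_1+\tau_3\;\le\;2\epsilon+\frac{t}{3}\;\le\;\frac{t}{3}+\frac{t}{3}\;=\;\frac{2t}{3},
\]
where the final step uses the hypothesis $\epsilon\le t/6$. Since $v_1=N/\tau_2$ and $\tau_2\ge 2t/3>0$, this yields $\|v_1\|=\|N\|/\tau_2\le(2t/3)/(2t/3)=1$, which is the claim.

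The main (indeed only) obstacle is recognising that one must combine $\tau_1$ and $\tau_3=t-\tau_1-\tau_2$ before bounding: estimating $\tau_3\le t/3$ and $\tau_1\le t/6$ separately would give only $\|N\|\le 2\epsilon+t/2\le 5t/6$ and hence $\|v_1\|\le 5/4$, which is not strong enough. Once this grouping is in place the argument is nothing more than the triangle inequality together with the stated range constraints on $\tau_1,\tau_2$ and $\epsilon$.
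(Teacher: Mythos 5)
Your proof is correct and is essentially the paper's own argument: the paper likewise applies the triangle inequality to the four terms of the numerator, and its bound $\epsilon+(t-\tau_{1}-\tau_{2})+\epsilon+\tau_{1}=2\epsilon+(t-\tau_{2})$ performs exactly the grouping of the two drift terms that you single out before dividing by $\tau_{2}\ge 2t/3$. Your remark that bounding $\tau_{1}$ and $t-\tau_{1}-\tau_{2}$ separately would only give $5/4$ is a nice observation but does not change the fact that the two proofs coincide.
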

\begin{proof}
By the triangle inequality:

\begin{eqnarray*}
\|v_{1}\| & \le & \frac{\|x'\|+(t-\tau_{1}-\tau_{2})\|v_{2}\|+\|x\|+\tau_{1}\|v\|}{\tau_{2}}\\
 & \le & \frac{\epsilon+(t-\tau_{1}-\tau_{2})+\epsilon+\tau_{1}}{\tau_{2}}\le1.
\end{eqnarray*}
\end{proof}
The next lemma formalizes the idea that refreshments allow the BPS
to explore the whole space. We use $\vol(A)$ to denote the Lebesgue
volume of a measurable set $A\in\R^{d}\times\R^{d}$ and $P_{t}\left(z,A\right)$
is the probability $z\left(t\right)\in A$ given $z\left(0\right)=z$
under the BPS dynamics. 
\begin{lem}
\label{lem:refreshkernel}
\begin{enumerate}
\item \label{part:uniform-bound}For all $\epsilon,t>0$ such that $\epsilon\le t/6$,
there exist $\delta>0$ such that for all $z=\left(x,v\right)\in B=B_{\epsilon}(0)\times B_{1}(0)$
and measurable set $A\subset\R^{d}\times\R^{d}$, 
\begin{equation}
P_{t}\left(z,A\right)\ge\delta\ \vol(A\cap B).\label{eq:ErgodicityPureRefreshmentLowerBound}
\end{equation}
\item \label{part:skeletons-irreducibility}For all $t_{0}>0$, $\z=(x,v)\in\R^{\d}\times\R^{\d}$
and open set $W\subset\R^{\d}\times\R^{\d}$ there exists an integer
$n\geq1$ such that $\trans{}_{nt_{0}}(z,W)>0$. 
\end{enumerate}
\end{lem}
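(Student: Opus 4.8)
The plan is to handle the two parts by different devices. For Part~\ref{part:uniform-bound} I would keep only the trajectories lying in the ``tractable'' event $\mathscr E$ of Lemma~\ref{lemma:tractable-set}. Fix $\epsilon\le t/6$ and an initial point $z=(x,v)\in B$; since $\|x\|\le\epsilon<t$ and $\|v\|\le1$, Lemma~\ref{lemma:tractable-set} applies, and on $\mathscr E$ there are exactly two refreshments and no bounce in $(0,t)$ (Part~\ref{part:no-bounces}), so
\[
z(t)=\Bigl(x+v\,\tau_1^{\text{(ref)}}+n_1\,\tau_2^{\text{(ref)}}+n_2\bigl(t-\tau_1^{\text{(ref)}}-\tau_2^{\text{(ref)}}\bigr),\ n_2\Bigr).
\]
By Parts~\ref{part:velocity-cond-dist}--\ref{part:times-cond-dist}, conditionally on $\mathscr E$ the quadruple $(\tau_1^{\text{(ref)}},\tau_2^{\text{(ref)}},n_1,n_2)$ is distributed as $\mathcal U(\{\tau_1+\tau_2\le t\})\otimes\psi_{\le1}\otimes\psi_{\le1}$, the factorisation being available precisely because $\mathscr E=\mathscr E_1\cap\mathscr E_2\cap\mathscr E_3$ is built from events on disjoint groups of the driving variables. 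Using $P_t(z,A)\ge\P(\mathscr E)\,\P(z(t)\in A\mid\mathscr E)$ and that $\P(\mathscr E)$ is a positive constant free of $z$ (Part~\ref{part:positive-pr}), everything reduces to a lower bound on the conditional density of $z(t)$ over $B$.

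To get that bound I would freeze $(\tau_1^{\text{(ref)}},\tau_2^{\text{(ref)}})=(s_1,s_2)$ and $n_2=v'$ and change variables in $n_1$ (an affine bijection of $\R^d$ with Jacobian $s_2^{-d}$), so the conditional density of $z(t)$ at $(x',v')$ equals
\[
p_{\le1}(v')\int_{\{s_1+s_2\le t\}}\frac{2}{t^2\,s_2^{d}}\;p_{\le1}\!\left(\frac{\bigl(x'-(t-s_1-s_2)v'\bigr)-\bigl(x+s_1v\bigr)}{s_2}\right)\mathrm{d}s_1\,\mathrm{d}s_2 ,
\]
with $p_{\le1}$ the density of $\psi_{\le1}$. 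Restricting the integral to the positive-area rectangle $0\le s_1\le t/6$, $2t/3\le s_2\le5t/6$ inside the simplex, the argument of the inner $p_{\le1}$ is exactly the vector $v_1$ of \eqref{eq:V1expression}, which Lemma~\ref{lemma:bounded-norm} (with initial data $(x,v)$, target $(x',v')$, $\tau_1=s_1$, $\tau_2=s_2$) places in $B_1(0)$; since $p_{\le1}$ is bounded below by a positive constant there and $\|v'\|\le1$, the whole expression exceeds a constant $\delta_0=\delta_0(d,t)>0$, uniformly over $(x',v')\in B$, giving \eqref{eq:ErgodicityPureRefreshmentLowerBound} with $\delta=\P(\mathscr E)\delta_0$. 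The bookkeeping I expect to be most delicate here is exactly the conditional independence of $(\tau_1^{\text{(ref)}},\tau_2^{\text{(ref)}})$ and $(n_1,n_2)$ given $\mathscr E$ (which is the whole point of assembling $\mathscr E$ from the three ``independent-coordinate'' events) and aligning the change of variables with the hypotheses of Lemma~\ref{lemma:bounded-norm}.

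For Part~\ref{part:skeletons-irreducibility} the plan is an explicit control argument. Pick $(x^\ast,v^\ast)\in W$ with $B_r(x^\ast,v^\ast)\subset W$ for some $r>0$, set $T:=nt_0$ for $n$ to be chosen, and steer the process along the piecewise-linear path that (i) moves with $v$ for a short time $s_1$, (ii) refreshes to the velocity $w_1:=\bigl(x^\ast-\delta v^\ast-x-s_1v\bigr)/(T-\delta-s_1)$ and travels until time $T-\delta$, (iii) refreshes to $w_2=v^\ast$, and (iv) travels with $v^\ast$ for the remaining time $\delta>0$; this path sits at $(x^\ast,v^\ast)$ at time $T$. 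By continuity of the BPS flow in the driving variables, $z(T)\in W$ whenever $(\tau_1^{\text{(ref)}},\tau_2^{\text{(ref)}},\tau_3^{\text{(ref)}},n_1,n_2)$ lie in small enough windows around the corresponding target values (with $\tau_3^{\text{(ref)}}$ bounded below by $\delta$) and no bounce occurs on any of the three segments. Each segment stays in a fixed compact set on which $\lambda$ is bounded, so ``no bounce along that segment'' has positive probability (the integrated intensity over a finite segment in a compact region is finite, as $U\in C^1$); the exponential refreshment times and the Gaussian refreshment velocities have strictly positive joint density near the prescribed values; and all these events are governed by the mutually independent families $\{\tau_i^{\text{(ref)}}\}$, $\{e_i^{\text{(bounce)}}\}$, $\{n_i\}$, so their intersection has positive probability and $P_{nt_0}(z,W)>0$. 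The one subtle point here is that bounces cannot be forbidden deterministically; one must instead observe that no bounce along a fixed finite segment confined to a compact set always has strictly positive probability, and everything else is routine.
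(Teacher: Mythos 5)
Your argument is correct, and in both parts it takes a genuinely different route from the paper's. For Part 1 the paper also conditions on the tractable event $\mathscr{E}$ of Lemma \ref{lemma:tractable-set}, but it then pushes the full $(2d+2)$-dimensional conditional law of $(\tau_1,v_1,\tau_2,v_2)$ forward through the map $F_z$ and disintegrates over its two-dimensional fibres via the coarea formula, bounding $JF_z^{2}$ above by $(8+t^2)^d$ and the $\mathcal{H}_2$-measure of the fibre portion parametrized by $0\le\tau_1\le t/6$, $2t/3\le\tau_2\le 5t/6$ below by $(t/6)^2$ using Lemma \ref{lemma:bounded-norm}. You instead freeze $(\tau_1^{\text{(ref)}},\tau_2^{\text{(ref)}},n_2)$ and perform an ordinary affine change of variables in $n_1$ alone (Jacobian factor $\tau_2^{-d}$, bounded on your rectangle), then integrate the resulting explicit conditional density over the same rectangle; Lemma \ref{lemma:bounded-norm} is invoked for exactly the same purpose, namely to place the argument of the inner truncated-Gaussian density inside $B_1(0)$ where that density is bounded below. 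This is more elementary---no Riemannian coarea formula and no determinant bound for $DF_z\,DF_z^{\top}$---and yields the same uniform $\delta$. For Part 2 the paper routes through Part 1: it decelerates (one refreshment to a unit-norm velocity), applies the Part 1 minorization over a long travel phase of length at least $6\epsilon$, and accelerates (one refreshment into $B_r(v^{\star})$), chaining the three bounds by Chapman--Kolmogorov; your control argument bypasses Part 1 entirely and in fact proves the stronger statement that $n=1$ already suffices. The one step you should make explicit is the independence bookkeeping for the no-bounce requirement: ``no bounce on segment $i$'' is the event $e_i^{\text{(bounce)}}\ge\Xi_{z_{i-1}}(\tau_i^{\text{(ref)}})$, which depends on the realized refreshment times and velocities and not on $e_i^{\text{(bounce)}}$ alone, so to factor the probability you must replace it by the smaller event $e_i^{\text{(bounce)}}\ge C_i$ for a deterministic $C_i$ dominating $\Xi_{z_{i-1}}(\tau_i^{\text{(ref)}})$ uniformly over the compact windows of driving variables---precisely the device the paper builds into $\mathscr{E}_3$ of Lemma \ref{lemma:tractable-set}. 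With that made explicit, your proof is complete.
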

\begin{proof}[Proof of Lemma \ref{lem:refreshkernel}]
Part 1: We have

\begin{eqnarray*}
P_{t}(z,A) & = & \E_{z}[\1_{A}(z(t))]\\
 & \ge & \P(\mathscr{E})\E_{z}[\1_{A}(z(t))|\mathscr{E}]\\
 & = & \P(\mathscr{E})\E_{z}[\1_{A}(x+\tau_{1}^{\text{(ref)}}v+\tau_{2}^{\text{(ref)}}v_{1}+(t-\tau_{1}^{\text{(ref)}}-\tau_{2}^{\text{(ref)}})v_{2}|\mathscr{E}],
\end{eqnarray*}

where we used Part \ref{part:no-bounces} of Lemma \ref{lemma:tractable-set},
which holds since $\|x\|\le t/6\le t$. By Lemma \ref{lemma:tractable-set},
Part \ref{part:positive-pr}, it is enough to show $\E_{z}[\1_{A}(z(t))|\mathscr{E}]\ge\delta'\vol(A\cap B)$
for some $\delta'>0$.

Using Lemma \ref{lemma:tractable-set}, Parts \ref{part:velocity-cond-dist}
and \ref{part:times-cond-dist}, we can rewrite the above conditional
expectation as: 
\[
\E_{z}[\1_{A}(z(t))|\mathscr{E}]=\iiiint\1_{A\cap B}\left(x+\tau_{1}v+\tau_{2}v_{1}+(t-\tau_{1}-\tau_{2})\ v_{2}\right)\psi_{\le1}(v_{1})\psi_{\le1}(v_{2})p(\tau_{1},\tau_{2})\ud v_{2}\ud v_{1}\ud\tau_{2}\ud\tau_{1}.
\]
We will use the coarea formula to reorganize the order of integration,
see e.g. Section 3.2 of \cite{Adler2011coareaREf}. We start by introducing
some notation.

For $C^{1}$ Riemannian manifolds $M$ and $N$ of dimension $m$
and $n$, a differentiable map $F:M\rightarrow N$ and $h$ a measurable
test function, the coarea formula can be written as: 
\begin{align*}
\int_{M}h(F(y)){\rm d}\mathcal{H}_{m}(y) & =\int_{N}{\rm d}\mathcal{H}_{n}(u)h(u)\int_{F^{-1}(u)}{\rm d}\mathcal{H}_{m-n}(x)\frac{1}{JF(x)}.
\end{align*}
Here $\mathcal{H}_{m}$, $\mathcal{H}_{n}$ and $\mathcal{H}_{m-n}$
denote the volume measures associated with the Riemannian metric on
$M$, $N$ and $F^{-1}(u)$ (with the induced metric of $M$). In
the above equations, $JF$ is a generalization of the determinant
of the Jacobian $JF:=\sqrt{\det g(\nabla f_{i},\nabla f_{j})}$ where
$g$ is the corresponding Riemannian metric and $f$ is the representation
of $F$ in local coordinates, see \cite{Adler2011coareaREf}. Here
$JF\text{=\ensuremath{\sqrt{\det DF\,DF^{\top}}}}$ where $DF$ is
defined in equation (\ref{eq:defDF}) below.

We apply the coarea formula to $M=\left\{ (\tau_{1},\tau_{2})\in(0,t)^{2}:\tau_{1}+\tau_{2}\leq t\right\} \times B_{1}(0)\times B_{1}(0)$,
$N=B_{t}(x)\times B_{1}(0)$, $F_{z}(\tau_{1},\v_{1},\tau_{2},\v_{2})=\left(x+\tau_{1}v+\tau_{2}v_{1}+(t-\tau_{1}-\tau_{2})v_{2},v_{2}\right),$
$m=2d+2$, $n=2d$ and obtain:

\begin{eqnarray}
\E_{z}[\1_{A}(z(t))|\mathscr{E}] & = & \int_{N}\ud z'\1_{A\cap B}(z')\underbrace{\int_{F_{z}^{-1}(z')}\frac{{\rm d}\mathcal{H}_{2}(\tau_{1},\tau_{2},\v_{1},\v_{2})\psi_{\le1}(v_{1})\psi_{\le1}(v_{2})p(\tau_{1},\tau_{2})}{JF_{z}(\tau_{1},\tau_{2},\v_{1},\v_{2})}}_{I_{z}(z')},\label{eq:after-coarea}
\end{eqnarray}
where $p(\tau_{1},\tau_{2})$ denotes the joint conditional density
of $\tau_{1},\tau_{2}|\mathscr{E}$ described in Part \ref{part:times-cond-dist}
of Lemma \ref{lemma:tractable-set}, and:

\begin{align}
DF_{z} & =\left(\begin{array}{cccc}
v-v_{2} & \tau_{2}I & v_{1}-v_{2} & \left(t-\tau_{1}-\tau_{2}\right)I\\
0 & 0 & 0 & I
\end{array}\right),\label{eq:defDF}\\
DF_{z}\ DF_{z}^{\top} & =\left(\begin{array}{cc}
(v-v_{2})(v-v_{2})^{\top}+\left(v_{1}-v_{2}\right)\left(v_{1}-v_{2}\right)^{\top}+\left(\tau_{2}^{2}+\left(t-\tau_{1}-\tau_{2}\right)^{2}\right)I & \left(t-\tau_{1}-\tau_{2}\right)I\\
\left(t-\tau_{1}-\tau_{2}\right)I & I
\end{array}\right),\nonumber \\
\nonumber \\
JF_{z} & =\sqrt{\det DF_{z}\ DF_{z}^{\top}}.\nonumber 
\end{align}

We define $\delta'=\inf\left\{ I_{z}(z'):z,z'\in B\right\} $, and
obtain the following inequality

\begin{eqnarray*}
\int_{N}\ud z'\1_{A\cap B}(z')I_{z}(z') & \ge & \delta'\int_{N}\ud z'\1_{A\cap B}(z')\\
 & = & \delta'\ \vol(A\cap B).
\end{eqnarray*}

It is therefore enough to show that $\delta'>0$. To do so, we will
derive the following bounds related to the integral in $I_{z}(z')$: 
\begin{enumerate}
\item \label{part:bounded-domain}Its domain of integration $F_{z}^{-1}(z')$
is guaranteed to contain a set of positive $\mathcal{H}_{2}$ measure. 
\item \label{part:bounded-integrand}Its integrand is bounded below by a
strictly positive constant. 
\end{enumerate}
To establish \ref{part:bounded-domain}, we let $z'=(x',v')=(x',v_{2})$,
and notice that rearranging 
\begin{align*}
x'=x+\tau_{1}v+\tau_{2}v_{1}+(t-\tau_{1}-\tau_{2})v_{2}\\
\end{align*}
yields an expression for $v_{1}$ given in (\ref{eq:V1expression}).
From Lemma \ref{lemma:bounded-norm}, it follows that

\[
C_{z}(z'):=\left\{ \left(\tau_{1},\frac{\left(x'-(t-\tau_{1}-\tau_{2})v'\right)-\left(x+\tau_{1}v\right)}{\tau_{2}},\tau_{2},v'\right):0\leq\tau_{1}\leq\frac{t}{6},\frac{2t}{3}\leq\tau_{2}\leq\frac{5t}{6}\right\} \subseteq F_{z}^{-1}(z'),
\]

and $\mathcal{H}_{2}(C_{z}(z'))\ge(t/6)^{2}$ since the surface of
the graph of a function is larger than the surface of the domain.

To establish \ref{part:bounded-integrand}, we start by analyzing
$JF_{z}$. Exploiting its block structure, we obtain: 
\begin{align*}
JF_{z}^{2} & =\det DF_{z}\ DF_{z}^{\top}\\
 & =\det((v-v_{2})(v-v_{2})^{\top}+\left(v_{1}-v_{2}\right)\left(v_{1}-v_{2}\right)^{\top}+\tau_{2}^{2}I)\\
 & \leq\left(\frac{\left\Vert v-v_{2}\right\Vert ^{2}+\left\Vert v_{1}-v_{2}\right\Vert ^{2}}{d}+\tau_{2}^{2}\right)^{d}\leq\left(8+t^{2}\right)^{d}
\end{align*}

Moreover, it follows from basic properties of the truncated Gaussian
distribution $\psi_{\le1}$ and of $p(\tau_{1},\tau_{2})$ that
\begin{eqnarray*}
\inf\left\{ \psi_{\le1}(v_{1})\psi_{\le1}(v_{2})p(\tau_{1},\tau_{2}):(\tau_{1},v_{1},\tau_{2},v_{2})\in M\right\}  & = & K>0.
\end{eqnarray*}

Combining (\ref{eq:after-coarea}) to \ref{part:bounded-domain} and
\ref{part:bounded-integrand}, we obtain,

\begin{eqnarray*}
I_{z}(z') & \ge & \frac{K}{(8+t^{2})^{d}}\int_{F_{z}^{-1}(z')}{\rm d}\mathcal{H}_{2}(\tau_{1},\tau_{2},v_{1},v_{2})\\
 & \ge & \frac{K}{(8+t^{2})^{d}}\mathcal{H}_{2}(C_{z}(z'))\\
 & \ge & \frac{K}{(8+t^{2})^{d}}\left(\frac{t}{6}\right)^{2},
\end{eqnarray*}
and hence, $\delta'>0$.

To prove Part 2 of the lemma, we divide the trajectory of length $nt_{0}$
into three ``phases'' namely a deceleration, travel, and acceleration
phases, or respective lengths $t_{\text{d}}+t_{\text{t}}+t_{\text{a}}=nt_{0}$
defined below (see also Supplement for a figure illustrating the notation
used in this part of the lemma). This allows us to use Part 1 of the
present lemma which requires velocities bounded in norm by one. We
require an acceleration phase since $W$ may not necessarily include
velocities of norms bounded by one.

First, we show that we decelerate with positive probability by time
$t_{\text{\text{d}}}=t_{0}$. Let $\mathscr{R}$ denote the event
that there is exactly one refreshment in the interval $(0,t_{0})$,
and that the refreshed velocity has norm bounded by one. Define also
$r_{0}=t_{0}\max\left\{ 1,\|v\|\right\} $, which bounds the distance
travelled in $(0,t_{0})$ for outcomes in $\mathscr{R}$, since bouncing
does not change the norm of the velocity. We have: 
\[
P_{t_{\text{0}}}(z,B_{r_{0}}(x)\times B_{1}(0))\ge\P_{z}(\mathscr{R})=(t_{0}\lambda^{\text{ref}})\exp(-t_{0}\lambda^{\text{ref}})\psi(B_{1}(0))=:K'>0.
\]
Next, to prepare applying the first part of the lemma, set 
\begin{align*}
\epsilon & =1+\max\left\{ \|x\|+r_{0},r'\right\} ,\\
r' & =\inf\left\{ \|x'\|:\text{\ensuremath{\exists} }v'\in\R^{d}\text{ with }(x',v')\in W\right\} .
\end{align*}
Informally, $\epsilon$ is selected so that the ball of radius $\epsilon$
around the origin contains both any position attained after deceleration,
as well as ball around a point $x^{\star}$ in $W$. Indeed, since
$W$ is open and that $\epsilon>r'$, there exists some $r>0,\left(x^{\star},v^{\star}\right)\in W$
such that $B_{r}(x^{\star})\times B_{r}(v^{\star})\subseteq W$ and
$B_{\epsilon}(0)\supseteq B_{r_{0}}(x)\cup B_{r}(x^{\star})$. Let
also $n=2+\left\lceil \frac{6\epsilon}{t_{0}}\right\rceil $ .

Of the total time $nt_{0}$, we reserve time $t_{\text{a}}=\min\left\{ t_{0},(2(\|v^{\star}\|/r+1))^{-1},r/2\right\} $
to accelerate. This time is selected so that (a) $t_{0}-t_{\text{a}}\ge0$,
and (b), if we start with a position in $B_{r/2}(x^{\star})$, move
with a velocity bounded in norm by $\bar{v}=\max\left\{ 1,\|v^{\star}\|+r\right\} $
for a time $\Delta t\le\min\left\{ (2(\|v^{\star}\|/r+1))^{-1},r/2\right\} $,
we have that the final position is in $B_{r}(x^{\star})$. This holds
since the distance travelled is bounded by $\bar{v}\Delta t\le r/2$.
Hence, by a similar argument as used for deceleration, we have, for
all $z''\in B_{r/2}(x^{\star})\times B_{1}(0)$, 
\[
P_{t_{\text{a}}}(z'',B_{r}(x^{\star})\times B_{r}(v^{\star}))\ge(t_{\text{a}}\lambda^{\text{ref}})\exp(-t_{\text{a}}\lambda^{\text{ref}})\psi(B_{r}(v^{\star}))=:K''>0.
\]

With these definitions, we can apply the first part of the present
lemma with $t_{\text{t}}=(n-2)t_{0}+(t_{0}-t_{\text{a}})\ge6\epsilon$
and obtain a constant $\delta>0$ such that $P_{t_{\text{t}}}(z',A)\ge\delta\vol(A\cap(B_{\epsilon}(0)\times B_{1}(0)))$
for all $z'\in B_{\epsilon}(0)\times B_{1}(0)$ and measurable set
$A\subseteq\R^{2d}$. We thus obtain:

\begin{eqnarray*}
P_{nt_{0}}(z,W) & \ge & P_{nt_{0}}(z,B_{r}(x^{\star})\times B_{r}(v^{\star}))\\
 & = & \int\int P_{t_{\text{d}}}(z,\ud z')P_{t_{\text{t}}}(z',\ud z'')P_{t_{\text{a}}}(z'',B_{r}(x^{\star})\times B_{r}(v^{\star}))\\
 & \ge & \int_{z'\in B_{r_{0}}(x)\times B_{1}(0)}\int_{z''\in B_{r/2}(x^{\star})\times B_{1}(0)}P_{t_{\text{d}}}(z,\ud z')P_{t_{\text{t}}}(z',\ud z'')P_{t_{\text{a}}}(z'',B_{r}(x^{\star})\times B_{r}(v^{\star}))\\
 & \ge & K'K''\delta\vol(B_{r/2}(x^{\star})\times B_{1}(0))>0.
\end{eqnarray*}
\end{proof}
We can now exploit this Lemma to prove Theorem \ref{thm:uniqueInvariant}. 
\begin{proof}[Proof of Theorem \ref{thm:uniqueInvariant}]
Suppose BPS is not ergodic, then it follows from standard results
in ergodic theory that there are two measures $\mu_{1}$ and $\mu_{2}$
such that $\mu_{1}\perp\mu_{2}$ and $\mu_{i}P_{t_{0}}=\mu_{i}$;
see e.g. \citep[Theorem 1.7]{Hairer2010ErgodicLectureNotes}. Thus
there is a measurable set $A\subset\R^{d}\times\R^{d}$ such that
\begin{equation}
\mu_{1}(A)=\mu_{2}(A^{c})=0.\label{eq:ergodicitySingularSet}
\end{equation}
Let $A_{1}=A,$ $A_{2}=A^{c}$, and $B=B_{1}(0)\times B_{1}(0)$.
Because of Lemma \ref{lem:refreshkernel} Part \ref{part:skeletons-irreducibility}
and Lemma 2.2 of \cite{Hairer2010ErgodicLectureNotes} the support
of the $\mu_{i}$ is $\R^{\d}\times\R^{\d}$. Thus, $\mu_{i}(B)>0$
for $i\in\left\{ 1,2\right\} $. At least one of $A_{1}\cap B$ or
$A_{2}\cap B$ has a positive Lebesgue volume, hence, we can denote
by $i^{\star}\in\left\{ 1,2\right\} $ an index satisfying $\vol(A_{i^{\star}}\cap B)>0$.
Now, we pick $t=6$ and obtain from Lemma \ref{lem:refreshkernel}
Part \ref{part:uniform-bound} that there is some $\delta>0$ such
that $P_{t}(z,A_{i^{\star}})\ge\delta\ \vol(A_{i^{\star}}\cap B)$
for all $z\in B$. By invariance we have 
\begin{eqnarray*}
\mu_{i^{\star}}(A_{i^{\star}}) & = & \int\mu_{i^{\star}}\left(\ud z\right)P_{t}(z,A_{i^{\star}})\\
 & \ge & \int_{B}\mu_{i^{\star}}\left(\ud z\right)P_{t}(z,A_{i^{\star}})\\
 & \ge & \int_{B}\mu_{i^{\star}}\left(\ud z\right)\delta\ \vol(A_{i^{\star}}\cap B)\\
 & = & \mu_{i^{\star}}(B)\delta\ \vol(A_{i^{\star}}\cap B)>0.
\end{eqnarray*}

This contradicts that $\mu_{i}(A_{i})=0$ for $i\in\left\{ 1,2\right\} $.

The law of large numbers then follows by Birkhoff's pointwise ergodic
theorem; see e.g. \citep[Theorem 2.30, Section 2.6.4]{birkhoff}. 
\end{proof}

\section{Proof of Proposition \ref{Proposition:infinitenumberinvariantmeasures.}}

The dynamics of the BPS can be lumped into a two-dimensional Markov
process involving only the radius $r\left(t\right)=\left\Vert x\left(t\right)\right\Vert $
and $m\left(t\right)=\left\langle x\left(t\right),v\left(t\right)\right\rangle /\left\Vert x\left(t\right)\right\Vert $
for any dimensionality $d\ge2$. The variable $m\left(t\right)$ can
be interpreted (via $\arccos(m\left(t\right))$) as the angle between
the particle position $x\left(t\right)$ and velocity $v\left(t\right)$.
Because of the strong Markov property we can take $\tau_{1}=0$ without
loss of generality and let $t$ be some time between the current event
and the next, yielding: 
\begin{eqnarray}
r\left(t\right) & = & \sqrt{\left\langle x\left(0\right)+v\left(0\right)\cdot t,x\left(0\right)+v\left(0\right)\cdot t\right\rangle }=\sqrt{r_{\text{0}}^{2}+2m\left(0\right)r\left(0\right)t+t^{2}}\label{eq:IsoNormalRadialDeterministic}\\
m\left(t\right) & = & \frac{\left\langle x\left(0\right)+v\left(0\right)\cdot t,v\left(0\right)\right\rangle }{\left\Vert x\left(0\right)+v\left(0\right)\cdot t\right\Vert }=\frac{m\left(0\right)r\left(0\right)+t}{r\left(t\right)}\nonumber 
\end{eqnarray}
If there is a bounce at time $t$, then $r\left(t\right)$ is not
modified but $m\left(t\right)=-m\left(t\right)$.The bounce happens
with intensity $\Int(x+tv,v)=\max\left(0,\left\langle x+vt,v\right\rangle \right)$.
These processes can also be written as an Stochastic Differential
Equation (SDE) driven by a jump process whose intensity is coupled
to its position. This is achieved by writing the deterministic dynamics
given in \eqref{eq:IsoNormalRadialDeterministic} between events as
the following Ordinary Differential Equation (ODE): 
\begin{align*}
\frac{{\rm d}}{{\rm d}t}r\left(t\right) & =\frac{2m\left(t\right)r\left(t\right)}{2r\left(t\right)}=m\left(t\right)\\
\frac{{\rm d}}{{\rm d}t}m\left(t\right) & =\frac{r\left(t\right)-\left(r\left(t\right)m\left(t\right)\right)m\left(t\right)}{\left(r\left(t\right)\right)^{2}}=\frac{1-\left(m\left(t\right)\right)^{2}}{r\left(t\right)}.
\end{align*}
Taking the bounces into account turns this ODE into an SDE with 
\begin{align}
{\rm d}r\left(t\right) & =m\left(t\right){\rm d}t\nonumber \\
{\rm d}m\left(t\right) & =\frac{1-\left(m\left(t\right)\right)^{2}}{r\left(t\right)}{\rm d}t-2m\left(t\right)dN_{t}.\label{eq:normalSDE}
\end{align}
where $N_{t}$ is the counting process associated with a PP with intensity
$\max\left(0,r\left(t\right)m\left(t\right)\right)$.

Now consider the push forward measure of $\mathcal{N}\left(0,\frac{1}{2}I_{k}\right)\otimes\mathcal{U}(\mathcal{S}^{k-1})$
under the map $\left(x,v\right)\mapsto\left(\left\Vert x\right\Vert ,\left\langle x,v\right\rangle /\left\Vert x\right\Vert \right)$
where $\mathcal{U}(\mathcal{S}^{k-1})$ is the uniform distribution
on $\mathcal{S}^{k-1}$. This yields the collection of measures with
densities $f_{k}(r,m)$. One can check that $f_{k}(r,m)$ is invariant
for \eqref{eq:normalSDE} for all $k\ge2$.

\section{Bayesian logistic regression for large datasets\label{Appendix:Logistic}}

\subsection{Bounds on the intensity\label{subsec:Bound-on-intensity}}

We derive here a datapoint-specific upper bound $\bar{\chi}^{[r]}$
to $\chi^{[r]}(t)$. First, we need to compute the gradient for one
datapoint:\global\long\def\logistic{\textrm{logistic}}
 
\[
\nabla U^{[r]}(x)=\iota_{r}(\logistic\dotprod{\iota_{r}}{x}-y_{r}),
\]
where: 
\[
\logistic(a)=\frac{e^{a}}{1+e^{a}}.
\]

We then consider two sub-cases depending on $y_{r}=0$ or $y_{r}=1$.
Suppose first $y_{r}=0$, and let $x(t)=x+tv$ 
\begin{align*}
\chi^{[r]}(t) & =\max\{0,\dotprod{\nabla U_{r}(x(t))}{v}\}\\
 & =\max\left\{ 0,\sum_{k=1}^{d}\iota_{r,k}v_{k}\logistic\dotprod{\iota_{r}}{x(t)}\right\} \\
 & \le\sum_{k=1}^{d}\1[v_{k}>0]\iota_{r,k}v_{k}\logistic\dotprod{\iota_{r}}{x(t)}\\
 & \le\underbrace{\sum_{k=1}^{d}\1[v_{k}>0]\iota_{r,k}v_{k}}_{\bar{\chi}^{[r]}}
\end{align*}

Similarly, we have for $y_{r}=1$ 
\begin{align*}
\chi^{[r]}(t) & =\max\left\{ 0,\sum_{k=1}^{d}\iota_{r,k}v_{k}(\logistic\dotprod{\iota_{r}}{x(t)}-1)\right\} \\
 & =\max\left\{ 0,\sum_{k=1}^{d}\iota_{r,k}(-v_{k})(1-\logistic\dotprod{\iota_{r}}{x(t)})\right\} \\
 & \le\sum_{k=1}^{d}\1[v_{k}<0]\iota_{r,k}(-v_{k})(1-\logistic\dotprod{\iota_{r}}{x(t)})\\
 & =\sum_{k=1}^{d}\1[v_{k}<0]\iota_{r,k}|v_{k}|(1-\logistic\dotprod{\iota_{r}}{x(t)})\\
 & \le\underbrace{\sum_{k=1}^{d}\1[v_{k}<0]\iota_{r,k}|v_{k}|}_{\bar{\chi}^{[r]}}.
\end{align*}

Combining these terms we obtain
\[
\bar{\chi}^{[r]}=\sum_{k=1}^{d}\1[v_{k}(-1)^{y_{r}}\ge0]\iota_{r,k}|v_{k}|.
\]
When implementing Algorithm \ref{alg:local-bps-thinning}, we need
to bound $\sum_{r=1}^{R}\chi^{[r]}(t)$. We have 
\begin{align*}
\bar{\chi} & =\sum_{r=1}^{R}\bar{\chi}^{[r]}\geq\sum_{r=1}^{R}\chi^{[r]}(t)\\
 & =\sum_{r=1}^{R}\sum_{k=1}^{d}\1[v_{k}(-1)^{y_{r}}\ge0]\iota_{r,k}\ |v_{k}|\\
 & =\sum_{k=1}^{d}|v_{k}|\sum_{r=1}^{R}\1[v_{k}(-1)^{y_{r}}\ge0]\iota_{r,k}\\
 & =\sum_{k=1}^{d}|v_{k}|\ \iota_{k}^{(\1[v_{k}<0])},
\end{align*}

where
\[
\iota_{k}^{(c)}:=\sum_{r=1}^{R}\1[(-1)^{c+y_{r}}\ge0]\iota_{r,k}.
\]
The bound $\bar{\chi}$ is constant between bounce events and only
depends on the magnitude of $v$. If we further assume that we use
restricted refreshment then this bound is valid for any $t>0$ allowing
us to implement Algorithm \ref{alg:local-bps-thinning} using \eqref{eq:remupper-1}
or \eqref{eq:remupper-2}.

\subsection{Sampling the thinned factor\label{subsec:Sampling-the-thinned}}

We show here how to implement Step \ref{enu:Sample-which-factor}
of Algorithm \ref{alg:local-bps-thinning} without enumerating over
the $R$ datapoints. We begin by introducing some required pre-computed
data structures. The pre-computation is executed only once at the
beginning of the algorithm, so its running time, $O(R\log R)$ is
considered negligible (the number of bouncing events is assumed to
be greater than $R$). For each dimensionality $k$ and class label
$c$, consider the categorical distribution with the following probability
mass function over the datapoints: 
\[
\mu_{k}^{(c)}(r)=\frac{\iota_{r,k}\1[y_{r}=c]}{\iota_{k}^{(c)}}.
\]
This is just the distribution over the datapoints that have the given
label, weighted by the covariate $k$. An alias sampling data-structure
\citep[Section 3.4]{Devroye1986} is computed for each $k$ and $c$.
This pre-computation takes total time $O(R\log R)$. This allows subsequently
to sample in time $O(1)$ from the distributions $\mu_{k}^{(c)}$.

We now show how this pre-computation is used to to implement Step
\ref{enu:Sample-which-factor} of Algorithm \ref{alg:local-bps-thinning}.
We denote the probability mass function we want to sample from by
\[
q(r)=\frac{\bar{\chi}^{[r]}}{\bar{\chi}}.
\]
To sample this distribution efficiently, we construct an artificial
joint distribution over both datapoints and covariate dimension indices
\[
q(r,k)=\frac{\1[v_{k}(-1)^{y_{r}}\ge0]\iota_{r,k}\ |v_{k}|}{\bar{\chi}}.
\]
\global\long\def\tauk{q_{\textrm{k}}}
 \global\long\def\taurk{q_{\textrm{r}|\textrm{k}}}
 We denote by $\tauk(k)$, respectively $\taurk(r|k)$, the associated
marginal, respectively conditional distribution. By construction,
we have 
\[
\sum_{k=1}^{d}q(r,k)=q(r).
\]
It is therefore enough to sample $(r,k)$ and to return $r$. To do
so, we first sample (a) $k\sim\tauk(\cdot)$ and then (b) sample $r|k\sim\taurk(\cdot|k)$.

For (a), we have 
\[
\tauk(k)=\frac{|v_{k}|\ \iota_{k}^{(\1[v_{k}<0])}}{\bar{\chi}},
\]
so this sampling step again does not require looping over the datapoints
thanks the pre-computations described earlier.

For (b), we have 
\[
\taurk(r|k)=\mu_{k}^{(\1[v_{k}<0])}(r),
\]
and therefore this sampling step can be computed in $O(1)$ thanks
to the pre-computed alias sampling data structure.

\subsection{Algorithm description\label{appendix:logisticalg}}

Algorithm \ref{alg:local-bps-logistic} contains a detailed implementation
of the local BPS with thinning for the logistic regression example
(Example \ref{subsec:Logisticregression}).

\begin{algorithm}[H]
\protect\caption{Local BPS algorithm for Logistic Regression with Large Datasets ~\label{alg:local-bps-logistic}}

\begin{enumerate}
\item Precompute the alias tables $\mu_{k}^{(c)}(r)$ for $k=1,\dots,d$,
$c\in\{0,1\}$ in order to sample from $\taurk(\cdot|k)$. 
\item Initialize $\left(x^{\left(0\right)},v^{\left(0\right)}\right)$ arbitrarily
on $\mathbb{R}^{d}\times\mathbb{R}^{d}$. 
\item Initialize the global clock $T\leftarrow0$. 
\item Initialize $\bar{T}\leftarrow\triangle$. (time until which local
upper bounds are valid) 
\item Compute local-in-time upper bound on the prior factor as $\bar{\chi}_{\text{prior }}=\sigma^{-2}\max\left(0,\left\langle x^{(0)}+v^{(0)}\Delta,v^{(0)}\right\rangle \right)$
(notice the rate associated with the prior is monotonically increasing). 
\item While more events $i=1,2,\ldots$ requested do
\begin{enumerate}
\item Compute the local-in-time upper bound on the data factors in $\mathcal{O}(d)$
\[
\bar{\chi}=\sum_{k=1}^{d}|v_{k}^{\left(i-1\right)}|\ \iota_{k}^{(\1[v_{k}^{\left(i-1\right)}<0])}.
\]
\item Sample $\tau\sim\mathrm{Exp\left(\bar{\chi}_{\text{prior }}+\bar{\chi}+\lambda^{\mathrm{ref}}\right)}$. 
\item If $\left(T+\tau>\bar{T}\right)$ then
\begin{enumerate}
\item $x^{(i)}\gets x^{(i-1)}+v^{(i-1)}(\bar{T}-T)$. 
\item $v^{(i)}\gets v^{(i-1)}$. 
\item Compute the local-in-time upper bound $\bar{\chi}_{\text{prior }}=\sigma^{-2}\max\left(0,\left\langle x^{(i)}+v^{(i)}\Delta,v^{(i)}\right\rangle \right).$ 
\item Set $T\leftarrow\bar{T}$, $\bar{T}$$\leftarrow\bar{T}+\triangle.$ 
\end{enumerate}
\item Else
\begin{enumerate}
\item $x^{(i)}\gets x^{(i-1)}+v^{(i-1)}\tau$. 
\item Sample $j$ from $\text{Discrete}(\frac{\bar{\chi}}{\bar{\chi}_{\text{prior }}+\bar{\chi}+\lambda^{\mathrm{ref}}},\frac{\lambda^{\mathrm{ref}}}{\bar{\chi}_{\text{prior }}+\bar{\chi}+\lambda^{\mathrm{ref}}},\frac{\bar{\chi}_{\text{prior }}}{\bar{\chi}_{\text{prior }}+\bar{\chi}+\lambda^{\mathrm{ref}}})$
. 
\item If $j=1$
\begin{enumerate}
\item Sample $k$ according to $\tauk(k)=|v_{k}^{\left(i-1\right)}|\ \iota_{k}^{(\1[v_{k}^{\left(i-1\right)}<0])}/\bar{\chi}$. 
\item Sample $r$ $\sim\taurk(\cdot|k)$ using the precomputed alias table. 
\item If $V<\frac{\max\left(0,\left\langle \nabla U^{[r]}(x^{(i)}),v^{\left(i-1\right)}\right\rangle \right)}{\bar{\chi}^{[r]}}$
where $V\sim\mathcal{U}\left(0,1\right)$ .\\
 $v^{(i)}\gets R_{r}(x^{(i)})v^{(i-1)}$ where $R_{r}$ is the bouncing
operator associated with the $r$-th data item .\\
 
\item Else $v^{(i)}\gets v^{(i-1)}$. 
\end{enumerate}
\item If $j=2$
\begin{enumerate}
\item $v^{(i)}\sim\mathcal{N}(0_{d},I_{d})$. 
\end{enumerate}
\item If $j=3$
\begin{enumerate}
\item If $V<\frac{\sigma^{-2}\max\left(0,\left\langle x^{(i)},v^{(i-1)}\right\rangle \right)}{\chi_{\text{prior}}}$
where $V\sim\mathcal{U}\left(0,1\right)$.\\
 $v^{(i)}\gets R_{\text{prior}}(x^{(i)})v^{(i-1)}$ where $R_{\text{prior}}$
is the bouncing operator associated with the prior. \\
 
\item Else $v^{(i)}\gets v^{(i-1)}$. 
\end{enumerate}
\item Compute the local-in-time upper bound 
\[
\bar{\chi}_{\text{prior }}=\sigma^{-2}\max\left(0,\left\langle x^{(i)}+v^{(i)}\left(\bar{T}-T-\tau\right),v^{(i)}\right\rangle \right).
\]
\item Set $T\leftarrow T+\tau$. 
\end{enumerate}
\end{enumerate}
\end{enumerate}
\end{algorithm}

\pagebreak{}
\begin{center}
\textbf{\large{}Supplemental Material: The Bouncy Particle Sampler
}\\
\textbf{\large{} A Non-Reversible Rejection-Free Markov Chain Monte
Carlo Method}{\large{} }
\par\end{center}{\large \par}

\setcounter{equation}{0} \setcounter{figure}{0} \setcounter{table}{0}
\setcounter{page}{1} \makeatletter \global\long\def\theequation{S\arabic{equation}}
 \global\long\def\thefigure{S\arabic{figure}}
 \global\long\def\bibnumfmt#1{[S#1]}
 \global\long\def\citenumfont#1{S#1}
\section{Illustration for Lemma 3}
The following figure illustrates the different phases considered in the proof of Lemma 3.

\begin{figure}
\begin{centering}
\includegraphics[width=0.7\textwidth]{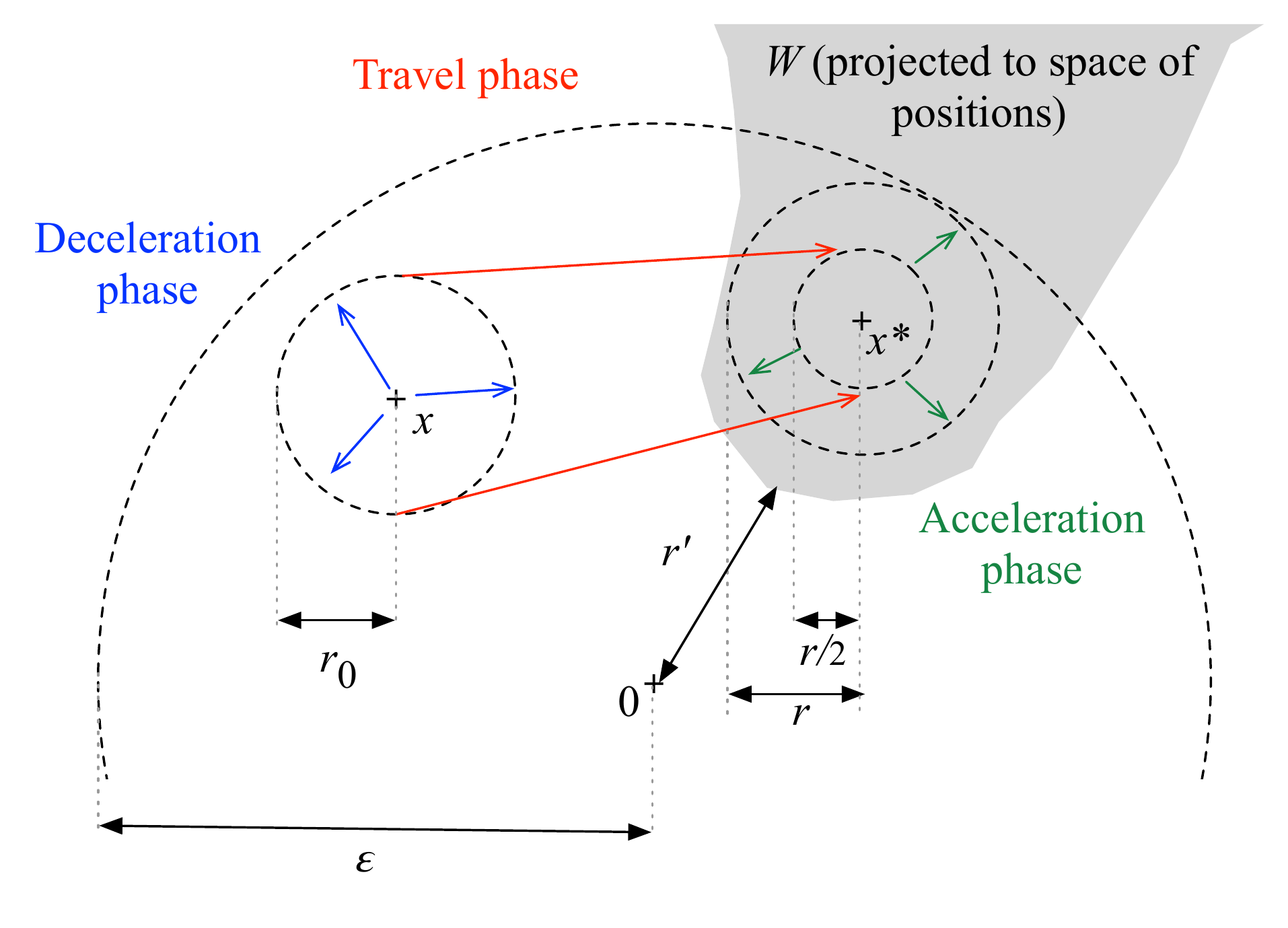}

\par\end{centering}
\caption{\label{fig:lemma3} TIllustration of the notation used in the proof of Lemma 3, part (2) (this figure is best viewed in colour). We consider trajectories divided into three phases, namely deceleration (blue), travel (red), and acceleration (green). The ball around the initial point, $x$, is used when bringing the velocity down to a norm bounded by one (deceleration phase). This allows us to use Lemma 3, part (1) in order to bound the probability of travel to a ball centered at $x^\star$ of radius $r/2$ (travel phase). Finally, since $W$ may not contain points with velocity bounded by one, the ball centered at $x^\star$ of radius $r$ is used to reach velocities contained in $W$ while ensuring position coordinates are also still in $W$.  }
\end{figure}

\section{Direct proof of invariance\label{sec:Invariance}}

Let $\mu_{t}$ be the law of $z\left(t\right)$. In the following,
we prove invariance by explicitly verifying that the time evolution
of the density $\frac{{\rm d}\mu_{t}}{{\rm d}t}=0$ is zero if the
initial distribution $\mu_{0}$ is given by $\rho(z)=\pi\left(x\right)\psi\left(v\right)$
in Proposition 1. This is achieved by deriving the forward Kolmogorov
equation describing the evolution of the marginal density of the stochastic
process. For simplicity, we start by presenting the invariance argument
when $\lambda^{\text{{ref}}}=0$.

\textbf{Notation and description of the algorithm}. We denote a pair
of position and velocity by $z=\left(x,v\right)\in\R^{\d}\times\R^{\d}$
and we denote translations by $\Phi_{t}(z)=(\Phi_{t}^{\mathrm{pos}}(z),\Phi_{t}^{\mathrm{dir}}(z))=\left(x+vt,v\right)$.
The time of the first bounce coincides with the first arrival $T_{1}$
of a PP with intensity $\chi(t)=\mathrm{\lambda}(\Phi_{t}(z))$ where:
\begin{equation}
\mathrm{\lambda}(z)=\max\left\{ 0,\left\langle \nabla U\left(x\right),v\right\rangle \right\} .\label{eq:intensity}
\end{equation}
It follows that the probability of having no bounce in the interval
$[0,t]$ is given by: 
\begin{equation}
\mathrm{N}\mathrm{o}_{t}(z)=\exp\left(-\int_{0}^{t}\lambda(\Phi_{s}(z)){\rm d}s\right),\label{eq:poisson-no-event}
\end{equation}
and the density of the random variable $T_{1}$ is given by: 
\begin{eqnarray}
q(t_{1};z)= & \mathbf{1}[t_{1}>0]\frac{{\rm d}}{{\rm d}t_{1}}\left(1-\mathrm{N}o_{t_{1}}(z)\right)\\
= & \mathbf{1}[t_{1}>0]\mathrm{N}\mathrm{o}_{t_{1}}(z)\mathrm{\lambda}(\Phi_{t_{1}}(z)).\label{eq:density}
\end{eqnarray}

If a bounce occurs, then the algorithm follows a translation path
for time $T_{1}$, at which point the velocity is updated using a
bounce operation $C(z)$, defined as: 
\begin{equation}
C\left(z\right)=\left(x,R\left(x\right)v\right)\label{eq:collisionoperator}
\end{equation}
where 
\begin{equation}
R\left(x\right)v=v-2\frac{\left\langle \nabla U\left(x\right),v\right\rangle \nabla U\left(x\right)}{\left\Vert \nabla U\left(x\right)\right\Vert ^{2}}.\label{eq:updatedirection}
\end{equation}
The algorithm then continues recursively for time $t-T_{1}$, in the
following sense: a second bounce time $T_{2}$ is simulated by adding
to $T_{1}$ a random increment with density $q(\cdot;C\circ\Phi_{t_{1}}(z))$.
If $T_{2}>t$, then the output of the algorithm is $\Phi_{t-t_{1}}\circ C\circ\Phi_{t_{1}}(z)$,
otherwise an additional bounce is simulated, etc. More generally,
given an initial point $z$ and a sequence $\mathbf{t}=(t_{1},t_{2},\dots)$
of bounce times, the output of the algorithm at time $t$ is given
by: 
\begin{equation}
\Psi_{\mathbf{t},t}\left(z\right)=\begin{cases}
\Phi_{t}(z) & \mathrm{if}~t_{1}>0~\mathrm{or~}\mathbf{t}=(\thinspace),\\
\Psi_{\mathbf{t'},t-t_{1}}\left(z\right)\circ C\circ\Phi_{t_{1}}(z) & \mathrm{otherwise,}
\end{cases}\label{eq:recursion}
\end{equation}
where $(\thinspace)$ denotes the empty list and $\mathbf{t}'$ the
suffix of $\mathbf{t}$: $\mathbf{t}'=(t_{2},t{}_{3},\dots)$. As
for the bounce times, they are distributed as follows: 
\begin{eqnarray}
T_{1} & \sim & q(\;\cdot\;;z)\\
T_{i}-T_{i-1}|T_{1:i-1} & \sim & q\Big(\;\cdot\;;\underbrace{\Psi_{T_{1:i-1},T_{i-1}}(z)}_{\text{Pos. after collision \ensuremath{i-1}}}\Big),\;\;\;i\in\{2,3,4,\dots\}
\end{eqnarray}
where $T_{1:i-1}=\left(T_{1},T_{2},\ldots,T_{i-1}\right).$

\textbf{Decomposition by the number of bounces}. Let $h$ denote an
arbitrary non-negative measurable test function. We show how to decompose
expectations of the form $\mathbb{E}[h(\Psi_{\mathbf{T},t}(z))]$
by the number of bounces in the interval $(0,t)$. To do so, we introduce
a function $\#\mathrm{Col}_{t}(\mathbf{t})$, which returns the number
of bounces in the interval $(0,t)$: 
\begin{equation}
\#\mathrm{Col}_{t}(\mathbf{t})=\min\left\{ n\ge1:t_{n}>t\right\} -1.
\end{equation}
From this, we get the following decomposition: 
\begin{eqnarray}
\mathbb{E}[h(\Psi_{\mathbf{T},t}(z))] & = & \mathbb{E}[h(\Psi_{\mathbf{T},t}(z))\sum_{n=0}^{\infty}\mathbf{1}[\#\mathrm{Col}_{t}(\mathbf{T})=n]]\\
 & = & \sum_{n=0}^{\infty}\mathbb{E}[h(\Psi_{\mathbf{T},t}(z))\mathbf{1}[\#\mathrm{Col}_{t}(\mathbf{T})=n]].\label{eq:decomposition}
\end{eqnarray}
On the event that no bounce occurs in the interval $[0,t)$, i.e.
$\#\mathrm{Col}_{t}(\mathbf{T})=0$, the function $\Psi_{\mathbf{T},t}(z)$
is equal to $\Phi_{t}(z)$, therefore: 
\begin{eqnarray}
\mathbb{E}[h(\Psi_{\mathbf{T},t}(z))\mathbf{1}[\#\mathrm{Col}_{t}(\mathbf{T})=0]] & = & h(\Phi_{t}(z))\mathbb{P}(\#\mathrm{Col}_{t}(\mathbf{T})=0)\\
 & = & h(\Phi_{t}(z))\mathrm{N}\mathrm{o}_{t}(z).\label{eq:first-case}
\end{eqnarray}

\smallskip{}
 Indeed, on the event that $n\ge1$ bounces occur, the random variable
$h(\Phi_{t}(z))$ only depends on a finite dimensional random vector,
$(T_{1},T_{2},\dots,T_{n})$, so we can write the expectation as an
integral with respect to the density $\widetilde{q}(t_{1:n};t,z)$
of these variables: {\footnotesize{}{}{} 
\begin{eqnarray}
 &  & \mathbb{E}[h(\Psi_{\mathbf{T},t}(z))\mathbf{1}[\#\mathrm{Col}_{t}(\mathbf{T})=n]]\\
 &  & =\mathbb{E}\left[h(\Psi_{\mathbf{T},t}(z))\mathbf{1}[0<T_{1}<\dots<T_{n}<t<T_{n+1}]\right]\nonumber \\
 &  & =\idotsint{}_{0<t_{1}<\dots<t_{n}<t<t_{n+1}}h(\Psi_{t_{1:n},t}(z))q(t_{1};z)\prod_{i=2}^{n+1}q(t-t_{i-1};\Psi_{t_{1:i-1},t_{i-1}}(z)){\rm d}t_{1:n+1}\nonumber \\
 &  & =\idotsint{}_{0<t_{1}<\dots<t_{n}<t}h(\Psi_{t_{1:n},t}(z))\widetilde{q}(t_{1:n};t,z){\rm d}t_{1:n},\label{eq:second-case}
\end{eqnarray}
} where: 
\[
\widetilde{q}(t_{1:n};t,z)=q(t_{1};z)\times\begin{cases}
\mathrm{N}\mathrm{o}_{t-t_{1}}(\Phi_{t_{1}}(z)) & \mathrm{if}~n=1\\
\mathrm{N}\mathrm{o}_{t-t_{n}}(\Phi_{t_{1:n},t_{n}}(z)) & \prod_{i=2}^{n}q\left(t_{i}-t_{i-1};\Psi_{t_{1:i-1},t_{i-1}}(z)\right)\ \mathrm{if}~n\geq2.
\end{cases}
\]

\smallskip{}
 To include Equations~(\ref{eq:first-case}) and (\ref{eq:second-case})
under the same notation, we define $t_{1:0}$ to the empty list, $(\thinspace)$,
$\widetilde{q}((\thinspace);t,z)=\mathrm{N}\mathrm{o}_{t}(z)$, and
abuse the integral notation so that for all $n\in\{0,1,2,\dots\}$:
\begin{equation}
\mathbb{E}[h(\Psi_{\mathbf{T},t}(z))\mathbf{1}[\#\mathrm{Col}_{t}(\mathbf{T})=n]]=\idotsint{}_{0<t_{1}<\dots<t_{n}<t}\thinspace h(\Psi_{t_{1:n},t}(z))\widetilde{q}(t_{1:n};t,z){\rm d}t_{1:n}.\label{eq:integral-1}
\end{equation}
\textbf{Marginal density. }Let us fix some arbitrary time $t>0$.
We seek a convenient expression for the marginal density at time $t$,
$\mu_{t}(z)$, given an initial vector $Z\sim\rho$, where $\rho$
is the hypothesized stationary density $\rho(z)=\pi\left(x\right)\psi\left(v\right)$
on $Z$.%
{} To do so, we look at the expectation of an arbitrary non-negative
measurable test function $h$: {\footnotesize{}{}{} 
\begin{eqnarray}
\mathbb{E}[h(\Psi_{\mathbf{T},t}(Z))] & = & \mathbb{E}\Big[\mathbb{E}[h(\Psi_{\mathbf{T},t}(Z))|Z]\Big]\label{eq:integral}\\
 & = & \sum_{n=0}^{\infty}\mathbb{E}\Big[\mathbb{E}[h(\Psi_{\mathbf{T},t}(Z))\mathbf{1}[\#\mathrm{Col}_{t}(\mathbf{T})=n]|Z]\Big]\\
 & = & \sum_{n=0}^{\infty}\int_{\mathcal{Z}}\rho(z)\idotsint{}_{0<t_{1}<\dots<t_{n}<t}h(\Psi_{t_{1:n},t}(z))\widetilde{q}(t_{1:n};t,z){\rm d}t_{1:n}{\rm d}z\\
 & = & \sum_{n=0}^{\infty}\;\;\idotsint{}_{0<t_{1}<\dots<t_{n}<t}\int_{\mathcal{Z}}\rho(z)h(\Psi_{t_{1:n},t}(z))\widetilde{q}(t_{1:n};t,z){\rm d}z{\rm d}t_{1:n}\\
 & = & \sum_{n=0}^{\infty}\;\;\idotsint{}_{0<t_{1}<\dots<t_{n}<t}\int_{\mathcal{Z}}\rho(\Psi_{t_{1:n},t}^{-1}(z'))h(z')\widetilde{q}(t_{1:n};t,\Psi_{t_{1:n},t}^{-1}(z'))\left|\det D\Psi_{t_{1:n},t}^{-1}\right|{\rm d}z'{\rm d}t_{1:n}\nonumber \\
 & = & \int_{\mathcal{Z}}h(z')\underbrace{\sum_{n=0}^{\infty}\;\;\idotsint{}_{0<t_{1}<\dots<t_{n}<t}\rho(\Psi_{t_{1:n},t}^{-1}(z'))\widetilde{q}(t_{1:n};t,\Psi_{t_{1:n},t}^{-1}(z')){\rm d}t_{1:n}}_{\mu_{t}(z')}{\rm d}z'.\label{eq:rhot}
\end{eqnarray}
}\smallskip{}
 We used the following in the above derivation successively the law
of total expectation, equation~(\ref{eq:decomposition}), equation~(\ref{eq:integral}),
Tonelli's theorem and the change of variables, $z'=\Psi_{t_{1:n},t}(z)$,
justified since for any fixed $0<t_{1}<t_{2}<\dots<t_{n}<t<t_{n+1}$,
$\Psi_{t_{1:n},t}(\cdot)$ is a bijection (being a composition of
bijections). Now the absolute value of the determinant is one since
$\Psi_{\mathbf{t},t}\left(z\right)$ is a composition of unit-Jacobian
mappings and, by using Tonelli's theorem again, we obtain that the
expression above the brace is necessarily equal to $\mu_{t}(z')$
since $h$ is arbitrary.

\textbf{Derivative}. Our goal is to show that for all $z'\in\mathcal{Z}$
\[
\frac{{\rm d}\mu_{t}(z')}{{\rm d}t}=0.
\]
Since the process is time homogeneous, once we have computed the derivative,
it is enough to show that it is equal to zero at $t=0$. To do so,
we decompose the computation according to the terms $I_{n}$ in Equation~(\ref{eq:rhot}):
\begin{eqnarray}
\mu_{t}(z') & = & \sum_{n=0}^{\infty}I_{n}(z',t)\label{eq:sum}\\
I_{n}(z',t) & = & \idotsint{}_{0<t_{1}<\dots<t_{n}<t}\rho(\Psi_{t_{1:n},t}^{-1}(z'))\widetilde{q}(t_{1:n};t,\Psi_{t_{1:n},t}^{-1}(z')){\rm d}t_{1:n}.
\end{eqnarray}
The categories of terms in Equation~(\ref{eq:sum}) to consider are:

\emph{No bounce}: $n=0$, $\Psi_{t_{1:n},t}(z)=\Phi_{t}(z)$, or,

\emph{Exactly one bounce}: $n=1$, $\Psi_{t_{1:n},t}(z)=F_{t,t_{1}}:=\Phi_{t-t_{1}}\circ C\circ\Phi_{t_{1}}(z)$
for some $t_{1}\in(0,t)$, or,

\emph{Two or more bounces}: $n\ge2$, $\Psi_{t_{1:n},t}(z)=\Psi_{t-t_{2}}\circ C\circ F_{t_{2},t_{1}}(z)$
for some $0<t_{1}<t_{2}<t$

In the following, we show that the derivative of the terms in the
third category, $n\ge2$, are all equal to zero, while the derivative
of the first two categories cancel each other.

\textbf{No bounce in the interval}. From Equation~(\ref{eq:first-case}):
\begin{eqnarray}
I_{0}(z',t) & = & \rho(\Phi_{-t}(z'))\mathrm{N}\mathrm{o}_{t}(\Phi_{-t}(z')).\label{eq:I0}
\end{eqnarray}
We now compute the derivative at zero of the above expression: 
\begin{eqnarray}
\left.\frac{{\rm d}}{{\rm d}t}I_{0}(z',t)\right|_{t=0} & =\mathrm{N}\mathrm{o}_{0} & (\Phi_{0}(z'))\left.\frac{{\rm d}\rho(\Phi_{-t}(z'))}{{\rm d}t}\right|_{t=0}+\nonumber \\
 &  & \rho(\Phi_{0}(z'))\left.\frac{{\rm d}\mathrm{N}\mathrm{o}_{t}(\Phi_{-t}(z'))}{{\rm d}t}\right|_{t=0}\label{eq:derivative}
\end{eqnarray}
The first term in the above equation can be simplified as follows:
\begin{eqnarray}
\mathrm{N}\mathrm{o}_{0}(\Phi_{0}(z'))\frac{{\rm d}\rho(\Phi_{-t}(z'))}{{\rm d}t} & = & \frac{{\rm d}\rho(\Phi_{-t}(z'))}{{\rm d}t}\\
 & = & \left\langle \frac{\partial\rho(\Phi_{-t}(z'))}{\partial\Phi_{-t}^{\mathrm{pos}}(z')},\frac{{\rm d}\Phi_{-t}^{\mathrm{pos}}(z')}{{\rm d}t}\right\rangle +\nonumber \\
 &  & \left\langle \frac{\partial\rho(\Phi_{-t}(z'))}{\partial\Phi_{-t}^{\mathrm{dir}}(z')},\underbrace{\frac{{\rm d}\Phi_{-t}^{\mathrm{dir}}(z')}{{\rm d}t}}_{=\boldsymbol{0}}\right\rangle \\
 & = & \left\langle \frac{\partial\rho(z)}{\partial x},-v'\right\rangle \\
 & = & \left\langle \frac{\partial}{\partial x}\frac{1}{Z}\exp\left(-U(x)\right)\psi\left(v\right),-v'\right\rangle \nonumber \\
 & = & \rho(\Phi_{-t}(z'))\left\langle \nabla U(x),v'\right\rangle ,
\end{eqnarray}
where $x=\Phi_{-t}^{\mathrm{pos}}(z')$. The second term in Equation~(\ref{eq:derivative})
is equal to: 
\begin{eqnarray}
\rho(\Phi_{0}(z'))\left.\frac{{\rm d}\mathrm{N}\mathrm{o}_{t}(\Phi_{-t}(z'))}{{\rm d}t}\right|_{t=0} & = & -\rho(\Phi_{0}(z'))\mathrm{N}\mathrm{o}_{0}(z')\lambda(\Phi_{0}(z'))\\
 & = & -\rho(z')\lambda(z'),
\end{eqnarray}
using Equation~(\ref{eq:density}). In summary, we have: 
\[
\left.\frac{{\rm d}}{{\rm d}t}I_{0}(z',t)\right|_{t=0}=\rho(z')\left\langle \nabla U(x'),v'\right\rangle -\rho(z')\mathrm{\lambda}(z').
\]

\textbf{Exactly one bounce in the interval}. From Equation~(\ref{eq:second-case}),
the trajectory consists in a bounce at a time $T_{1}$, occurring
with density (expressed as before as a function of the final point
$z'$) $q(t_{1};F_{t,t_{1}}^{-1}(z'))$, followed by no bounce in
the interval $(T_{1},t]$, an event of probability: 
\begin{eqnarray}
\mathrm{N}\mathrm{o}_{t-t_{1}}(C\circ\Phi_{t_{1}}(z)) & = & \mathrm{N}\mathrm{o}_{t-t_{1}}(C\circ\Phi_{t_{1}}\circ F_{t,t_{1}}^{-1}(z'))\\
 & = & \mathrm{N}\mathrm{o}_{t-t_{1}}(\Phi_{t_{1}-t}(z')),
\end{eqnarray}
where we used that $C^{-1}=C$. This yields: 
\begin{eqnarray}
I_{1}(z',t) & = & \int_{0}^{t}q(t_{1};F_{t,t_{1}}^{-1}(z'))\rho(\Psi_{t_{1:1},t}^{-1}(z'))\mathrm{N}\mathrm{o}_{t-t_{1}}(\Phi_{t_{1}-t}(z')){\rm d}t_{1}.\label{eq:I1}
\end{eqnarray}

To compute the derivative of the above equation at zero, we use again
Leibniz's rule: 
\begin{eqnarray*}
\left.\frac{{\rm d}}{{\rm d}t}I_{1}(z',t)\right|_{t=0}=\rho(C(z'))\mathrm{\lambda}(C(z')).
\end{eqnarray*}

\textbf{Two or more bounces in the interval. }For a number of bounce,
we get: {\footnotesize{}{}{} 
\begin{equation}
I_{n}(z',t)=\int_{0}^{t}\Bigg[\;\;\underbrace{\idotsint{}_{t_{2:n}:t_{1}<t_{2}\dots<t_{n}<t}\rho(\Psi_{t_{1:n},t}^{-1}(z'))\widetilde{q}(t_{1:n};t,\Psi_{t_{1:n},t}^{-1}(z')){\rm d}t_{2:n}}_{\tilde{I}(t_{1},t,z')}\Bigg]{\rm d}t_{1},
\end{equation}
} and hence, using Leibniz's rule on the integral over $t_{1}$: 
\begin{equation}
\left.\frac{{\rm d}}{{\rm d}t}I_{n}(z',t)\right|_{t=0}=\tilde{I}(0,0,z')=0.
\end{equation}

\textbf{Putting all terms together}. Putting everything together,
we obtain: 
\begin{eqnarray}
\left.\frac{{\rm d}\mu_{t}(z')}{{\rm d}t}\right|_{t=0}=\rho(z')\left\langle \nabla U(x'),v'\right\rangle \underbrace{-\rho(z')\lambda(z')+\rho(C(z'))\lambda(C(z')).}
\end{eqnarray}
From the expression of $\mathrm{\lambda}(\cdot)$, we can rewrite
the two terms above the brace as follows: 
\begin{align*}
 & -\rho(z')\mathrm{\lambda}(z')+\rho(C(z'))\lambda(C(z'))\\
= & -\rho(z')\lambda(z')+\rho(z')\lambda(C(z'))\\
= & -\rho(z')\max\{0,\left\langle \nabla U(x'),v'\right\rangle \}+\rho(z')\max\{0,\left\langle \nabla U(x'),R\left(x'\right)v'\right\rangle \}\\
= & -\rho(z')\max\{0,\left\langle \nabla U(x'),v'\right\rangle \}+\rho(z')\max\{0,\left\langle \nabla U(x'),R\left(x'\right)v'\right\rangle \}\\
= & -\rho(z')\max\{0,\left\langle \nabla U(x'),v'\right\rangle \}+\rho(z')\max\{0,-\left\langle \nabla U(x'),v'\right\rangle \}\\
= & -\rho(z')\left\langle \nabla U(x'),v'\right\rangle ,
\end{align*}
where we used that $\rho(z')=\rho(C(z'))$, $\left\langle \nabla U(x'),R\left(x'\right)v'\right\rangle =-\left\langle \nabla U(x'),v'\right\rangle $
and $-\max\{0,f\}+\max\{0,-f\}=-f$ for any function $f$. Hence we
have $\left.\frac{{\rm d}\mu_{t}(z')}{{\rm d}t}\right|_{t=0}=0$,
establishing that that the bouncy particle sampler $\lambda^{\text{ref}}=0$
admits $\rho$ as invariant distribution. The invariance for $\lambda^{\text{ref}}>0$
then follows from Lemma \ref{lem:mixMKernel} given below. 
\begin{lem}
\label{lem:mixMKernel} Suppose $P_{t}$ is a continuous time Markov
kernel and $Q$ is a discrete time Markov kernel which are both invariant
with respect to $\mu.$ Suppose we construct for $\lambda^{\text{{ref}}}>0$
a Markov process $\hat{P}_{t}$ as follows: at the jump times of an
independent PP with intensity $\lambda^{\text{{ref}}}$ we make a
transition with $Q$ and then continue according to $P_{t}$, then
$\hat{P}_{t}$ is also $\mu$-invariant. 
\end{lem}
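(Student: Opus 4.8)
The plan is to condition on the realization of the homogeneous Poisson process $\Pi^{\text{ref}}$ of intensity $\lambda^{\text{ref}}$ that drives the refreshment times, and to rewrite $\hat{P}_t$ as a mixture of alternating compositions of the kernels $P_s$ and $Q$. If $\Pi^{\text{ref}}\cap(0,t)=\{u_1<\dots<u_n\}$, then by construction the process evolves under $P_{u_1}$, applies $Q$ at time $u_1$, evolves under $P_{u_2-u_1}$, applies $Q$ at time $u_2$, and so on, finishing with $P_{t-u_n}$; these pieces are independent because the Poisson clock, the intermediate $P$-dynamics over disjoint time intervals, and the $Q$-updates are all independent by the defining construction. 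Using that, conditionally on $|\Pi^{\text{ref}}\cap(0,t)|=n$ the points are distributed as the order statistics of $n$ i.i.d.\ uniforms on $(0,t)$, and that $\P(|\Pi^{\text{ref}}\cap(0,t)|=n)=e^{-\lambda^{\text{ref}}t}(\lambda^{\text{ref}}t)^n/n!$, one obtains (with $u_0=0$, $u_{n+1}=t$, the $n=0$ summand equal to $e^{-\lambda^{\text{ref}}t}P_t$, and the empty iterated integral equal to $1$)
\begin{equation}
\hat{P}_t=\sum_{n=0}^{\infty}e^{-\lambda^{\text{ref}}t}(\lambda^{\text{ref}})^n\idotsint{}_{0<u_1<\dots<u_n<t}P_{u_1}\,Q\,P_{u_2-u_1}\,Q\cdots Q\,P_{t-u_n}\,\ud u_1\cdots\ud u_n .
\end{equation}

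Next I would note that every factor inside the integral preserves $\mu$: by hypothesis $\mu P_s=\mu$ for all $s\ge0$ and $\mu Q=\mu$, and a finite composition of $\mu$-invariant Markov kernels is $\mu$-invariant since $\mu(K_1K_2)=(\mu K_1)K_2=\mu K_2=\mu$. Hence, for every $n$ and every fixed configuration $0<u_1<\dots<u_n<t$ we have $\mu\bigl(P_{u_1}QP_{u_2-u_1}Q\cdots QP_{t-u_n}\bigr)=\mu$. Integrating this identity against the (nonnegative) mixing weights and interchanging $\mu$ with the sum over $n$ and with the integral over the jump times --- legitimate by Tonelli's theorem, since all the kernels are sub-probability kernels and all integrands are nonnegative --- gives
\begin{equation}
\mu\hat{P}_t=\sum_{n=0}^{\infty}e^{-\lambda^{\text{ref}}t}(\lambda^{\text{ref}})^n\idotsint{}_{0<u_1<\dots<u_n<t}\mu\,\ud u_1\cdots\ud u_n=\mu\sum_{n=0}^{\infty}e^{-\lambda^{\text{ref}}t}\frac{(\lambda^{\text{ref}}t)^n}{n!}=\mu ,
\end{equation}
where I used $\vol\{(u_1,\dots,u_n):0<u_1<\dots<u_n<t\}=t^n/n!$ together with $\sum_{n\ge0}e^{-\lambda^{\text{ref}}t}(\lambda^{\text{ref}}t)^n/n!=1$. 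This establishes $\mu\hat{P}_t=\mu$ for all $t\ge0$, as claimed; the semigroup property of $(\hat{P}_t)_{t\ge0}$, if it is wanted, follows separately from the memorylessness of the Poisson process and plays no role here.

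The only genuinely delicate part is the first step: turning the informal description of the construction into the explicit mixture representation, i.e.\ justifying the conditioning on $\Pi^{\text{ref}}\cap(0,t)$ and verifying the form of the iterated integral and of the weights. Once that is in place, invariance is a one-line consequence of the invariance of the factors plus Tonelli. If one wishes to avoid the simplex-volume bookkeeping, the same argument can be run abstractly: start from $z(0)\sim\mu$, condition on the $\sigma$-algebra $\mathcal{G}$ generated by $\Pi^{\text{ref}}\cap(0,t)$; pathwise the conditional law of $z(t)$ given $\mathcal{G}$ is a composition of $\mu$-invariant kernels applied to a $\mu$-distributed initial state, hence equals $\mu$ almost surely, and taking expectation over $\Pi^{\text{ref}}$ yields $\mathrm{Law}(z(t))=\mu$.
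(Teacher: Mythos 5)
Your proposal is correct and follows essentially the same route as the paper's proof: both condition on the realization of the refreshment Poisson process, write $\hat{P}_t$ as a Poisson mixture of alternating compositions $P\,Q\,P\cdots Q\,P$ over the ordered jump times, apply $\mu$-invariance of each factor, and observe that the mixture weights integrate to one. Your version merely adds explicit justification (order statistics of uniforms, Tonelli for the interchange) that the paper leaves implicit.
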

\begin{proof}
The transition kernel is given by 
\begin{eqnarray*}
\hat{P}_{t} & = & e^{-\lambda t}P_{t}+\int_{0}^{t}{\rm d}t_{1}\lambda e^{\lambda t_{1}}e^{-\lambda(t-t_{1})}P_{t-t_{1}}QP_{t_{1}}\\
 &  & +\int_{0}^{t}{\rm d}t_{1}\int_{t_{1}}^{t_{2}}{\rm d}t_{2}\lambda^{2}e^{\lambda t_{1}}e^{\lambda(t_{2}-t_{1})}e^{-\lambda(t-t_{2})}P_{t-t_{2}}QP_{t_{2}-t_{1}}QP_{t_{1}}+\dots
\end{eqnarray*}
Therefore 
\begin{eqnarray*}
\mu\hat{P}_{t} & = & \mu\left(e^{-\lambda t}+\lambda te^{-\lambda t}+\frac{\left(\lambda t\right)^{2}}{2}e^{-\lambda t}\dots\right)\\
 & = & \mu.
\end{eqnarray*}
Hence $\hat{P}_{t}$ is $\mu$-invariant. 
\end{proof}

\section{Invariance of the local sampler\label{subsec:Local-SamplerInvariance}}

The generator of the local BPS is given by 
\begin{eqnarray}
\mathcal{L}h(z) & = & \left\langle \nabla_{x}h\left(x,v\right),v\right\rangle \label{eq:localBPSgenerator}\\
 &  & +\sum_{f\in F}\lambda_{f}\left(x,v\right)\left\{ h(x,R_{f}\left(x\right)v)-h(x,v)\right\} \nonumber \\
 &  & +\lambda^{\mathrm{ref}}\int\left(h(x,v')-h(x,v)\right)\psi\left(\ud v'\right).\nonumber 
\end{eqnarray}
The proof of invariance of the local BPS is very similar to the proof
of Propostion 1. We have 
\begin{eqnarray}
\mathcal{\int L}h(z)\rho\left(z\right){\rm d}z & = & \int\int\left\langle \nabla_{x}h\left(x,v\right),v\right\rangle \rho\left(z\right){\rm d}z\label{eq:local1}\\
 &  & +\int\int\sum_{f\in F}\lambda_{f}\left(x,v\right)\left\{ h(x,R_{f}\left(x\right)v)-h(x,v)\right\} ]\rho\left(z\right){\rm d}z\label{eq:local2}\\
 &  & +\lambda^{\mathrm{ref}}\int\int\int\left(h(x,v')-h(x,v)\right)\psi\left(\ud v'\right)\rho\left(z\right){\rm d}z\label{eq:local3}
\end{eqnarray}
where the term \eqref{eq:local3} is straightforwardly equal to 0
while, by integration by parts, the term \eqref{eq:local1} satisfies
\begin{equation}
\int\int\left\langle \nabla_{x}h\left(x,v\right),v\right\rangle \rho\left(z\right){\rm d}z=\int\int\left\langle \nabla U\left(x\right),v\right\rangle h\left(x,v\right)\rho\left(z\right){\rm d}z.\label{eq:local1change}
\end{equation}
as $h$ is bounded. Now a change-of-variables shows that for any $f\in F$
\begin{equation}
\int\int\lambda_{f}\left(x,v\right)h(x,R_{f}\left(x\right)v)\rho\left(z\right){\rm d}z=\int\int\lambda\left(x,R_{f}\left(x\right)v\right)h(x,v)\rho\left(z\right){\rm d}z\label{eq:local2changeterm}
\end{equation}

as $R_{f}^{-1}\left(x\right)v)=R\left(x\right)v$ and $\left\Vert R_{f}\left(x\right)v\right\Vert =\left\Vert v\right\Vert $
implies $\psi\left(R_{f}\left(x\right)v\right)=\psi\left(v\right)$.
So the term \eqref{eq:local2} satisfies 
\begin{eqnarray}
 &  & \int\int\sum_{f\in F}\lambda_{f}\left(x,v\right)\left\{ h(x,R_{f}\left(x\right)v)-h(z)\right\} ]\rho\left(z\right){\rm d}z\nonumber \\
 & = & \int\int\sum_{f\in F}[\lambda\left(x,R_{f}\left(x\right)v\right)-\lambda\left(x,v\right)]h(x,v)\rho\left(z\right){\rm d}z\nonumber \\
 & = & \int\int\sum_{f\in F}[\max\{0,\left\langle \nabla U_{f}(x),R\left(x\right)v\right\rangle \}-\max\{0,\left\langle \nabla U_{f}(x),v\right\rangle \}]h(x,v)\rho\left(z\right){\rm d}z\nonumber \\
 & = & \int\int\sum_{f\in F}[\max\{0,-\left\langle \nabla U_{f}(x),v\right\rangle \}-\max\{0,\left\langle \nabla U_{f}(x),v\right\rangle \}]h(x,v)\rho\left(z\right){\rm d}z\nonumber \\
 & = & -\int\int\sum_{f\in F}[\left\langle \nabla U_{f}\left(x\right),v\right\rangle ]h(x,v)\rho\left(z\right){\rm d}z\nonumber \\
 & = & -\int\int\left\langle \nabla U\left(x\right),v\right\rangle ]h(x,v)\rho\left(z\right){\rm d}z,\label{eq:local2change}
\end{eqnarray}

where we have used $\left\langle \nabla U_{f}(x),R_{f}\left(x\right)v\right\rangle =-\left\langle \nabla U_{f}(x),v\right\rangle $
and $\max\{0,-f\}-\max\{0,f\}=-f$ for any $f$. Hence, summing \eqref{eq:local1change}-\eqref{eq:local2change}-\eqref{eq:local3},
we obtain$\mathcal{\int L}h(z)\rho\left(z\right){\rm d}z=0$ and the
result follows by \citep[Proposition 34.7]{davis1993markov}.

\section{Calculations in the isotropic normal case}

As we do not use refreshment, it follows from the definition of the
collision operator that 
\begin{eqnarray*}
\left\langle x^{(i)},v^{(i)}\right\rangle  & = & \left\langle x^{(i)},v^{(i-1)}-\frac{2\left\langle x^{(i)},v^{(i-1)}\right\rangle }{\left\Vert x^{(i)}\right\Vert ^{2}}x^{(i)}\right\rangle \\
 & = & -\left\langle x^{(i)},v^{(i-1)}\right\rangle =-\left\langle x^{(i-1)},v^{(i-1)}\right\rangle -\tau_{i}\\
 & = & \begin{cases}
-\sqrt{-\log V_{i}} & \text{if }\left\langle x^{(i-1)},v^{(i-1)}\right\rangle \leq0\\
-\sqrt{\left\langle x^{(i-1)},v^{(i-1)}\right\rangle ^{2}-\log V_{i}} & \text{otherwise}
\end{cases},
\end{eqnarray*}
and therefore

\[
\left\Vert x^{(i)}\right\Vert ^{2}=\begin{cases}
\left\Vert x^{(i-1)}\right\Vert ^{2}-\left\langle x^{(i-1)},v^{(i-1)}\right\rangle ^{2}-\log V_{i} & \text{ if }\left\langle x^{(i-1)},v^{(i-1)}\right\rangle \leq0\\
\left\Vert x^{(i-1)}\right\Vert ^{2}-\log V_{i} & \text{otherwise.}
\end{cases}.
\]

It follows that $\left\langle x^{(j)},v^{(j)}\right\rangle \le0$
for $j>0$ if $\left\langle x^{(0)},v^{(0)}\right\rangle \le0$ so,
in this case, we have

\begin{align*}
\left\Vert x^{(i)}\right\Vert ^{2} & =\left\Vert x^{(i-1)}\right\Vert ^{2}-\left\langle x^{(i-1)},v^{(i-1)}\right\rangle ^{2}-\log V_{i}\\
 & =\left\Vert x^{(i-1)}\right\Vert ^{2}+\log V_{i-1}-\log V_{i}\\
 & =\left\Vert x^{(i-2)}\right\Vert ^{2}-\left\langle x^{(i-1)},v^{(i-1)}\right\rangle ^{2}-\log V_{i-1}+\log V_{i-1}-\log V_{i}\\
\vdots & \vdots\\
 & =\left\Vert x^{(1)}\right\Vert ^{2}-\left\langle x^{(1)},v^{(1)}\right\rangle ^{2}-\log V_{i}
\end{align*}

In particular for $x^{(0)}=e_{1}$ and $v^{(0)}=e_{2}$ with $e_{i}$
being elements of standard basis of $\mathbb{R}^{d}$, the norm of
the position at all points along the trajectory can never be smaller
than 1.

\section{Supplementary information on the evolutionary parameters inference
experiments\label{Appendix:Evolutionaryparameters}}

\subsection{Model}

We consider an over-parameterized generalized time reversible rate
matrix \cite{Tavare1986GTR} with $d=10$ corresponding to 4 unnormalized
stationary parameters $x_{1},\dots,x_{4}$, and 6 unconstrained substitution
parameters $x_{\{i,j\}}$, which are indexed by sets of size 2, i.e.
where $i,j\in\left\{ 1,2,3,4\right\} ,\thinspace i\neq j$. Off-diagonal
entries of $Q$ are obtained via $q_{i,j}=\pi_{j}\exp\left(x_{\{i,j\}}\right)$,
where 
\[
\pi_{j}=\frac{\exp\left(x_{j}\right)}{\sum_{k=1}^{4}\exp\left(x_{k}\right)}.
\]

We assign independent standard Gaussian priors on the parameters $x_{i}.$
We assume that a matrix of aligned nucleotides is provided, where
rows are species and columns contains nucleotides believed to come
from a shared ancestral nucleotide. Given $x=\left(x_{1},\dots,x_{4},x_{\{1,2\}},\dots,x_{\{3,4\}}\right),$
and hence $Q$, the likelihood is a product of conditionally independent
continuous time Markov chains over $\{$A, C, G, T$\}$, with ``time''
replaced by a branching process specified by the phylogenetic tree's
topology and branch lengths. The parameter $x$ is unidentifiable,
and while this can be addressed by bounded or curved parameterizations,
the over-parameterization provides an interesting challenge for sampling
methods, which need to cope with the strong induced correlations.

\subsection{Baseline}

We compare the BPS against a state-of-the-art HMC sampler \cite{Wang2013AHMC}
that uses Bayesian optimization to adapt the the leap-frog stepsize
$\epsilon$ and trajectory length $L$ of HMC. This sampler was shown
in \cite{Zhao2015CTMC} to be comparable or better to other state-of-the-art
HMC methods such as NUTS. It also has the advantage of having efficient
implementations in several languages. We use the author's Java implementation
to compare to our Java implementation of the BPS. Both methods view
the objective function as a black box (concretely, a Java interface
supporting pointwise evaluation and gradient calculation). In all
experiments, we initialize at the mode and use a burn-in of 100 iterations
and no thinning. The HMC auto-tuner yielded $\epsilon=0.39$ and $L=100$.
For our method, we use the global sampler and the global refreshment
scheme.

\subsection{Additional experimental results}

\begin{figure}
\centering\includegraphics[width=2.3in]{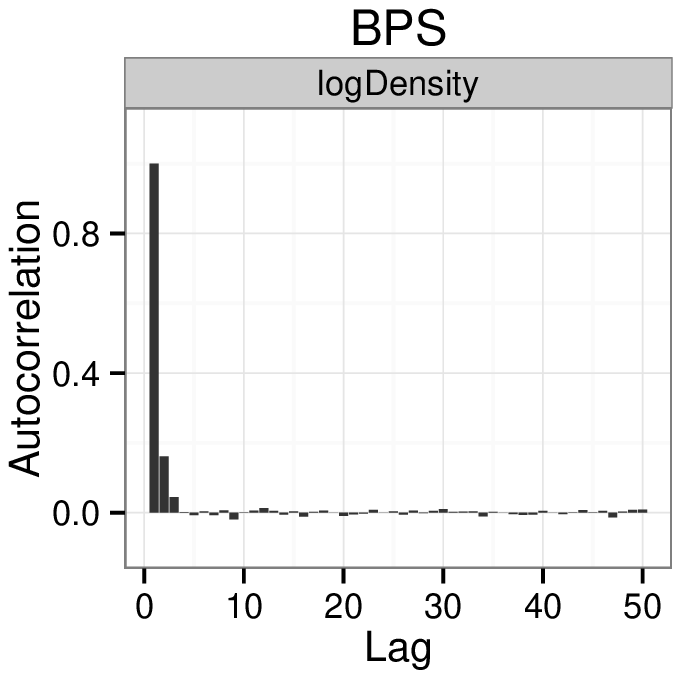}\includegraphics[width=2.3in]{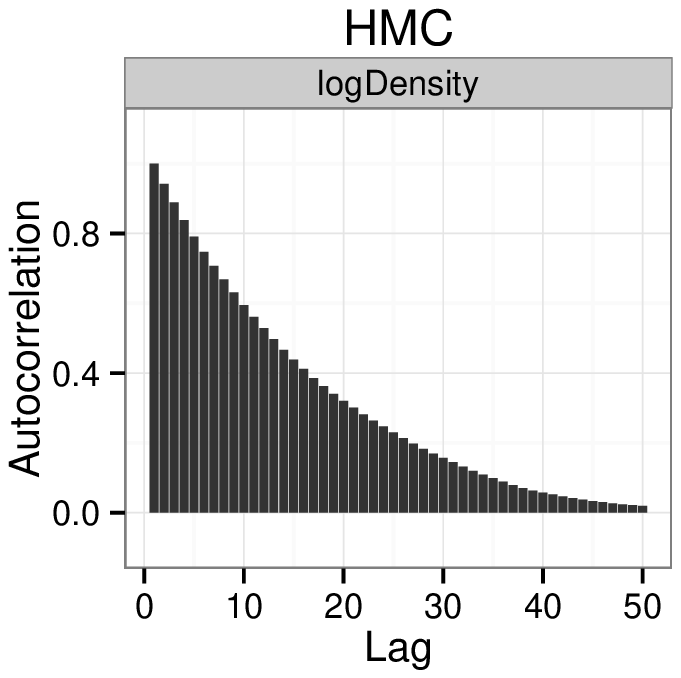}

\caption{\label{fig:acf}Estimate of the ACF of the log-likelihood statistic
for BPS (left) and HMC (right). A similar behavior is observed for
the ACF of the other statistics.}
\end{figure}

To ensure that BPS outperforming HMC does not come from a faulty auto-tuning
of HMC parameters, we look at the ESS/s for the log-likelihood statistic
when varying the stepsize $\epsilon$. The results in Figure \ref{fig:sensitivity}(right)
show that the value selected by the auto-tuner is indeed reasonable,
close to the value 0.02 found by brute force maximization. We repeat
the experiments with $\epsilon=0.02$ and obtain the same conclusions.
This shows that the problem is genuinely challenging for HMC.

\begin{figure}
\begin{centering}
\includegraphics[width=3in]{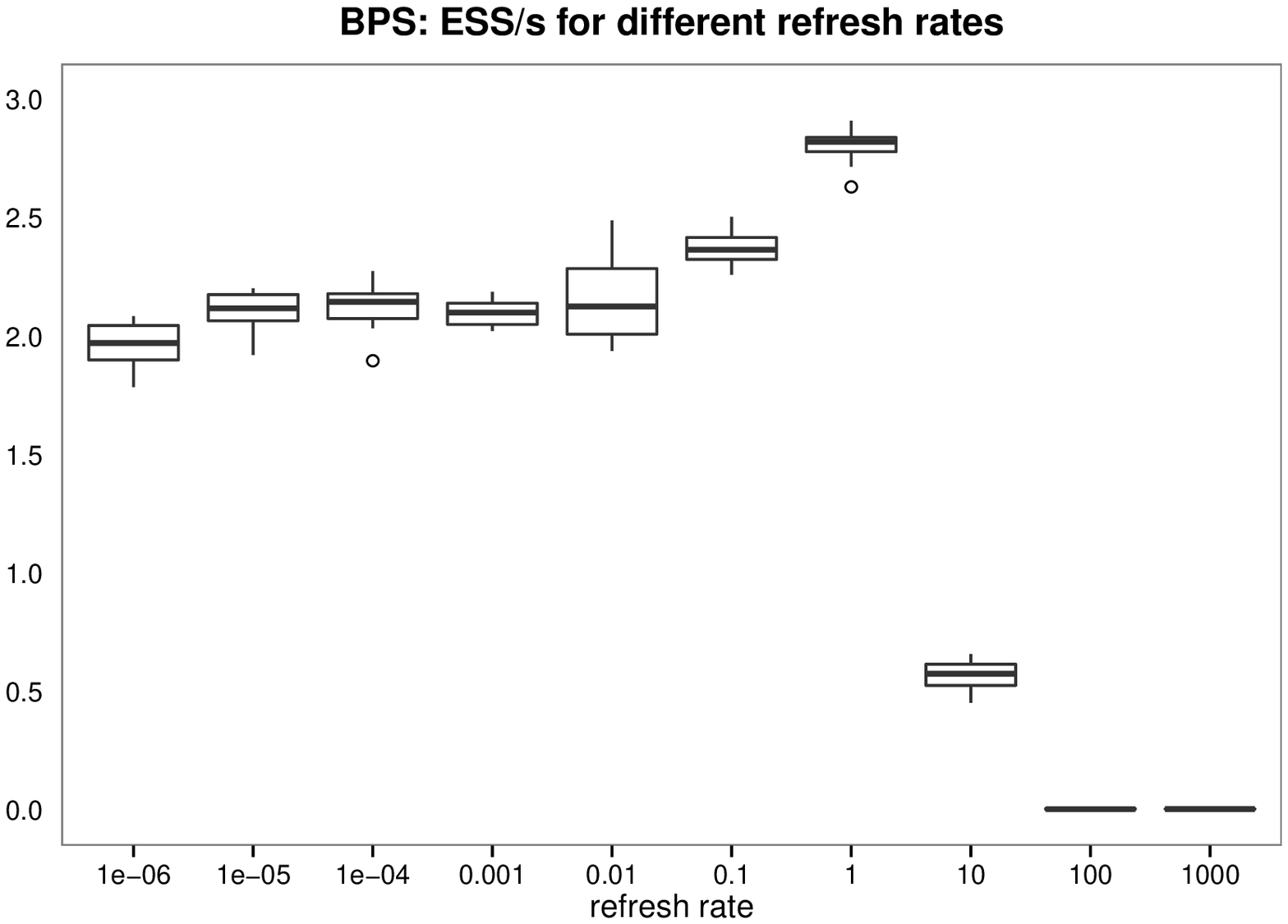}\includegraphics[width=3in]{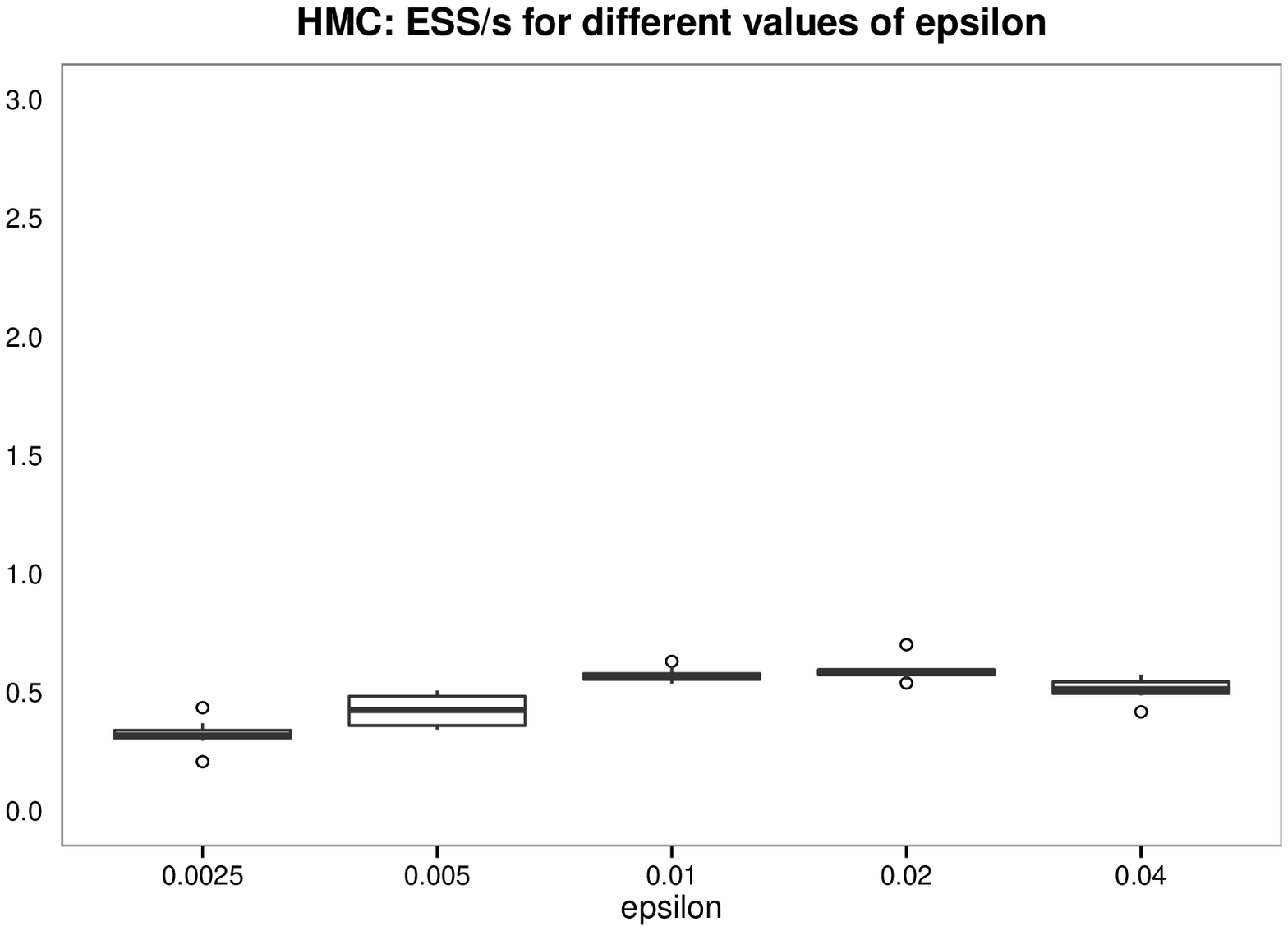} 
\par\end{centering}
\caption{\label{fig:sensitivity}Left: sensitivity of BPS's ESS/s on the log
likelihood statistic. Right: sensitivity of HMC's ESS/s on the log
likelihood statistic. Each setting is replicated 10 times with different
algorithmic random seeds.}
\end{figure}

The BPS algorithm also exhibits sensitivity to $\lambda^{\mathrm{ref}}$.
We analyze this dependency in Figure \ref{fig:sensitivity}(left).
We observe an asymmetric dependency, where values higher than 1 result
in a significant drop in performance, as they bring the sampler closer
to random walk behavior. Values one or more orders of magnitudes lower
than 1 have a lower detrimental effect. However for a range of values
of $\lambda^{\mathrm{ref}}$ covering six orders of magnitudes, BPS
outperforms HMC at its optimal parameters.
\end{document}